\documentclass[prl,twocolumn,aps,floatfix,amsmath,notitlepage]{revtex4-1}

\usepackage{amssymb}
\usepackage{graphicx}
\usepackage{graphics}
\usepackage{amsmath}
\usepackage{amsthm}
\usepackage{color}
\usepackage{dsfont}
\usepackage{mathrsfs}

\usepackage{mathtools}
\usepackage[unicode=true,pdfusetitle,bookmarks=true,bookmarksnumbered=false,bookmarksopen=false,breaklinks=false,pdfborder={0 0 0},pdfborderstyle={},backref=false,colorlinks=true]
{hyperref}





\def\>{\rangle}
\def\<{\langle}

\def\id{\mathsf{id}}

\def\mE{\mathcal{E}}

\def\mF{\mathcal{F}}
\def\mN{\mathcal{N}}
\def\mC{\mathcal{C}}

\def\mV{\mathcal{V}}

\newcommand{\supp}{\operatorname{supp}}

\renewcommand{\qedsymbol}{\nobreak \ifvmode \relax \else
	\ifdim \lastskip<1.5em \hskip-\lastskip \hskip1.5em plus0em
	minus0.5em \fi \nobreak \vrule height0.75em width0.5em
	depth0.25em\fi}

\renewcommand{\geq}{\geqslant}
\renewcommand{\leq}{\leqslant}

\def\mb{\mathfrak{B}}

\def\md{\mathfrak{D}}

\def\mr{\mathfrak{R}}

\def\D{\mathbf{D}}

\newcommand{\locc}{{\rm LOCC}}

\def\pure{{\rm PURE}}

\def\ml{\mathfrak{L}}

\newtheorem{theorem}{Theorem}
\newtheorem*{theorem*}{Theorem}
\newtheorem{corollary}{Corollary}
\newtheorem*{corollary*}{Corollary}
\newtheorem{lemma}{Lemma}
\newtheorem*{lemma*}{Lemma}

\theoremstyle{definition}
\newtheorem{definition}{Definition}
\newtheorem*{definition*}{Definition}

\theoremstyle{remark}
\newtheorem*{remark}{Remark}

\newcommand{\bea}{\begin{eqnarray}}
\newcommand{\eea}{\end{eqnarray}}
\newcommand{\be}{\begin{equation}}
\newcommand{\ee}{\end{equation}}
\newcommand{\ba}{\begin{equation}\begin{aligned}}
\newcommand{\ea}{\end{aligned}\end{equation}}

\newcommand{\bsp}{\begin{split}}
\newcommand{\esp}{\end{split}}

\newtheorem*{example}{Example}

\def\be{\begin{equation}}
\def\ee{\end{equation}}

\newcommand{\rank}{{\rm Rank}}
\newcommand{\cptp}{{\rm CPTP}}
\newcommand{\cp}{{\rm CP}}
\newcommand{\reg}{{\rm reg}}

\newcommand{\lr}{\rangle\langle}
\newcommand{\la}{\langle}
\newcommand{\ra}{\rangle}
\newcommand{\tr}{{\rm Tr}}
\newcommand{\pos}{{\rm Pos}}

\newcommand{\ve}[1]{{\left\vert\kern-0.25ex\left\vert\kern-0.25ex\left\vert #1 
    \right\vert\kern-0.25ex\right\vert\kern-0.25ex\right\vert}}

\newcommand{\mbb}[1]{\mathbb{#1}}



\newcommand{\eqdef}{\coloneqq}

\newcommand{\bb}{\begin{bmatrix}}
\newcommand{\eb}{\end{bmatrix}}

\def\trho{\tilde{\rho}}
\def\tsigma{\tilde{\sigma}}
\def\tomega{\tilde{\omega}}

\def\r{\mathbf{r}}

\def\p{\mathbf{p}}
\def\q{\mathbf{q}}

\def\0{\mathbf{0}}

\def\umd{\underline{\mathfrak{D}}}

\def\bD{\overline{D}}

\def\uM{\underbar{M}}
\def\bM{\overline{M}}
\def\bbd{\overline{\mathbf{D}}}
\def\ubd{\underline{\mathbf{D}}}

\def\mf{\mathfrak{F}}

\usepackage[most]{tcolorbox}
\newtcolorbox{myt}[2][]{%
  attach boxed title to top center
               = {yshift=-4pt},
  colback      = blue!5!white,
  colframe     = blue!75!black,
  halign       = flush left,
  fonttitle    = \bfseries\sffamily,
  colbacktitle = blue!65!black,
  title        = #2,#1,
  enhanced,
}
\newtcolorbox{myd}[2][]{%
  attach boxed title to top center
               = {yshift=-4pt},
  colback      = violet!5!white,
  colframe     = violet!75!black,
  halign       = flush left,
  fonttitle    = \bfseries\sffamily,
  colbacktitle = violet!65!black,
  title        = #2,#1,
  enhanced,
}
\newtcolorbox{mye}[2][]{%
  attach boxed title to top center
               = {yshift=-4pt},
  colback      = purple!5!white,
  colframe     = purple!75!black,
  halign       = flush left,
  fonttitle    = \bfseries\sffamily,
  colbacktitle = purple!65!black,
  title        = #2,#1,
  enhanced,
}

\newtcolorbox{myg}[2][]{%
  attach boxed title to top center
               = {yshift=-4pt},
  colback      = green!5!white,
  colframe     = green!75!black,
  halign       = flush left,
  fonttitle    = \bfseries\sffamily,
  colbacktitle = green!65!black,
  title        = #2,#1,
  enhanced,
}


\newcommand{\eps}{\varepsilon}

\newcommand{\MTadd}[1]{\textcolor{purple}{#1}}

\begin{document}
	
	
	\title{Optimal Extensions of Resource Measures and their Applications}

\author{Gilad Gour}\email{gour@ucalgary.ca}
\affiliation{
Department of Mathematics and Statistics, Institute for Quantum Science and Technology,
University of Calgary, AB, Canada T2N 1N4} 

\author{Marco Tomamichel}
\affiliation{Department of Electrical and Computer Engineering and Centre for Quantum Technologies, National University of Singapore, Singapore}

	\date{\today}
	
	\begin{abstract}
	We develop a framework to extend resource measures from one domain to a larger one. We find that all extensions of resource measures are bounded between two quantities that we call the minimal and maximal extensions. We discuss various applications of our framework. We show that any relative entropy (i.e.\ an additive function on pairs of quantum states that satisfies the data processing inequality) must be bounded by the min and max relative entropies. We prove that the generalized trace distance, the generalized fidelity, and the purified distance are optimal extensions. And in entanglement theory we introduce a new technique to extend pure state entanglement measures to mixed bipartite states.
	\end{abstract}

	\maketitle

\noindent{\it Introduction.} Quite often, significant progress in physics is made by extending certain laws of physics from one domain of applicability to a larger one. A popular example is quantum mechanics itself, which can be viewed as the extension of classical mechanics to the microscopic world (i.e. the quantum domain). General relativity can be viewed in a similar way  as an extension of special relativity to accelerated gravitational systems~\cite{Misner1973}.  Other, more specific examples, include the extension of classical thermodynamics to the quantum domain~\cite{Brandao-2013a}, the extension of pure-state entanglement to mixed state entanglement~\cite{Horodecki-2009a,Plenio-2007a}, the extension of classical Shannon information theory to quantum information~\cite{wilde2013}, and so on.

In each physical theory there are certain quantities that play a major role within the domain of applicability of the theory. Examples include the free energy in thermodynamics, the kinetic energy and work in classical mechanics, entropy, divergences, and channel capacities in information theory, etc. When a physical theory is extended to a larger domain, the relevant quantities that appear in the theory have to be adjusted as well in order to be suitable for the new domain. This raises two compelling questions: (1) ``Under what conditions is the extension of a relevant quantity unique?", and (2) ``Is there a systematic way to construct such extensions that can be applied to many physical theories?"

A suitable framework to study such fundamental questions is the framework of resource theories~\cite{Horodecki-2013a,Coecke-2016,Gour2019}. In this framework, the relevant quantities of the physical theory are described in terms of resource measures (or resource monotones). For example, the free energy in thermodynamics can be viewed as a resource measure since free energy can be used to extract work from a thermodynamical system~\cite{Brandao-2013a}. Similarly, entanglement can be viewed as a resource used for quantum teleportation~\cite{Bennett-1993a}. 
In recent years, resource theories have been developed tremendously, and besides entanglement and quantum thermodynamics, many new resource theories have been identified including the resource theories of asymmetry~\cite{Gour-2008a,Gour-2009a,Marvian-2014a}, Bell non-locality~\cite{Brunner-2014a,deVicente-2014a,Wolfe2019}, coherence~\cite{Streltsov-2016a}, non-Gaussianity~\cite{Ryuji2018,Genoni2010,Marian2013,Albarelli-2018a}, magic~\cite{Howard-2014a,Veitch-2014a}, contextually~\cite{Grudka-2014a}, and many more (see~\cite{Gour2019} for a recent review on the subject). 

In this paper we develop a systematic scheme to extend resource measures from one domain to a larger one~\footnote{After the completion of this work, we became aware that recently, in arXiv:1912.07085, similar extensions were also considered, but with very different applications than the ones we consider in this paper.}. 
To keep the formalism in its most generality, we introduce the concept of \emph{generalized resource theories}.  Our formalism is premised on the fact that resource measures are non-increasing under the set of free operations. We construct two extensions for a given resource measure  and show that they are optimal in the sense that all other extensions of the resource measure in question must be between these two extensions (the minimal and maximal one). We then apply the formalism to quantum divergences, and show that all additive quantum divergences (i.e. relative entropies) must lie between the min and max relative entropies.
Next, we apply the formalism to extend functions from normalized states to sub-normalised states and use it to prove that the generalized fidelity, trace distance, and purified distance are optimal distance measures. Finally, we apply the formalism to entanglement theory and introduce a new way to extend entanglement measures from pure to mixed bipartite states. As an example, we show that the Schmidt number of a mixed bipartite state, as defined in~\cite{Terhal-2000a}, is an optimal (in fact maximal) extension of the Schmidt number of pure bipartite states.


\noindent{\it Notations.} We denote both physical systems and their corresponding Hilbert spaces by $A,B,C$ etc. Classical systems will be denoted by $X$, $Y$, and $Z$.
We only consider finite dimensional systems and denote their dimensions by $|A|$, $|B|$, etc.
The algebra of all $|A|\times|A|$ complex matrices is denoted by $\ml(A)$, and the set of all density matrices in $\ml(A)$ is denoted by $\md(A)$.  For $\rho,\sigma\in\md(A)$ we write $\rho\ll\sigma$ if the support of $\rho$ is a subset of the support of $\sigma$.The set of completely positive maps from $\ml(A)$ to $\ml(B)$ is denoted by $\cp(A\to B)$, and the set of quantum channels by $\cptp(A\to B)$. The one-dimensional trivial physical system is denoted by $1$ (whose associated Hilbert space is $\mbb{C}$). We make the identification $\md(A)=\cptp(1\to A)$.

\noindent{\it General framework.} We start by extending the notion of a quantum resource theory (QRT) to a more general setting in which quantum states and quantum channels are replaced with more abstract objects. Our goal in this generalization is to increase the applicability of our framework.
Let $\mr$ be a mapping that takes any physical system $A$ (e.g.\ atoms, molecules, many body systems, etc) to a set of objects 
$\mr(A)$ (e.g.\ matrices in $\ml(A)$, pair of density matrices, linear maps, etc). Here $\mr(A)$ is replacing the set $\md(A)$ of density matrices which are used in QRT. Note also that $\mr(A)$ can be a subset of $\md(A)$ as will be the case in some of the applications below. 

We also let $\mf$ be a mapping that takes any pairs of physical systems $A$ and $B$, to a set of transformations, $\mf(A\to B)$, from $\mr(A)$ to $\mr(B)$.  In QRTs $\mf(A\to B)$ is the set of free operations which is a subset of quantum channels, but here the maps in $\mf(A\to B)$ do not even need to be linear.
We call the pair $(\mr,\mf)$  a \emph{generalized resource theory} (GRT) if the following two conditions hold:
\begin{enumerate}
\item Doing nothing is free; for any system $A$, $\mf(A\to A)$ contains the identity map.
\item $\mf$ is closed under combination of maps.
\end{enumerate}
Observe that in the definition of a QRT~\cite{Gour2019},  we also identify the set $\mf(A)\eqdef\mf(1\to A)$  as a subset of ``free" objects. Resources are objects in $\mr(A)$ that are not in $\mf(A)$. For our purposes, we will not need to make this identification here. Finally, note that a GRT is a QRT if for any two systems $A$ and $B$, $\mf(A\to B)\subseteq\cptp(A\to B)$ and $\mr(A)=\md(A)$.

In general we do not require $\mf(A\to B)$ to contain only linear transformations. 
As an example of a physical GRT that is non-linear, consider quantum mechanical evolution with post-selection on measurement outcomes.



For a given GRT $(\mr,\mf)$, a function $M:\bigcup_A\mr(A)\to\mbb{R}_+$ is called a resource measure if for all $\rho\in\mr(A)$ and all $\mE\in\mf(A\to B)$
\be\label{monotone}
M\big(\mE(\rho)\big)\leq M(\rho)\;.
\ee
Here we also define a resource measure on smaller subsets. Let $\mr_1$ be a function that maps any physical system $A$ to a subset of objects $\mr_1(A)\subseteq\mr(A)$. Then, we say that 
a function $M_1:\bigcup_A\mr_1(A)\to\mbb{R}$ is an $\mr_1$-resource measure if for all $\rho\in\mr_1(A)$ and all
 $\mE\in\mf(A\to B)$ such that $\mE(\rho)\in\mr_1(B)$, $M_1\big(\mE(\rho)\big)\leq M_1(\rho)$.

The notion of $\mr_1$-resource measure is in fact well known. For example, in entanglement theory, several entanglement measures such as the concurrence were first defined on pure states~\cite{Wootters1998}. Therefore, if $\mr_1(AB)$ is taken to be the set of all pure bipartite states, then in this example $\mr_1$-resource measures are entanglement measures on pure states.

\begin{definition}\label{defopt}
Let $(\mr,\mf)$ be a GRT, $\mr_1$ as above, and $M_1:\bigcup_A\mr_1(A)\to\mbb{R}_+$ be an $\mr_1$-resource measure. The optimal extensions $\bM_1,\uM_1:\bigcup_A\mr(A)\to\mbb{R}_+$, are defined for any $\rho\in\mr(A)$ as follows:
\begin{enumerate}
\item The minimal extension
\be
\uM_1(\rho)\eqdef\sup M_1\big(\mE(\rho)\big)\;,
\ee
where the supremum is over all systems $R$ and all free maps $\mE\in\mf(A\to R)$ that satisfy $\mE(\rho)\in\mr_1(R)$.
If there is no such $\mE$, then $\uM_1(\rho)\eqdef 0$.
\item The maximal extension
\be
\bM_1(\rho)\eqdef\inf M_1(\sigma)\;,
\ee
where the infimum is over all systems $R$ and all 
$\sigma\in\mr_1(R)$ for which there exists $\mE\in\mf(R\to A)$ that satisfies
$\rho=\mE(\sigma)$.
If there is no such $\sigma$, then $\bM_1(\rho^A)\eqdef+\infty$.
\end{enumerate}
\end{definition}

Roughly speaking, $\overline{M}_1$ can be interpreted as the $\mr_1$-resource cost of $\rho$ as measured by $M_1$, and $\underline{M}_1$ can be interpreted as the distillable $\mr_1$-resource as measured by $M_1$ (see \MTadd{sketch} in~FIG.~\ref{definition}).

\begin{figure}[h]\centering
    \includegraphics[width=0.47\textwidth]{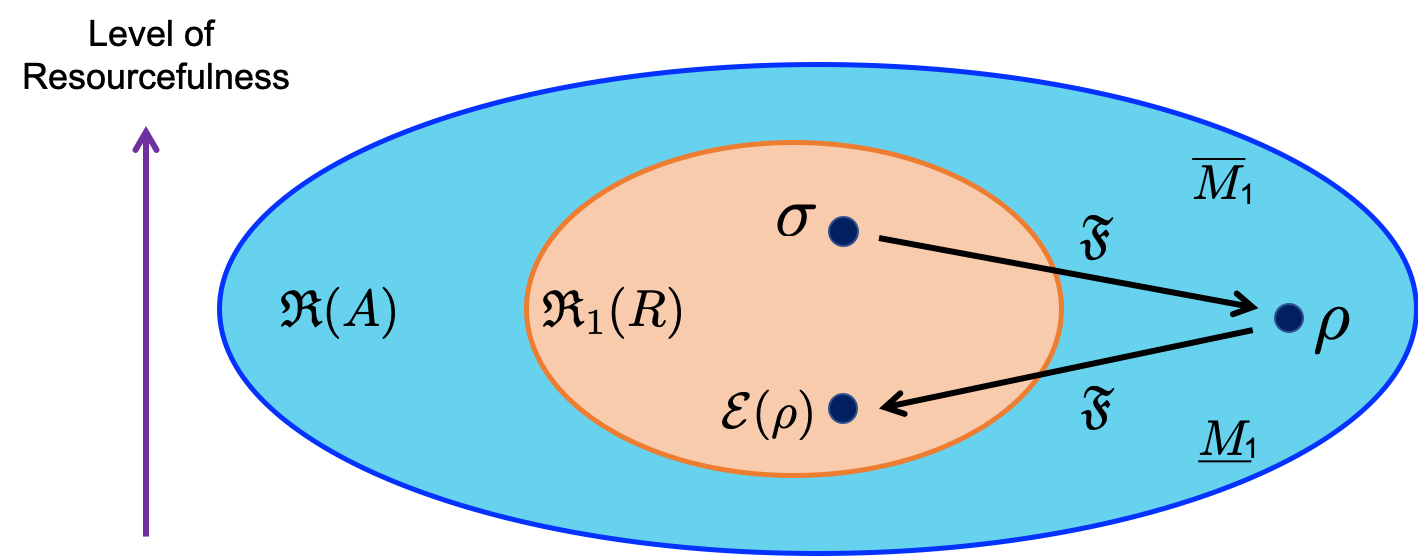}
  \caption{\linespread{1}\selectfont{\small Minimal and maximal extensions of $M_1$. The function $\overline{M}_1(\rho)$ is defined as the minimization of $M_1(\sigma)$ over all $\sigma\in\mr_1(R)$ that can be converted by free operations to $\rho$.
  The function $\uM_1(\rho)$ is defined as the maximization of $M_1(\sigma)$ over all $\sigma=\mE(\rho)$ that can be reached from $\rho$ by free operations.}}
  \label{definition}
\end{figure}

\begin{theorem}\label{properties}
Let $(\mr,\mf)$, $\mr_1$ and $M_1$ as in Definition~\ref{defopt}. The  extensions $\bM_1$ and $\uM_1$ have the following properties:
\begin{enumerate} 
\item \textbf{Reduction.} 
\be
\uM_1(\rho)=\bM_1(\rho)=M_1(\rho)\quad\forall\;\rho\in\mr_1(A).
\ee
\item \textbf{Monotonicity.} For any $\rho\in\mr(A)$ and $\mE\in\mf(A\to B)$
\be
\uM_1\big(\mE(\rho)\big)\leq\uM_1(\rho)\quad\text{and}\quad\bM_1\big(\mE(\rho)\big)\leq\bM_1(\rho)\;.
\ee
\item \textbf{Optimality.} Any resource measure $M:\bigcup_A\mr(A)\to\mbb{R}$ with $M(\sigma)=M_1(\sigma)$ for all 
$\sigma\in\mr_1(A)$, must satisfy
\be\label{bounds9}
\uM_1(\rho)\leq M(\rho)\leq \bM_1(\rho)\quad\forall\rho\in\mr(A)\;.
\ee
\end{enumerate}
\end{theorem}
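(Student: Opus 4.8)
The plan is to obtain all three properties as direct consequences of the two GRT axioms — that every $\mf(A\to A)$ contains $\id_A$ and that $\mf$ is closed under composition — combined with the defining monotonicity of $M_1$ under free maps that keep both endpoints in $\mr_1$. I would establish the reduction property first, as it already exhibits the two ingredients in isolation. Fix $\rho\in\mr_1(A)$. Choosing $\id_A$ in both optimizations (it is admissible since $\id_A(\rho)=\rho\in\mr_1(A)$) gives $\uM_1(\rho)\ge M_1(\rho)$ and $\bM_1(\rho)\le M_1(\rho)$. For the opposite inequalities I would invoke that $M_1$ is an $\mr_1$-resource measure: every competitor $\mE$ in the supremum has $\mE(\rho)\in\mr_1$, hence $M_1(\mE(\rho))\le M_1(\rho)$ and $\uM_1(\rho)\le M_1(\rho)$; dually, every competitor $\sigma$ in the infimum satisfies $\rho=\mE(\sigma)$ with $\rho\in\mr_1$, hence $M_1(\rho)\le M_1(\sigma)$ and $M_1(\rho)\le\bM_1(\rho)$. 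Together these pin both extensions to $M_1(\rho)$.

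For monotonicity I would use closure under composition to transport feasible maps between systems. Given $\mE\in\mf(A\to B)$, take any admissible $\mG\in\mf(B\to R)$ for $\mE(\rho)$, so that $\mG(\mE(\rho))\in\mr_1(R)$; then $\mG\circ\mE\in\mf(A\to R)$ is admissible for $\rho$ with the same image, and passing to suprema yields $\uM_1(\mE(\rho))\le\uM_1(\rho)$. The statement for $\bM_1$ is the mirror image: any $\sigma\in\mr_1(R)$ with $\rho=\mG(\sigma)$, $\mG\in\mf(R\to A)$, also satisfies $\mE(\rho)=(\mE\circ\mG)(\sigma)$ with $\mE\circ\mG\in\mf(R\to B)$, so every preimage of $\rho$ is a preimage of $\mE(\rho)$; the feasible set of the infimum only grows, forcing $\bM_1(\mE(\rho))\le\bM_1(\rho)$. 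The single point requiring attention is the direction of the bookkeeping: pre-composition shrinks the feasible set of the supremum but enlarges that of the infimum, and both effects lower the value at the image.

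The optimality bounds come from feeding the agreement $M=M_1$ on $\mr_1$ into the monotonicity of the candidate measure $M$ itself. For the lower bound, each admissible $\mE$ in $\uM_1(\rho)$ has $\mE(\rho)\in\mr_1$, so $M_1(\mE(\rho))=M(\mE(\rho))\le M(\rho)$, and taking the supremum gives $\uM_1(\rho)\le M(\rho)$. For the upper bound, each admissible $\sigma$ in $\bM_1(\rho)$ gives $\rho=\mE(\sigma)$, so $M(\rho)\le M(\sigma)=M_1(\sigma)$, and taking the infimum gives $M(\rho)\le\bM_1(\rho)$.

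I do not expect a substantive obstacle; the only care is with the boundary conventions when a feasible set is empty. When no free map sends $\rho$ into any $\mr_1(R)$ we have $\uM_1(\rho)=0$, and the bound $\uM_1(\rho)\le M(\rho)$ then rests on the nonnegativity of resource measures (recall $M_1,M$ take values in $\mbb{R}_+$); when $\rho$ is the image of no $\sigma\in\mr_1$ the convention $\bM_1(\rho)=+\infty$ renders the upper bound vacuous. The same empty-set cases must be confirmed in the monotonicity and reduction statements, where they hold at once from $0\le\uM_1$ and $\bM_1\le+\infty$. These conventions are precisely what keep the two extensions well defined and make the sandwich (\ref{bounds9}) hold on all of $\mr(A)$.
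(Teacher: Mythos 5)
Your proof is correct and takes essentially the same route as the paper's: the identity map yields one direction of the reduction property, closure under composition transports feasible maps (shrinking the supremum's feasible set, enlarging the infimum's) for monotonicity, and the candidate measure's own monotonicity combined with its agreement with $M_1$ on $\mr_1$ gives the optimality sandwich. Your explicit handling of the empty-feasible-set conventions (using nonnegativity for $\uM_1(\rho)=0$ and vacuity for $\bM_1(\rho)=+\infty$) is a detail the paper's proof leaves implicit, but it does not constitute a different approach.
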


The optimality property above implies that equality between $\uM_1$ and $\bM_1$ would imply uniqueness of all extensions of $M_1$. While one does not expect it to happen in general, it does occur in some GRTs as we will see below. We postpone the details of all proofs to the supplemental material (SM).

We say that a GRT admits a tensor product structure if both $\mr$ and $\mf$ are closed under tensor products. For such a GRT, the theorem above can also be applied to resource measures that are additive under tensor products. In general, the additivity property of a resource measure does not carry over to its minimal and maximal extensions. Instead, in the SM we show that for any additive measure $M_1$, $\rho\in\mr(A)$, and $\sigma\in\mr(B)$
\ba
&\bM_1(\rho\otimes\sigma)\leq\bM_1(\rho)+\bM_1(\sigma)\\
&\uM_1(\rho\otimes\sigma)\geq\uM_1(\rho)+\uM_1(\sigma)\;.
\ea
That is, $\bM_1$ is sub-additive and $\uM_1$ is super-additive. This sub/super additivity of the optimal extensions holds even if 
$M_1$ is only weakly (or partially) additive; i.e. for all $\rho\in\mr_1(A)$
\be
M_1(\rho^{\otimes k})=kM_1(\rho)\quad\quad\forall\;k\in\mbb{N}.
\ee
This means that for the regularized extensions, given by 
\ba\label{regm}
&\bM_1^{\reg}(\rho)\eqdef\lim_{n\to\infty}\frac{1}{n}\bM_1(\rho^{\otimes n})\\
&\uM_1^{\reg}(\rho)\eqdef\lim_{n\to\infty}\frac{1}{n}\uM_1(\rho^{\otimes n})\;,
\ea 
the limits above exists (see SM for more details).

\begin{theorem}\label{thm2}
Let $(\mr,\mf)$ be a GRT that admits a tensor product structure, $\mr_1$ closed under tensor products, and $M_1:\bigcup_A\mr_1(A)\to\mbb{R}_+$ be a weakly additive $\mr_1$-resource measure. Then, the optimal regularized extensions $\bM_1^\reg$ and $\uM_1^\reg$ satisfy the reduction, monotonicity, and optimality properties of Theorem~\ref{properties} with $\bM_1$ and $\uM_1$ replaced by $\bM_1^\reg$ and $\uM_1^\reg$, respectively. Moreover,
$\bM_1^\reg$ and $\uM_1^\reg$ are weakly additive.
\end{theorem}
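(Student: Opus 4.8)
The plan is to bootstrap everything from Theorem~\ref{properties} together with the sub-/super-additivity of $\bM_1$ and $\uM_1$ stated just above. First, existence of the limits defining $\bM_1^\reg$ and $\uM_1^\reg$ follows from Fekete's subadditivity lemma: since $\bM_1(\rho^{\otimes(m+n)})\le\bM_1(\rho^{\otimes m})+\bM_1(\rho^{\otimes n})$, the sequence $n\mapsto\bM_1(\rho^{\otimes n})$ is subadditive, so $\bM_1^\reg(\rho)=\inf_n\frac1n\bM_1(\rho^{\otimes n})$; dually $\uM_1^\reg(\rho)=\sup_n\frac1n\uM_1(\rho^{\otimes n})$ because $\uM_1$ is superadditive. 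In particular $\bM_1^\reg(\rho)\le\bM_1(\rho)$ and $\uM_1^\reg(\rho)\ge\uM_1(\rho)$, i.e.\ $[\uM_1^\reg,\bM_1^\reg]\subseteq[\uM_1,\bM_1]$, a fact I will return to for optimality.

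For \textbf{reduction}, fix $\rho\in\mr_1(A)$. Since $\mr_1$ is closed under tensor products, $\rho^{\otimes n}\in\mr_1(A^{\otimes n})$, so the reduction property of Theorem~\ref{properties} gives $\bM_1(\rho^{\otimes n})=\uM_1(\rho^{\otimes n})=M_1(\rho^{\otimes n})$, and weak additivity of $M_1$ gives $M_1(\rho^{\otimes n})=nM_1(\rho)$. Dividing by $n$ and taking the limit yields $\bM_1^\reg(\rho)=\uM_1^\reg(\rho)=M_1(\rho)$. For \textbf{monotonicity}, let $\mE\in\mf(A\to B)$ and $\rho\in\mr(A)$. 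Because the GRT has a tensor product structure, $\mE^{\otimes n}\in\mf(A^{\otimes n}\to B^{\otimes n})$ and $\rho^{\otimes n}\in\mr(A^{\otimes n})$, and $\mE^{\otimes n}(\rho^{\otimes n})=(\mE(\rho))^{\otimes n}$. Applying the monotonicity part of Theorem~\ref{properties} to the unregularized extension $\bM_1$ under the free map $\mE^{\otimes n}$ gives $\bM_1((\mE(\rho))^{\otimes n})\le\bM_1(\rho^{\otimes n})$; dividing by $n$ and letting $n\to\infty$ yields $\bM_1^\reg(\mE(\rho))\le\bM_1^\reg(\rho)$, and the same argument with $\uM_1$ settles $\uM_1^\reg$.

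The \textbf{optimality} step is the one requiring the most care, precisely because the regularized interval is strictly tighter than $[\uM_1,\bM_1]$; consequently it can only sandwich resource measures $M$ that are themselves weakly additive on the full domain $\mr$, and the statement must be read with this hypothesis on $M$. For such an $M$ with $M(\sigma)=M_1(\sigma)$ on $\mr_1$, Theorem~\ref{properties} gives $\uM_1(\rho^{\otimes n})\le M(\rho^{\otimes n})\le\bM_1(\rho^{\otimes n})$; using $M(\rho^{\otimes n})=nM(\rho)$ and dividing by $n$ gives $\frac1n\uM_1(\rho^{\otimes n})\le M(\rho)\le\frac1n\bM_1(\rho^{\otimes n})$, and passing to the sup (resp.\ inf) over $n$ yields $\uM_1^\reg(\rho)\le M(\rho)\le\bM_1^\reg(\rho)$. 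Finally, \textbf{weak additivity} of the regularized extensions is immediate from the limit representation: $\bM_1^\reg(\rho^{\otimes k})=\lim_n\frac1n\bM_1(\rho^{\otimes kn})=k\lim_n\frac1{kn}\bM_1(\rho^{\otimes kn})=k\bM_1^\reg(\rho)$, since the subsequence $m=kn$ converges to the same limit as $m\to\infty$, and identically for $\uM_1^\reg$. The main obstacle is thus conceptual rather than computational: recognizing that optimality in the regularized setting is necessarily a statement about weakly additive extensions, so that the tighter bounds $\uM_1^\reg\le M\le\bM_1^\reg$ are not contradicted by non-additive measures $M$ lying only in the wider interval $[\uM_1,\bM_1]$.
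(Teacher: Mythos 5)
Your proof is correct and follows essentially the same route as the paper's: tensor the reduction, monotonicity, and optimality statements of Theorem~\ref{properties}, divide by $n$, and pass to the limit, using sub-/super-additivity (Fekete) to guarantee the limits exist and to get $\uM_1\leq\uM_1^{\reg}\leq\bM_1^{\reg}\leq\bM_1$. You even make explicit two points the paper leaves terse --- the subsequence argument for weak additivity of the regularized quantities, and the fact that optimality must be read as a statement about \emph{weakly additive} extensions $M$ --- the latter being exactly the hypothesis in the paper's detailed SM version of the theorem.
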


Since $\uM_1(\rho)\leq\uM_1^{\reg}(\rho)$ and $\bM_1(\rho)\geq \bM_1^{\reg}(\rho)$ (see SM), the optimality property on a weakly additive extension $M$ is in general stronger (i.e.\ tighter) than the bounds given in~\eqref{bounds9} for non-additive measures.

\noindent{\it Applications to quantum divergences.}
Consider the GRT $(\mr,\mf)$, where $\mr(A)\eqdef\{(\rho,\sigma)\;:\;\rho,\sigma\in\md(A)\}$ consists of pairs of density matrices in $\md(A)$, and the set $\mf(A\to B)$ consists of all pairs of quantum channels of the form $(\mE,\mE)$, where $\mE\in\cptp(A\to B)$. That is, under free operations the pair $(\rho,\sigma)$, where $\rho,\sigma\in\md(A)$, transforms to the pair $\big(\mE(\rho),\mE(\sigma)\big)$. The resource measures in this GRT are called quantum divergences, and classical divergences can be viewed as $\mr_1$-resource measures, where $\mr_1(X)\eqdef\{(\p,\q)\;:\;\p,\q\in\md(X)\}$ consists of pair of probability vectors (which we view as diagonal density matrices in a fixed classical basis). We can therefore apply the techniques introduced above to extend classical divergences to quantum divergences. We start with a formal definition of a quantum divergence and a relative entropy (cf.~\cite{GT2020a,Matsumoto2018b,WGE2017,Muller2018,BEG2019,RW2019,Hall1999,Hall2000}).

\begin{definition}\label{qdefd}
Let $\D:\bigcup_{A}\big\{\md(A)\times\md(A)\big\}\to\mbb{R}_{+}\cup\{\infty\}$ be a function acting on pairs of $|A|$-dimensional probability vectors in all finite dimensions.
\begin{enumerate}
\item The function $\D$ is called a \emph{quantum divergence} if it satisfies the data processing inequality (DPI); i.e.  for any $\rho,\sigma\in\md(A)$ and a quantum channel $\mE\in\cptp(A\to B)$
\be
\D\big(\mE(\rho)\big\|\mE(\sigma)\big)\leq \D(\rho\|\sigma)\;.
\ee
\item A quantum divergence $\D$ is called a \emph{quantum relative entropy} if in addition it satisfies:
\begin{enumerate}
\item Additivity. For any $\rho_1,\rho_2\in\md(A)$ and any $\sigma_1, \sigma_2\in\md(B)$
\be
\D\left(\rho_1\otimes\rho_1\big\|\sigma_2\otimes\sigma_2\right)= \D(\rho_1\|\sigma_1)+\D(\rho_2\|\sigma_2)\;.
\ee
\item Normalization. 
\be
\D\Big(|0\lr 0|\Big\|\frac12|0\lr 0|+\frac12|1\lr 1|\Big)=1\;.
\ee
\end{enumerate}
\end{enumerate}
\end{definition}

Two extreme cases of relative entropies that play an important role particularly in single-shot quantum information are the min and max relative entropies~\cite{Datta-2009a} defined for any pair of states $\rho,\sigma\in\md(A)$ with $\supp(\rho)\subseteq\supp(\sigma)$ as
\begin{align}
&D_{\min}(\rho\|\sigma)\eqdef-\log\tr\left[\Pi_\rho\sigma\right]\\
&D_{\max}(\rho\|\sigma)\eqdef\log\min\{t\in\mbb{R}\;:\;t\sigma\geq\rho\}\;,
\end{align}
and for $\supp(\rho)\not\subseteq\supp(\sigma)$, $D_{\max}(\rho\|\sigma)=+\infty$, whereas $D_{\min}$ remains the same unless $\rho$ and $\sigma$ are orthogonal in which case $D_{\min}(\rho\|\sigma)\eqdef\infty$.

\begin{theorem}\label{thmmm}
Let $\D$ be a quantum relative entropy as in Definition~\ref{qdefd}.
Then, 
for all $\rho,\sigma\in\md(A)$, 
\be\label{32}
D_{\min}(\rho\|\sigma)\leq \D(\rho\|\sigma)\leq D_{\max}(\rho\|\sigma)\;.
\ee
\end{theorem}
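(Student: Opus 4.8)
The plan is to sandwich $\D(\rho\|\sigma)$ by reducing, through the data-processing inequality (DPI), to a single family of two-outcome classical pairs and then exploiting additivity to wash out the discreteness of the reduction. Write $g(q)\eqdef\D\big(\op{0}{0}\big\|q\op{0}{0}+(1-q)\op{1}{1}\big)$ for the value of $\D$ on the elementary classical pair with a deterministic first argument. Both inequalities follow once we (i) exhibit free channels relating $(\rho,\sigma)$ to such elementary pairs, and (ii) control $g$ on dyadic and integer arguments using only additivity, DPI, and the normalization $g(1/2)=1$. Throughout we use that $\D(\tau\|\tau)=0$ for every $\tau$ (immediate from DPI together with additivity), and that $D_{\min}$ and $D_{\max}$ are additive, so the degenerate support/orthogonality cases collapse to $D_{\max}=+\infty$ or $D_{\min}=+\infty$ and are dispatched separately.

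For the lower bound I would apply the measurement channel $\mM(\cdot)\eqdef\tr[\Pi_\rho\,\cdot]\op{0}{0}+\tr[(\1-\Pi_\rho)\,\cdot]\op{1}{1}$, which sends $\rho\mapsto\op{0}{0}$ and $\sigma\mapsto s\op{0}{0}+(1-s)\op{1}{1}$ with $s\eqdef\tr[\Pi_\rho\sigma]=2^{-D_{\min}(\rho\|\sigma)}$. Running this on $n$ copies and using additivity of $\D$ and of $s$ gives $n\D(\rho\|\sigma)\ge g(s^n)$. It then suffices to establish $g(2^{-k})\ge k$, which I obtain from the classical ``expanding'' channel sending outcome $0\mapsto 0^k$ and outcome $1\mapsto$ the uniform mixture over the nonzero strings in $\{0,1\}^k$: this maps the elementary pair with $q=2^{-k}$ exactly onto $k$ copies of the normalization pair, so DPI and additivity force $g(2^{-k})\ge k$. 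Monotonicity of $g$ (again a DPI consequence) upgrades this to $g(s^n)\ge\lfloor -n\log_2 s\rfloor$, and dividing by $n$ and letting $n\to\infty$ yields $\D(\rho\|\sigma)\ge-\log_2 s=D_{\min}(\rho\|\sigma)$.

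For the upper bound I assume $\supp\rho\subseteq\supp\sigma$ and set $\lambda\eqdef 2^{D_{\max}(\rho\|\sigma)}$, so $\rho\le\lambda\sigma$ and, by additivity of $D_{\max}$, $\rho^{\otimes n}\le\lambda^n\sigma^{\otimes n}$. I would build the measure-and-prepare channel $\mE_n$ defined by $\op{0}{0}\mapsto\rho^{\otimes n}$ and $\op{1}{1}\mapsto(\lambda^n\sigma^{\otimes n}-\rho^{\otimes n})/(\lambda^n-1)$; a one-line computation shows $\mE_n$ carries the elementary pair with $q=\lambda^{-n}$ exactly onto $(\rho^{\otimes n},\sigma^{\otimes n})$. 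DPI and additivity then give $n\D(\rho\|\sigma)\le g(\lambda^{-n})$. The remaining estimate $g(1/N)\le\lceil\log_2 N\rceil$ comes from the dual coarse-graining of $\lceil\log_2 N\rceil$ normalization pairs down to the elementary pair with $q=1/N$. Hence $n\D(\rho\|\sigma)\le\lceil\log_2\lceil\lambda^n\rceil\rceil\le n\log_2\lambda+2$, and letting $n\to\infty$ gives $\D(\rho\|\sigma)\le\log_2\lambda=D_{\max}(\rho\|\sigma)$.

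The main obstacle is pinning the constant in front of the logarithm to be exactly $1$. A naive one-shot reduction only yields $g(q)\in[\lfloor\log_2(1/q)\rfloor,\lceil\log_2(1/q)\rceil]$, which is not tight for non-dyadic $q=1/\lambda$, and from the three axioms alone $g$ need not be pinned pointwise; the essential device is to run the reduction on $n$ copies, so that additivity of $\D$, $D_{\min}$, and $D_{\max}$ lets the floor/ceiling slack be divided away in the limit. Conceptually this is exactly the optimality property of Theorem~\ref{properties} applied in the divergence GRT with $M_1$ the classical relative entropy: the measurement channel shows that the minimal extension dominates $D_{\min}$, while the measure-and-prepare channel shows that the maximal extension is dominated by $D_{\max}$, and the content of the argument lies in these two elementary channel constructions.
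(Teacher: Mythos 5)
Your proof is correct, and its two channel constructions are exactly the ones the paper uses: the measurement $\mM(\cdot)=\tr[\Pi_\rho\,\cdot\,]\op{0}{0}+\tr[(\1-\Pi_\rho)\,\cdot\,]\op{1}{1}$ for the lower bound, and the preparation channel $\op{0}{0}\mapsto\rho$, $\op{1}{1}\mapsto(t\sigma-\rho)/(t-1)$ with $t=2^{D_{\max}(\rho\|\sigma)}$ for the upper bound. Where you genuinely differ is in how the resulting binary quantity $g(q)=\D\big(\op{0}{0}\,\big\|\,q\op{0}{0}+(1-q)\op{1}{1}\big)$ is evaluated. The paper applies the two channels one-shot and then invokes Lemma~5 of~\cite{GT2020a}, which pins down $g(q)=-\log q$ exactly for any relative entropy; this keeps the argument short and, importantly, factors it so that the sandwich $\D_{\min}\le\D\le\D_{\max}$ (with $\D_{\min},\D_{\max}$ defined intrinsically through $g$) holds for \emph{arbitrary} divergences, additivity entering only in the final identification with $D_{\min},D_{\max}$. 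You instead avoid the external lemma: you derive the dyadic bounds $g(2^{-k})\ge k$ and $g(1/N)\le\lceil\log_2 N\rceil$ from scratch via explicit expansion/coarse-graining channels relating the elementary pair to tensor powers of the normalization pair, then run the whole construction on $\rho^{\otimes n},\sigma^{\otimes n}$ and use additivity to divide the floor/ceiling slack away as $n\to\infty$. This buys a self-contained proof, at the cost of weaving additivity through every step, so your argument (unlike the paper's) yields nothing for non-additive divergences. One side remark of yours is false, though inessential to the logic: the three axioms \emph{do} pin $g$ pointwise --- for the classical pair $\big(\op{0}{0},\,q\op{0}{0}+(1-q)\op{1}{1}\big)$ one has $D_{\min}=D_{\max}=-\log q$, so your own theorem, once proved, forces $g(q)=-\log q$; this is precisely the content of the cited Lemma~5. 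Finally, make the degenerate case $t=1$ (i.e.\ $\rho=\sigma$) explicit, since your preparation channel divides by $t^n-1$; your observation that $\D(\tau\|\tau)=0$ disposes of it.
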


Our definition of a quantum relative entropy has in it enough structure, that it gives rise to several additional properties. For example, in the SM we use the bounds above to provide an alternative proof to the one given in~\cite{Matsumoto2018b}, singling out the Umegaki relative entropy as the unique relative entropy that is asymptotically continuous.
In the SM we also show that any quantum relative entropy $\D$ satisfies $\forall\rho,\sigma,\omega\in\md(A)$,
\be
\D(\rho\|\sigma)\leq\D(\rho\|\omega)+D_{\max}(\omega\|\sigma)\;.
\ee
This triangle inequality implies for example that $\D$ is continuous in its second argument on states with full rank.
In addition we also show that quantum divergences are continuous in their first argument, and that most of them are faithful (i.e. $D(\rho\|\sigma)=0$ implies $\rho=\sigma$).
We now introduce the optimal extensions of a classical divergence to the quantum domain.

Let $\D_1:\md(X)\times\md(X)\to \mbb{R}_+$ be a classical divergence. According to Definition~\ref{defopt}, the minimal and maximal extensions of $\D_1$ to the quantum domain $\md(A)\times\md(A)$ are defined, respectively, for any $\rho,\sigma\in\md(A)$ by
\begin{align}
&\ubd_1(\rho\|\sigma) \eqdef \sup \D_1\big(\mE\left(\rho\right)\|\mE(\sigma)\big) , \\
&\bbd_1(\rho\|\sigma)\eqdef\inf \big\{\D_1(\p\|\q):\rho=\mF(\p),\;\sigma=\mF(\q) \big\} ,
\end{align}
where the optimizations are over the classical system $X$, the channels $\mE \in \cptp(A \to X)$ and $\mF \in \cptp(X \to A)$ as well as the probability distributions $\p$ and $\q$.

From Theorem~\ref{properties} it follows that both $\ubd_1$ and $\bbd_1$ are quantum divergences that reduces to $\D_1$ on classical pair of states in $\md(X)\times\md(X)$. Moreover, the main significance of these divergences follows from the third part of Theorem~\ref{properties}. It implies that  any quantum divergence
$\D:\md(A)\times\md(A)\to \mbb{R}_+$ that reduces on classical states to a classical divergence $\D_1$ satisfies for all $\rho,\sigma\in\md(A)$
\be
\ubd_1(\rho\|\sigma)\leq \D(\rho\|\sigma)\leq\bbd_1(\rho\|\sigma)\;,
\ee
where $\ubd_1,\bbd_1$ are the minimal and maximal quantum extensions of $\D_1$.

In general, $\ubd_1$ and $\bbd_1$ are not additive even if $\D_1$ is a relative entropy. However, one can regularize these quantities to obtain a divergence that is at least weakly additive. For example, in~\cite{MO2014,HT2016,Tomamichel2015} it was shown that if $\D_1=D_\alpha$ is the classical R\'enyi divergence with $\alpha\geq\frac12$, then the regularized minimal extension of $D_{\alpha}$, given by
\be
\underline{D}_{\alpha}^{\reg}(\rho\|\sigma)\eqdef\lim_{n\to\infty}\frac1n\underline{D}_{\alpha}\left(\rho^{\otimes n}\big\|\sigma^{\otimes n}\right)\;,
\ee 
has a closed formula given by the sandwiched or minimal quantum R\'enyi divergence~\cite{Muller-Lennert-2013a,Wilde2014}. For $\alpha\in[0,1/2)$ one can use the symmetry 
$D_{\alpha}(\p\|\q)=\frac{\alpha}{1-\alpha}D_{\alpha}(\q\|\p)$ to to find a close form~\footnote{We owe this point to Mil\'an Mosonyi.}. Thus, for all $\alpha\in[0,\infty]$ the minimal extension of the R\'enyi divergence, or minimal quantum R\'enyi divergence, is given by (cf.~\cite{Matsumoto2016})
	\begin{widetext}
	\be
	\underline{D}_{\alpha}(\rho\|\sigma)=\\
	\begin{cases}
	\frac1{\alpha-1}\log\tr\left(\sigma^{\frac{1-\alpha}{2\alpha}}\rho\sigma^{\frac{1-\alpha}{2\alpha}}\right)^\alpha&\text{if }\big(\frac12\leq\alpha<1\wedge\rho\not\perp\sigma\big)\vee\rho\ll\sigma\\
	\frac1{\alpha-1}\log\tr\left(\rho^{\frac{\alpha}{2(1-\alpha)}}\sigma\rho^{\frac{\alpha}{2(1-\alpha)}}\right)^{1-\alpha}&\text{if }0\leq \alpha<\frac12\wedge\rho\not\perp\sigma\\
	\infty&\text{otherwise}
	\end{cases}
	\ee
	\end{widetext}
where $\wedge$ and $\vee$ stands for `and' and `or', respectively.
Note that the above formula is continuous in $\alpha$. 

For the maximal extension we have the following result.

\begin{theorem}\label{thmmax}
Let $\D_1$ be a classical relative entropy, and let $\bbd_1$ be its maximal extension to quantum states. Then, for any pure state $\psi\eqdef |\psi\lr\psi|$ and any mixed state $\sigma$,
\be
{\bf \bD}_1(\psi\big\|\sigma)=D_{\max}(\psi\big\|\sigma)=\log\la\psi|\sigma^{-1}|\psi\ra\;.
\ee 
\end{theorem}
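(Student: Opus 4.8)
The plan is to unpack the definition of the maximal extension and exploit that $\psi$ is an extreme point of $\md(A)$. A channel $\mF\in\cptp(X\to A)$ out of a classical system acts on the diagonal inputs $\p,\q$ as a measure-and-prepare map, $\mF(\mathbf{r})=\sum_x r_x\,\tau_x$ with states $\tau_x=\mF(|x\lr x|)\in\md(A)$. Hence any admissible decomposition $\psi=\mF(\p)$ reads $|\psi\lr\psi|=\sum_x p_x\tau_x$, and since $\psi$ is pure this convex combination forces $\tau_x=\psi$ for every $x$ in $S\eqdef\supp(\p)$. Writing $q_S\eqdef\sum_{x\in S}q_x$, the second constraint becomes $\sigma=\mF(\q)=q_S\,\psi+(1-q_S)\,\omega$ with $\omega\eqdef\frac{1}{1-q_S}\sum_{x\notin S}q_x\tau_x\in\md(A)$; in particular every admissible decomposition satisfies $\sigma\ge q_S\,|\psi\lr\psi|$.

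The next step is a pinning lemma: for any classical relative entropy and any $t\in(0,1]$,
\[
\D_1\big((1,0)\|(t,1-t)\big)=-\log t .
\]
This holds because at this pair the classical min- and max-relative entropies coincide, $D_{\min}=D_{\max}=-\log t$ (directly, $\tr[\Pi_{(1,0)}(t,1-t)]=t$ and $\min\{s:s(t,1-t)\ge(1,0)\}=1/t$), so the classical specialization of Theorem~\ref{thmmm} squeezes $\D_1$ to $-\log t$. For the lower bound, coarse-graining $\p,\q$ through the binary map $S\mapsto 0$, $S^c\mapsto 1$ sends $\p\mapsto(1,0)$ and $\q\mapsto(q_S,1-q_S)$, so DPI together with the lemma gives $\D_1(\p\|\q)\ge\D_1\big((1,0)\|(q_S,1-q_S)\big)=-\log q_S$. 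Since $2^{D_{\max}(\psi\|\sigma)}=\min\{s:s\sigma\ge|\psi\lr\psi|\}=\la\psi|\sigma^{-1}|\psi\ra$, the constraint $\sigma\ge q_S|\psi\lr\psi|$ forces $q_S\le 2^{-D_{\max}(\psi\|\sigma)}$, hence $-\log q_S\ge D_{\max}(\psi\|\sigma)$. Taking the infimum over all admissible decompositions yields $\bbd_1(\psi\|\sigma)\ge D_{\max}(\psi\|\sigma)$.

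For the matching upper bound I exhibit a single binary decomposition attaining the value. Put $q^*\eqdef 2^{-D_{\max}(\psi\|\sigma)}=\la\psi|\sigma^{-1}|\psi\ra^{-1}$, take $X=\{0,1\}$, $\p=(1,0)$, $\q=(q^*,1-q^*)$, and set $\omega\eqdef(\sigma-q^*|\psi\lr\psi|)/(1-q^*)$, which is a genuine state since $\sigma\ge q^*|\psi\lr\psi|$ by the definition of $q^*$ and $\tr\omega=1$. Letting $\mF$ be the classical-to-quantum channel preparing $\psi$ on input $0$ and $\omega$ on input $1$, one checks $\mF(\p)=\psi$ and $\mF(\q)=\sigma$, so the decomposition is admissible and, by the lemma, $\D_1(\p\|\q)=-\log q^*=D_{\max}(\psi\|\sigma)$. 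Therefore $\bbd_1(\psi\|\sigma)\le D_{\max}(\psi\|\sigma)$, and combining with the previous paragraph gives the claimed equality, which by the evaluation of $D_{\max}$ at a pure state equals $\log\la\psi|\sigma^{-1}|\psi\ra$.

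The main obstacle is the pinning lemma: everything else is the extreme-point structure of $\psi$ together with an elementary positive-semidefiniteness constraint, but the identity $\D_1\big((1,0)\|(t,1-t)\big)=-\log t$ is exactly where the defining axioms of a relative entropy (DPI, additivity, normalization) are forced to bite, through the coincidence of the classical $D_{\min}$ and $D_{\max}$ on deterministic-versus-mixed pairs. I would also dispatch the degenerate cases separately: if $\sigma=\psi$ then $q^*=1$ and the decomposition is the trivial one with $\D_1=0=D_{\max}$, while if $\supp(\psi)\not\subseteq\supp(\sigma)$ no finite-valued admissible decomposition exists (any candidate forces $q_S=0$ and hence orthogonal classical distributions), consistently with $D_{\max}(\psi\|\sigma)=+\infty$.
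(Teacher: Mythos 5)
Your overall route coincides with the paper's own proof step for step: you use the measure-and-prepare form of the maximal extension (Eq.~\eqref{simom} in the supplemental material), observe that purity of $\psi$ forces every prepared state on the support of $\p$ to equal $\psi$, coarse-grain to a binary alphabet via DPI to reduce to the pair $\big((1,0),(q_S,1-q_S)\big)$, note that positive semi-definiteness of the remainder is equivalent to $q_S\le 2^{-D_{\max}(\psi\|\sigma)}$, and show achievability at $q^*=2^{-D_{\max}(\psi\|\sigma)}$. The paper phrases the last step as ``the minimum is attained when $s^{-1}=2^{D_{\max}(\psi\|\sigma)}$'' rather than writing out $\omega=(\sigma-q^*\psi)/(1-q^*)$, but it is the same construction, and your treatment of the degenerate support cases is consistent with the paper.

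The genuine problem is your justification of the pinning lemma $\D_1\big((1,0)\|(t,1-t)\big)=-\log t$, which you correctly identify as the crux. You derive it from ``the classical specialization of Theorem~\ref{thmmm}'', but Theorem~\ref{thmmm} is stated and proved for \emph{quantum} relative entropies as in Definition~\ref{qdefd}; a classical relative entropy $\D_1$ is not an instance of it, and you cannot lift $\D_1$ to a quantum relative entropy in order to apply it (the quantum extensions $\bbd_1$ and $\ubd_1$ need not be additive). Moreover, the route to a classical analogue of Theorem~\ref{thmmm} is circular for your purposes: in the supplemental material the bounds are first obtained in the form $\D_{\min}\le\D\le\D_{\max}$, where $\D_{\min}$ and $\D_{\max}$ are $\D$ itself evaluated at binary pairs of exactly the form $\big((1,0),(t,1-t)\big)$, and the identification of these quantities with the numbers $D_{\min}$ and $D_{\max}$ \emph{is} the pinning identity, imported as Lemma~5 of~\cite{GT2020a} (Eq.~\eqref{eq1}); at your specific binary pair the intermediate bound is a tautology and carries no information. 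So the pinning identity is an input to Theorem~\ref{thmmm}, not an output of it. The repair is exactly what the paper does: cite Lemma~5 of~\cite{GT2020a}, or prove the identity directly from the axioms --- setting $f(t)\eqdef\D_1\big((1,0)\|(t,1-t)\big)$, additivity plus two-sided coarse-graining gives $f(ts)=f(t)+f(s)$, DPI gives monotonicity of $f$, and normalization gives $f(1/2)=1$, whence $f(t)=-\log t$. With that step repaired, your proof is complete and is essentially identical to the paper's.
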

\begin{remark}
The maximal divergence is equivalent to the expression considered in~\cite{Matsumoto2018} (and more recently in~\cite{Tomamichel2015,Katariya2020}). There, a closed formula was found for the maximal extension when $\D_1=D_\alpha$ and $\alpha\in[0,2]$. In this case the maximal extension becomes the geometric relative entropy given by (assuming $\supp(\rho)\subseteq\supp(\sigma)$)
$$
\overline{D}_\alpha(\rho\|\sigma)=\widehat{D}_\alpha(\rho\|\sigma)\eqdef\frac{1}{\alpha-1}\tr\left[\sigma\left(\sigma^{-\frac12}\rho\sigma^{-\frac12}\right)^\alpha\right]\;.
$$
In Proposition 66 of~\cite{Katariya2020} it was shown that this result is consistent with Theorem~\ref{thmmax}.  Note, however, that Theorem~\ref{thmmax} holds for all choices of $\D_1$ even for $\D_1=D_{\alpha}$ with $\alpha>2$. However, for $\alpha>2$ it is left open to find a closed formula for $\bD_\alpha(\rho\|\sigma)$ (and its regularization in case it is not additive) when $\rho$ is a mixed state.
\end{remark}

\noindent{\it Extensions to sub-normalized states.}
One of the properties of quantum channels is that they take normalized states to normalized states. When considering sub-normalized states,
all trace non-increasing (TNI) CP maps (including CPTP maps) take sub-normalized states to subnormalized states.
In applications, it is quite often useful to quantify distances between subnormalized states with a function that obeys a monotonicity  property (i.e. the data processing inequality) under TNI-CP maps. We therefore consider here the GRT $(\mr,\mf)$, where $\mr(A )$ consists of all pairs of sub-normalized states, and $\mf(A\to B )$ is the set of all  pairs $(\mE,\mE)$, where $\mE$ is a TNI-CP map. We also take $\mr_1(A)\subseteq\mr(A)$ to be the subset of all pairs of normalized states. With these choices, the maximal extension of a quantum divergence $\D$ to subnormalized states is given for all subnormalized states $(\trho,\tsigma)\in\mr(A)$ by
\be\label{dextd}
\bbd(\trho,\tsigma)\eqdef\inf \D(\rho,\sigma)
\ee
where the infimum is over all systems $R$ and all density matrices $\rho,\sigma\in\md(R)$ for which there exists
a TNI-CP map $\mE\in\cp(R\to A)$ such that $\trho=\mE(\rho)$ and $\tsigma=\mE(\sigma)$. Note that this divergence satisfies the DPI with TNI-CP maps (not necessarily CPTP maps), and that we do not consider the minimal extension in this case, since it is zero. Remarkably, the maximal extension has the following simple closed formula.

\begin{theorem}\label{gmain}
Let $\D$ be a quantum divergence and $\bbd$ be its maximal extension to sub-normalized states. For any pair of sub-normalized states $(\trho,\tsigma)\in\mr(A )$
\be\label{g1}
\bbd(\trho\|\tsigma)=\D\Big(\trho\oplus\big(1-\tr[\trho]\big)\;\big\|\;\tsigma\oplus\big(1-\tr[\tsigma]\big)\Big)\;.
\ee
\end{theorem}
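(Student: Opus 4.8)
The plan is to prove the two inequalities separately, writing $\rho^\ast \eqdef \trho\oplus(1-\tr[\trho])$ and $\sigma^\ast \eqdef \tsigma\oplus(1-\tr[\tsigma])$ for the two normalized density matrices on the enlarged system $A\oplus\mbb{C}$ that appear on the right-hand side of~\eqref{g1}. Since $\trho,\tsigma$ are sub-normalized, the appended scalar blocks $1-\tr[\trho]$ and $1-\tr[\tsigma]$ are non-negative, so $\rho^\ast,\sigma^\ast$ are genuine normalized states on $A\oplus\mbb{C}$ and $\D(\rho^\ast\|\sigma^\ast)$ is well defined.

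For the upper bound $\bbd(\trho\|\tsigma)\le\D(\rho^\ast\|\sigma^\ast)$ I would simply exhibit a feasible point for the infimum defining $\bbd$. Take $R=A\oplus\mbb{C}$ together with the block-restriction map $\mE\in\cp(A\oplus\mbb{C}\to A)$ with single Kraus operator $V=[\openone_A\;\;0]$, i.e.\ $\mE(Y)=VYV^\dagger$. This map is completely positive and trace non-increasing, and by construction it deletes the scalar block, so $\mE(\rho^\ast)=\trho$ and $\mE(\sigma^\ast)=\tsigma$. Hence $(\rho^\ast,\sigma^\ast)$ lies in the feasible set and the infimum is no larger than $\D(\rho^\ast\|\sigma^\ast)$.

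The substance is in the reverse inequality $\D(\rho^\ast\|\sigma^\ast)\le\bbd(\trho\|\tsigma)$. Here I would take an arbitrary feasible point: a system $R$, normalized $\rho,\sigma\in\md(R)$, and a TNI-CP map $\mE\in\cp(R\to A)$ with $\trho=\mE(\rho)$, $\tsigma=\mE(\sigma)$. The key construction is to complete $\mE$ to an honest quantum channel into the enlarged output space. Since $\mE$ is trace non-increasing, its Hilbert--Schmidt dual satisfies $\openone_R-\mE^\ast(\openone_A)\ge0$; writing $L\eqdef\openone_R-\mE^\ast(\openone_A)$, the functional $Y\mapsto\tr[LY]$ is a CP map from $\ml(R)$ to the scalars that records exactly the trace lost under $\mE$. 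I then define $\mG\in\cptp(R\to A\oplus\mbb{C})$ by
\be
\mG(Y)\eqdef\mE(Y)\oplus\tr[LY]\;,
\ee
which is CP as a direct sum of CP maps into orthogonal output blocks, and trace preserving because $\tr[\mE(Y)]+\tr[LY]=\tr[\mE^\ast(\openone_A)Y]+\tr[LY]=\tr[Y]$. Evaluating on the normalized states gives $\mG(\rho)=\mE(\rho)\oplus(1-\tr[\mE(\rho)])=\trho\oplus(1-\tr[\trho])=\rho^\ast$, and likewise $\mG(\sigma)=\sigma^\ast$. Applying the data processing inequality for the quantum divergence $\D$ under the channel $\mG$ yields $\D(\rho^\ast\|\sigma^\ast)\le\D(\rho\|\sigma)$. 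Since the feasible point was arbitrary, taking the infimum over the right-hand side gives $\D(\rho^\ast\|\sigma^\ast)\le\bbd(\trho\|\tsigma)$, and combining with the upper bound proves~\eqref{g1}.

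The main point to get right --- and the only genuinely nonobvious step --- is the \emph{simultaneous} dilation: the same completing channel $\mG$ must send $\rho$ to $\rho^\ast$ and $\sigma$ to $\sigma^\ast$ at once, which is what makes DPI applicable. This works precisely because the appended ``failure'' weight is the linear functional $\tr[L\,\cdot\,]$ evaluated on the (normalized) input, so the deficit of each state is faithfully transcribed into its scalar block. I expect no analytic difficulties beyond checking that $\mG$ is CPTP; the construction is purely algebraic and uses no special structure of $\D$ other than the data processing inequality already assumed.
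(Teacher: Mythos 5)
Your proof is correct and follows essentially the same route as the paper's: the upper bound via the block-restriction feasible point $\big(\trho\oplus(1-\tr[\trho]),\,\tsigma\oplus(1-\tr[\tsigma])\big)$ with the projection onto $A$, and the lower bound by completing an arbitrary feasible TNI-CP map $\mE$ to the CPTP map $\omega\mapsto\mE(\omega)\oplus\big(\tr[\omega]-\tr[\mE(\omega)]\big)$ (your $\mG$ with $\tr[L\,\cdot\,]$ is exactly this map) and applying DPI. The only cosmetic difference is that you justify complete positivity of the completion via the dual-map operator $L=\openone_R-\mE^\ast(\openone_A)\ge 0$, which the paper leaves implicit.
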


Note that if $\D$ is the trace-distance or the fidelity then $\bbd$ coincides with the generalized trace distance or the generalized fidelity~\cite{tomamichel09}, respectively. This means that the generalized trace distance as defined in~\cite{Tomamichel2015}, for example, is optimal in the sense that any distance measure of subnormalized states that satisfies DPI under TNI-CP maps and that reduces to the trace distance on density matrices must be smaller than the generalized trace distance. Moreover, in the SM we show that the purified distance as defined in~\cite{tomamichel09}, is the maximal extension of the trace distance from pure states to subnormalized mixed states. This gives an operational meaning to the purified distance as the largest metric-divergence that reduces to the trace distance on pure states.

\noindent{\it Applications in entanglement theory.}
We apply now the techniques above to extend entanglement measures from pure bipartite states states to mixed bipartite states.  In general, our extensions are different than the convex-roof extensions (that have been used extensively in literature -see e.g.~\cite{Horodecki-2009a,Plenio-2007a}) and consequently, they enlarge the toolbox of entanglement theory. 
We use the notation $\pure(AB)\subset\md(AB)$ to denote the set of all pure states. On pure states we will call $E$ entanglement measure if it does not increase under  Local Operations and Classical Communication (LOCC).

For any function on pure states $E_1:\bigcup_{A,B}\pure(AB)\to\mbb{R}$  we define its maximal  extension to mixed bipartite states as
\be\label{me}
\overline{E}_1(\rho^{AB})\eqdef\inf E_1(\psi^{A'B'})
\ee
where the infimum is over all systems $A',B'$, and all pure states $\psi\in\pure(A'B')$ for which there exists $\mE\in\locc(A'B'\to AB)$ such that $\rho^{AB}=\mE(\psi^{A'B'})$.
The minimal extension is defined as
\be
\underline{E}_1(\rho^{AB})\eqdef\sup E_1\left(\mE(\rho^{AB})\right)
\ee
where the supremum is over all systems $A',B'$, and all $\mE\in\locc(AB\to A'B')$ such that $\mE(\rho)\in\pure(A'B')$.

The maximal and minimal extensions of a pure-state entanglement measure $E_1$, can be interpreted as the zero-error pure-entanglement cost and distillation of $\rho^{AB}$, respectively. Note also that the minimal extension is quite often zero since it is not alway possible to find an LOCC channel that can be used to obtain a pure entangled state from $\rho^{AB}$. To avoid that, in the SM we also introduce the smoothed version of these quantities.

A direct corollary of Theorem~\ref{properties} implies that the minimal and maximal extensions of  a measure of entanglement on pure states, $E_1$, satisfy: 
\begin{enumerate}
\item For any $\psi\in{\rm \pure}(AB)$ 
\be
\overline{E}_1(\psi^{AB})=\underline{E}_1(\psi^{AB})=E_1(\psi^{AB})\;.
\ee
\item For any $\rho\in\md(AB)$ and $\mE\in\locc(AB\to A'B')$
\ba
&\overline{E}_1\left(\mE^{AB\to A'B'}(\rho^{AB})\right)\leq \overline{E}_1\left(\rho^{AB}\right)\\
&\underline{E}_1\left(\mE^{AB\to A'B'}(\rho^{AB})\right)\leq \underline{E}_1\left(\rho^{AB}\right)\;.
\ea
\item For any measure of entanglement $E$ that reduces to $E_1$ on pure states
\be
\underline{E}_1\left(\rho^{AB}\right)\leq E\left(\rho^{AB}\right)\leq \overline{E}_1\left(\rho^{AB}\right)\;.
\ee
\end{enumerate}

\begin{example}
The Schmidt number on a pure state $\psi\in\pure(AB)$ is defined as
\be
N(\psi^{AB})\eqdef \rank(\psi^A)
\ee
where $\psi^A$ is the reduced density matrix of $\psi^{AB}$. For a density matrix $\rho\in\md(AB)$, the Schmidt number was define in~\cite{Terhal-2000a} as the number $k$ that satisfies: (1) for any decomposition of $\rho^{AB}=\sum_ip_i|\psi_i\lr\psi_i|$, $p_i\geq 0$
at least one of the vectors $\psi_i$ has at least
Schmidt rank $k$, and (2) there exists a decomposition of $\rho^{AB}$
with all vectors $\psi_i^{AB}$ of Schmidt rank at most $k$. In the SM we show that this definition coinside
with the maximal extension~\eqref{me} when $E_1$ is taken to be the Schmidt number on pure states. Therefore, the third part of Theorem~\ref{properties} implies that the Schmidt number as defined in~\cite{Terhal-2000a} is optimal in the sense that any other measure of entanglement that reduces to the Schmidt number on pure states must be no greater than it.
\end{example}

\noindent{\it Conclusions.}
In this paper we developed a simple (yet powerful) framework to extend resource monotones from one domain to a larger one.
We then applied this framework to several theories in physics, focusing on quantum divergences that plays a key role in quantum information and QRTs. We were able to use it to show a fundamental property about quantum relative entropies, they all lies between the min and max relative entropies. Our framework also demonstrated the existence of quantum divergences that are at least weakly additive and that to the authors' knowledge were not discussed in literature before.
We then applied the formalism to extend distance measures from normalized to subnormalized states, showing the optimality of the generalized fidelity, trace distance, and purified distance, and finally developed a new method to extend entanglement measures from pure to mixed bipartite states. 

The extension framework presented in this paper answer both a fundamental question regarding the uniqueness of the extensions of resource monotones (see third part of Theorem~\ref{properties} and Theorem~\ref{thm2}), as well as the practical question regarding the construction of optimal extensions (see Definition~\ref{defopt} and Eq.~\eqref{regm}).
As these questions lies at the heart of many physical theories, the range of the applications that were explored in this paper only touches the tip of the iceberg. We expect our extension framework to have numerous applications, and in~\cite{Gour2020} its applications to channel divergences and dynamical resource theories will be explored. Other applications that were not explored here due to the space limit, includes the QRT of coherence, asymmetry, athermality, and many other QRTs. We leave these investigations for future work.

\begin{acknowledgments}
GG acknowledge support from the Natural Sciences and Engineering Research Council of Canada (NSERC).
MT is supported by NUS startup grants (R-263-000-E32-133 and R-263-000-E32-731) and by the National Research Foundation, Prime Minister's Office, Singapore and the Ministry of Education, Singapore under the Research Centres of Excellence programme.
\end{acknowledgments}

\bibliographystyle{apsrev4-1}
\bibliography{QRTbib}

\newpage
\onecolumngrid


\begin{center}\Large\bfseries
 Supplemental Material\\ Optimal Extensions of Resource Measures and their Applications 
\end{center}

\section{Proof of Theorem~\ref{properties}}

\begin{theorem*}[Theorem~\ref{properties} in the main text]
Let $(\mr,\mf)$ be a GRT and for any system $A$, let $\mr_1(A)\subseteq\mr(A)$, and consider an $\mr_1$-resource measure $M_1:\bigcup_A\mr_1(A)\to\mbb{R}$. The optimal extensions $\bM_1,\uM_1:\bigcup_A\mr(A)\to\mbb{R}$ have the following three properties:
\begin{enumerate} 
\item \textbf{Reduction.} For any $\rho\in\mr(A)$
\be
\uM_1(\rho^A)=\bM_1(\rho^A)=M_1(\rho^A)\quad\forall\;\rho\in\mr_1(A).
\ee
\item \textbf{Monotonicity.} For any $\rho\in\mr(A)$ and $\mE\in\mf(A\to B)$
\be
\uM_1\big(\mE(\rho)\big)\leq\uM_1(\rho)\quad\text{and}\quad\bM_1\big(\mE(\rho)\big)\leq\bM_1(\rho)\;.
\ee
\item \textbf{Optimality.} Any resource measure $M:\bigcup_A\mr(A)\to\mbb{R}$ with $M(\sigma)=M_1(\sigma)$ for all 
$\sigma\in\mr_1(A)$, must satisfy
\be\label{bounds9}
\uM_1(\rho)\leq M(\rho)\leq \bM_1(\rho)\quad\forall\rho\in\mr(A)\;.
\ee
\end{enumerate}
\end{theorem*}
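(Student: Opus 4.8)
The plan is to prove all three properties by directly unwinding the definitions of $\uM_1$ and $\bM_1$ and invoking only the two defining axioms of a GRT — that the identity map is free and that free maps are closed under composition — together with the monotonicity of $M_1$ (as an $\mr_1$-resource measure) and of $M$. For the \textbf{Reduction} property, fix $\rho\in\mr_1(A)$. For the minimal extension I would exhibit $\mE=\id$ (free by axiom~1) as a feasible point, since $\id(\rho)=\rho\in\mr_1(A)$, which already gives $\uM_1(\rho)\geq M_1(\rho)$; conversely every feasible $\mE$ has $\mE(\rho)\in\mr_1(R)$ with $\rho\in\mr_1(A)$, so the $\mr_1$-monotonicity of $M_1$ forces $M_1(\mE(\rho))\leq M_1(\rho)$, and the supremum collapses to $M_1(\rho)$. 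The argument for $\bM_1(\rho)$ is the mirror image: $\sigma=\rho$ with $\mE=\id$ is feasible, giving $\bM_1(\rho)\leq M_1(\rho)$, while for any feasible $\sigma$ one has $\rho=\mE(\sigma)\in\mr_1(A)$, so $\mr_1$-monotonicity yields $M_1(\rho)\leq M_1(\sigma)$ and the infimum collapses to $M_1(\rho)$.

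For \textbf{Monotonicity} the key device is that composition transfers feasible points between the two optimization problems. For $\uM_1$, any $\mF\in\mf(B\to R)$ feasible for $\uM_1(\mE(\rho))$ yields $\mF\circ\mE\in\mf(A\to R)$ (free by axiom~2) with $(\mF\circ\mE)(\rho)=\mF(\mE(\rho))\in\mr_1(R)$, hence a feasible point for $\uM_1(\rho)$ attaining the same objective value; taking suprema gives $\uM_1(\mE(\rho))\leq\uM_1(\rho)$. For $\bM_1$, any $\sigma$ feasible for $\bM_1(\rho)$ via $\rho=\mF(\sigma)$ is also feasible for $\bM_1(\mE(\rho))$ via $\mE(\rho)=(\mE\circ\mF)(\sigma)$, so the feasible set for $\bM_1(\mE(\rho))$ contains that of $\bM_1(\rho)$ and the infimum can only decrease, giving $\bM_1(\mE(\rho))\leq\bM_1(\rho)$.

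The \textbf{Optimality} property follows by sandwiching an arbitrary extension $M$ between its values on $\mr_1$, where it agrees with $M_1$. For the lower bound, each $\mE$ feasible for $\uM_1(\rho)$ satisfies $M_1(\mE(\rho))=M(\mE(\rho))\leq M(\rho)$, using $M=M_1$ on $\mr_1$ and the monotonicity of $M$ itself; the supremum gives $\uM_1(\rho)\leq M(\rho)$. Dually, each $\sigma$ feasible for $\bM_1(\rho)$ satisfies $M(\rho)=M(\mE(\sigma))\leq M(\sigma)=M_1(\sigma)$, and the infimum gives $M(\rho)\leq\bM_1(\rho)$.

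I expect no deep obstacle here: the whole theorem is essentially a definition chase, and the substance lives entirely in the two GRT axioms. The only points requiring care are the boundary conventions. When no feasible $\mE$ exists one sets $\uM_1(\rho)=0$ and must check $M(\rho)\geq 0$, which is automatic since resource measures are valued in $\mbb{R}_+$; when no feasible $\sigma$ exists one sets $\bM_1(\rho)=+\infty$, for which the optimality upper bound is vacuous. I would also be deliberate about invoking axiom~2 to certify that the composites $\mF\circ\mE$ and $\mE\circ\mF$ genuinely lie in the relevant free sets, since it is exactly this closure that underwrites the feasible-set inclusions driving the monotonicity step.
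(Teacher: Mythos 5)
Your proof is correct and follows essentially the same route as the paper's: the identity map and closure under composition supply the feasible points, and monotonicity of $M_1$ (resp.\ $M$) collapses the optimizations in each direction. In fact your explicit treatment of the degenerate conventions ($\uM_1(\rho)=0$ when no free map reaches $\mr_1$, and the vacuous upper bound when $\bM_1(\rho)=+\infty$) is a point the paper's proof silently omits.
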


\begin{proof}
\textbf{Reduction.} If $\rho\in\mr_1(A)$ then $\uM_1(\rho)\geq M_1(\rho)$ since $\id^A\in\mf(A\to A)$.
Conversely, since $M_1$ is an $\mr_1$-resource measure and $\rho\in\mr_1(A)$, for any $\mE\in\mf(A\to R)$ with $\mE(\rho)\in\mr_1(R)$,  we have $M_1\big(\mE(\rho)\big)\leq M_1(\rho)$. Hence,  $\uM_1(\rho)\leq M_1(\rho)$.

For $\bM_1$, observe again that by taking $R=A$ and $\mE=\id^A$ we get $\bM_1(\rho)\leq M_1(\rho)$.
Conversely, since $\rho\in\mr_1(A)$ and $M_1$ is an $\mr_1$-resource measure, for any $\sigma\in\mr_1(R)$ and $\mE\in\mf(R\to A)$ such that $\rho=\mE(\sigma)$ we have $M_1(\rho)\leq M_1(\sigma)$. Hence, $\bM_1(\rho)\geq M_1(\rho)$.

\textbf{Monotonicity.} For any $\rho\in\mr(A)$ and 
$\mN\in\mf(A\to B)$ we have
\ba
\uM_1\big(\mN(\rho)\big)&=\sup_R\Big\{M_1\big(\mE\circ\mN(\rho)\big)\;:\; \mE\in\mf(B\to R)\;,\;\mE\circ\mN(\rho)\in\mr_1(R)\Big\}\\
&\leq \sup_R\Big\{M_1\big(\mF(\rho)\big)\;:\; \mF\in\mf(A\to R)\;,\;\mF(\rho)\in\mr_1(R)\Big\}\\
&=\uM_1(\rho)\;,
\ea
where the inequality follows by replacing $\mE\circ\mN\in\mf(A\to R)$ with any map $\mF\in\mf(A\to R)$ with the same property that $\mF(\rho)\in\mr_1(R)$.

We also have
\ba
\bM_1(\rho^A)&=\inf_{R}\Big\{M_1(\sigma)\;:\; \sigma\in\mr_1(R)\;,\;
\rho=\mE(\sigma)\;,\; \mE\in\mf(R\to A)\Big\} \\
&\geq\inf_{R}\Big\{M_1(\sigma)\;:\; \sigma\in\mr_1(R)\;,\;
\mN(\rho)=\mN\circ\mE(\sigma)\;,\; \mE\in\mf(R\to A)\Big\}\\
&\geq \inf_{R}\Big\{M_1(\sigma)\;:\; \sigma\in\mr_1(R)\;,\;
\mN(\rho)=\mF(\sigma)\;,\; \mF\in\mf(R\to B)\Big\}\\
&=\bM_1\big(\mN(\rho)\big)
\ea

\textbf{Optimality.} Let $\rho\in\mr(A)$ and $\mE\in\mf(A\to R)$ such that $\mE(\rho)\in\mr_1(R)$. Then, from the monotonicity of $M$ we get 
\be
M(\rho)\geq M\big(\mE(\rho)\big)=M_1\big(\mE(\rho)\big)\;.
\ee
Since the above holds for all systems $R$ and all $\mE\in\mf(A\to R)$ with $\mE(\rho)\in\mr_1(R)$ we must have 
$\uM_1(\rho)\leq M(\rho)$.

For the second inequality, let $\rho\in\mf(A)$, $\sigma\in\mf_1(R)$, and suppose there exists  $\mE\in\mf(R\to A)$ such that $\rho=\mE(\sigma)$. Then, from the monotonicity of $N$ we get
\be
M(\rho)= M\big(\mE(\sigma)\big)\leq M(\sigma)=M_1(\sigma)\;.
\ee
Since the above inequality holds for all such $\sigma\in\mf_1(R)$ for which there exists  $\mE\in\mf(R\to A)$
that takes $\sigma$ to $\rho$, we conclude that $M(\rho)\leq\bM_1(\rho)$.
\end{proof}

\section{Sub-additive and super-additive properties of optimal extensions}

\begin{lemma*}
Let $(\mr,\mf)$ be a GRT that admits a tensor product structure, and let $\mr_1(A)\subseteq\mr(A)$ for any system $A$, and suppose that it is closed under tensor products.
Also, let $M_1:\bigcup_A\mr_1(A)\to\mbb{R}$ be  an $\mr_1$-resource measure that is additive under tensor products.  
Then, its minimal and maximal extensions to $\mr$, satisfy for any $\rho\in\mr(A)$ and $\sigma\in\mr(B)$
\be
\bM_1(\rho\otimes\sigma)\leq\bM_1(\rho)+\bM_1(\sigma)\quad\text{and}\quad\uM_1(\rho\otimes\sigma)\geq\uM_1(\rho)+\uM_1(\sigma)\;.
\ee
That is, $\bM_1$ is sub-additive and $\uM_1$ is super-additive.
\end{lemma*}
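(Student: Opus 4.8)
The plan is to prove both inequalities by a single product construction, exploiting that $\mf$ and $\mr_1$ are each closed under tensor products together with the additivity of $M_1$. The idea is that an optimizer (a preimage, or a distilling map) for $\rho$ and an optimizer for $\sigma$ can be tensored to produce an admissible candidate for $\rho\otimes\sigma$, whose value splits additively.

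First I would establish the sub-additivity of $\bM_1$. Fix $\rho\in\mr(A)$ and $\sigma\in\mr(B)$. If either $\bM_1(\rho)$ or $\bM_1(\sigma)$ equals $+\infty$ the bound is vacuous, so assume both are finite, i.e.\ admissible preimages exist. For any $\alpha\in\mr_1(R_1)$ with $\mE_1\in\mf(R_1\to A)$ satisfying $\rho=\mE_1(\alpha)$, and any $\beta\in\mr_1(R_2)$ with $\mE_2\in\mf(R_2\to B)$ satisfying $\sigma=\mE_2(\beta)$, closure of $\mr_1$ gives $\alpha\otimes\beta\in\mr_1(R_1R_2)$, closure of $\mf$ gives $\mE_1\otimes\mE_2\in\mf(R_1R_2\to AB)$, and $(\mE_1\otimes\mE_2)(\alpha\otimes\beta)=\rho\otimes\sigma$. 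Hence $\alpha\otimes\beta$ is admissible in the infimum defining $\bM_1(\rho\otimes\sigma)$, so $\bM_1(\rho\otimes\sigma)\le M_1(\alpha\otimes\beta)=M_1(\alpha)+M_1(\beta)$ by additivity. Taking the infimum over $\alpha$ and $\beta$ independently yields $\bM_1(\rho\otimes\sigma)\le\bM_1(\rho)+\bM_1(\sigma)$. The super-additivity of $\uM_1$ is the mirror argument: for free maps $\mE_1\in\mf(A\to R_1)$, $\mE_2\in\mf(B\to R_2)$ with $\mE_1(\rho)\in\mr_1(R_1)$ and $\mE_2(\sigma)\in\mr_1(R_2)$, closure of $\mf$ gives $\mE_1\otimes\mE_2\in\mf(AB\to R_1R_2)$, closure of $\mr_1$ gives $(\mE_1\otimes\mE_2)(\rho\otimes\sigma)=\mE_1(\rho)\otimes\mE_2(\sigma)\in\mr_1(R_1R_2)$, so this map is admissible in the supremum defining $\uM_1(\rho\otimes\sigma)$. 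Thus $\uM_1(\rho\otimes\sigma)\ge M_1(\mE_1(\rho))+M_1(\mE_2(\sigma))$, and taking suprema over $\mE_1,\mE_2$ independently gives $\uM_1(\rho\otimes\sigma)\ge\uM_1(\rho)+\uM_1(\sigma)$.

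The one place demanding care, and the main obstacle, is the degenerate convention that an extension equals $0$ (resp.\ $+\infty$) when no admissible map exists. For $\bM_1$ this is harmless, since an infinite value on the right dominates. For $\uM_1$ the delicate subcase is when an admissible distilling map exists for $\sigma$ but none exists for $\rho$, so that $\uM_1(\rho)=0$ by convention; then the product construction does not directly furnish an admissible map for $\rho\otimes\sigma$, because $\mr_1$ being closed under tensor products is only a one-way property and does not let me conclude $\rho\otimes\mE_2(\sigma)\in\mr_1$ from $\mE_2(\sigma)\in\mr_1$ alone. I would therefore isolate these boundary cases at the outset, applying the product construction only in the regime where admissible maps exist on both factors, and treating the degenerate cases separately using the non-negativity of $M_1$ together with the defining conventions so that each reduces to a trivial inequality.
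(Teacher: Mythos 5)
Your main construction is exactly the paper's proof: the paper likewise restricts the optimization for $\rho\otimes\sigma$ to maps of product form $\mE_1\otimes\mE_2$ (and systems of the form $R_1\otimes R_2$), invokes closure of $\mf$ and $\mr_1$ under tensor products, and splits the value using additivity of $M_1$. In the regime where admissible optimizers exist on both factors your argument is complete, and your observation that the $\bM_1$ degenerate case is harmless (an infinite right-hand side dominates) is also correct.

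The gap is in your final claim. The ``delicate subcase'' you correctly identify --- an admissible distilling map exists for $\sigma$ but none for $\rho$, so $\uM_1(\rho)=0$ by convention --- does \emph{not} reduce to a trivial inequality, and no appeal to non-negativity of $M_1$ together with the conventions will close it. In that case you must show $\uM_1(\rho\otimes\sigma)\geq\uM_1(\sigma)$ with $\uM_1(\sigma)$ possibly strictly positive, which would require a free map on the joint system that in effect discards the $A$-factor and applies $\mE_2$ to the $B$-factor; a general GRT does not guarantee that any discarding map is free. Concretely, consider a GRT whose only free maps are identity maps, with $\mr_1(A)=\emptyset$, $\mr_1(B)=\{\sigma\}$, $\mr_1(AB)=\emptyset$, and $M_1(\sigma)>0$: then $\uM_1(\rho)=0$ and $\uM_1(\sigma)=M_1(\sigma)>0$, yet $\uM_1(\rho\otimes\sigma)=0$, so the asserted super-additivity fails. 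To be fair, the paper's own proof silently skips the same point --- its final equality, identifying the product-restricted supremum with $\uM_1(\rho)+\uM_1(\sigma)$, is valid only when both factor optimizations run over non-empty sets --- so the lemma should be read with an implicit non-degeneracy assumption (harmless in all of the paper's applications, where discarding is free). But as written, the resolution you claim for the boundary case is not available, and your proof is incomplete precisely where you flagged the difficulty.
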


\begin{proof}
By definition,
\ba
&\uM_1(\rho\otimes\sigma)\eqdef\sup_R\Big\{M_1\big(\mE(\rho\otimes\sigma)\big)\;:\; \mE\in\mf(AB\to R)\;,\;\mE(\rho\otimes\sigma)\in\mr_1(R)\Big\}\\
&\geq\sup_R\Big\{M_1\big(\mE_1(\rho)\otimes\mE_2(\sigma)\big)\;:\; \mE_1\in\mf(A\to R_1)\;,\;\mE_2\in\mf(B\to R_2)\;,\;\mE_1(\rho)\in\mr_1(R_1)\;,\;\mE_2(\sigma)\in\mr_1(R_2)\Big\}\\
&=\uM_1(\rho)+\uM_1(\sigma)\;,
\ea
where the inequality follows from retricting $\mE$ to the form $\mE_1\otimes\mE_2$, and $R$ to the composite form $R_1\otimes R_2$. The sub-additivity of $\bM_1$ follows similar lines.
\end{proof}

Suppose now that $M_1$ is weakly additive; i.e. for all $\rho\in\mr_1(A)$
\be
M_1(\rho^{\otimes k})=kM_1(\rho)\quad\quad\forall\;k\in\mbb{N}.
\ee
Then, similar arguments that were used in the proof above can be used to show that for any
$\rho\in\mr_1(A)$
\be
\bM_1(\rho^{\otimes (k+\ell)})\leq \bM_1(\rho^{\otimes k})+\bM_1(\rho^{\otimes \ell})\quad\text{and}\quad\uM_1(\rho^{\otimes (k+\ell)})\geq \uM_1(\rho^{\otimes k})+\uM_1(\rho^{\otimes l})\quad\quad\forall\;k,\ell\in\mbb{N}.
\ee 
The above inequalities grantee that the limit in~\eqref{regm} of the regularized extensions exists. The also implies that
\be
\bM_1^\reg(\rho)\leq \bM_1(\rho)\quad\text{and}\quad\uM_1^\reg(\rho)\geq \uM_1(\rho)\quad\quad\forall\;\rho\in\mr(A)\;.
\ee

\section{Proof of Theorem~\ref{thm2}}

\begin{theorem*}[detailed version of Theorem~\ref{thm2}]
Let $(\mr,\mf)$ be a GRT that admits a tensor product structure, $\mr_1$ as above, and $M_1:\bigcup_A\mr_1(A)\to\mbb{R}$ be a weakly additive $\mr_1$-resource measure (we assume that $\mr_1$ is closed under tensor products). Then, the optimal regularized extensions $\bM_1^\reg,\uM_1^\reg:\bigcup_A\mr(A)\to\mbb{R}$ satisfy the following properties:
\begin{enumerate}
\item Reduction. For all $\rho\in\mr_1(A)$
\be\uM_1^{\reg}(\rho)=\bM_1^{\reg}(\rho)=M_1(\rho)\;.\ee
\item Monotonicity. For any $\rho\in\mr(A)$ and $\mE\in\mf(A\to B)$ 
\ba
&\uM_1^{\reg}\big(\mE(\rho)\big)\leq \uM_1^{\reg}(\rho)\\
&\bM_1^{\reg}\big(\mE(\rho)\big)\leq\bM_1^{\reg}(\rho)\;.
\ea
\item Weak Additivity. For all $\rho\in\mr_1(A)$ and $k\in\mbb{N}$
\ba
&\uM_1^{\reg}(\rho^{\otimes k})=k\uM_1^{\reg}(\rho)\\
&\bM_1^{\reg}(\rho^{\otimes k})=k\bM_1^{\reg}(\rho)\;.
\ea
\item Optimality. For any weakly additive extension $M$ of $M_1$, from $\mr_1$ to $\mr$, that is monotonic under elements of $\mf$, we have
\be
\uM_1^{\reg}(\rho)\leq M(\rho)\leq \bM_1^{\reg}(\rho)\quad\forall\rho\in\mr(A)\;.
\ee
\end{enumerate}
\end{theorem*}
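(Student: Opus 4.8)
The plan is to reduce each of the four claims to its non-regularized counterpart in Theorem~\ref{properties}, applied to the tensor power $\rho^{\otimes n}$, and then divide by $n$ and send $n\to\infty$. Because the preceding section already establishes that $\bM_1$ is sub-additive and $\uM_1$ super-additive along tensor powers, the defining limits of $\bM_1^\reg$ and $\uM_1^\reg$ exist; I may therefore treat them as well defined throughout and freely pass the inequalities below to the limit. The tensor product structure of the GRT will be used repeatedly in two guises: that $\rho^{\otimes n}\in\mr(A^{\otimes n})$ (so the regularizations make sense), and that a free map lifts to a free map on tensor powers.

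For \textbf{reduction}, take $\rho\in\mr_1(A)$. Since $\mr_1$ is closed under tensor products, $\rho^{\otimes n}\in\mr_1(A^{\otimes n})$ for every $n$, so Theorem~\ref{properties} gives $\uM_1(\rho^{\otimes n})=\bM_1(\rho^{\otimes n})=M_1(\rho^{\otimes n})$, and weak additivity of $M_1$ turns the right-hand side into $nM_1(\rho)$. Dividing by $n$ and taking the limit yields $\uM_1^\reg(\rho)=\bM_1^\reg(\rho)=M_1(\rho)$. For \textbf{monotonicity}, the key observation is that $\mE\in\mf(A\to B)$ lifts to $\mE^{\otimes n}\in\mf(A^{\otimes n}\to B^{\otimes n})$ (closure of $\mf$ under tensor products) and that $\big(\mE(\rho)\big)^{\otimes n}=\mE^{\otimes n}(\rho^{\otimes n})$. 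The unregularized monotonicity from Theorem~\ref{properties} then gives $\uM_1\big((\mE(\rho))^{\otimes n}\big)\leq\uM_1(\rho^{\otimes n})$, and likewise for $\bM_1$; dividing by $n$ and letting $n\to\infty$ delivers the regularized statements.

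For \textbf{weak additivity}, I would argue directly from the definition by reindexing the limit. Using $(\rho^{\otimes k})^{\otimes n}=\rho^{\otimes kn}$, one has $\bM_1^\reg(\rho^{\otimes k})=\lim_{n}\frac1n\bM_1(\rho^{\otimes kn})=k\lim_{n}\frac{1}{kn}\bM_1(\rho^{\otimes kn})$, and the last limit is the subsequence $m=kn$ of the convergent sequence defining $\bM_1^\reg(\rho)$, hence equals $\bM_1^\reg(\rho)$. This gives $\bM_1^\reg(\rho^{\otimes k})=k\bM_1^\reg(\rho)$ (and the identical argument for $\uM_1^\reg$), and in fact holds for every $\rho\in\mr(A)$, not merely $\rho\in\mr_1(A)$. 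For \textbf{optimality}, let $M$ be a weakly additive, $\mf$-monotone extension of $M_1$. Applying the optimality part of Theorem~\ref{properties} to $\rho^{\otimes n}$ gives $\uM_1(\rho^{\otimes n})\leq M(\rho^{\otimes n})\leq\bM_1(\rho^{\otimes n})$; weak additivity of $M$ replaces the middle term by $nM(\rho)$, and dividing by $n$ before taking $n\to\infty$ sandwiches $M(\rho)$ between $\uM_1^\reg(\rho)$ and $\bM_1^\reg(\rho)$.

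None of these steps is computationally heavy, so the main obstacle is not analytic but structural: each argument silently relies on the facts that $\rho^{\otimes n}\in\mr$ and $\mE^{\otimes n}\in\mf$, which must be invoked from the tensor product hypothesis, and the optimality step additionally needs $M$ to be weakly additive on all of $\mr$ (so that $M(\rho^{\otimes n})=nM(\rho)$), not just on $\mr_1$. The genuinely nontrivial input -- that the additivity of $M_1$ is generally \emph{not} inherited by $\bM_1$ and $\uM_1$ and is recovered only asymptotically -- has already been dealt with in the preceding lemma, which is precisely what guarantees the regularized limits exist and makes the sandwiching in the optimality step meaningful.
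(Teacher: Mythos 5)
Your proposal is correct and follows essentially the same route as the paper's proof: each property is obtained by applying the unregularized Theorem~\ref{properties} to $\rho^{\otimes n}$ (using closure of $\mr$, $\mr_1$ and $\mf$ under tensor products), invoking weak additivity of $M_1$ or $M$ where needed, and passing to the limit $n\to\infty$. Your explicit subsequence/reindexing argument for weak additivity is just a fleshed-out version of the paper's terse remark that this property ``follows from the fact that the limit in~\eqref{regm} exists,'' and your observation that it holds for all $\rho\in\mr(A)$ is a harmless strengthening.
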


Note that $\uM_1(\rho)\leq\uM_1^{\reg}(\rho)$ and $\bM_1(\rho)\geq \bM_1^{\reg}(\rho)$ so that the bounds above on a weakly additive extension $N$ are in general tighter than the bounds given in~\eqref{bounds9} on non-additive measures.

\begin{proof}
Since $\rho\in\mr_1(A)$ and $\mr_1$ is closed under tensor products it follows that $\rho^{\otimes n}\in\mr_1(A^n)$. 
Since $\uM_1$ and $\bM_1$ are extensions of $M_1$ to $\mr$, they reduce to $M_1$ on elements in $\mr_1(A)$. Hence, for any $n\in\mbb{N}$ we have  $\uM_1(\rho^{\otimes n})=\bM_1(\rho^{\otimes n})=M_1(\rho^{\otimes n})=nM_1(\rho)$. This implies that 
$\uM_1^{\reg}(\rho)=\bM_1^{\reg}(\rho)=M_1(\rho)$.

For the monotonicity property, observe that for any $n\in\mbb{N}$, $\rho^{\otimes n}\in\mr(A^n)$ and $\mE^{\otimes n}\in\mf(A^n\to B^n)$ so that
\be
\frac{1}{n}\uM_1\left(\big(\mE(\rho)\big)^{\otimes n}\right)=\frac{1}{n}\uM_1\left(\mE^{\otimes n}\big(\rho^{\otimes n}\big)\right)\leq\frac{1}{n}\uM_1\big(\rho^{\otimes n}\big)
\ee
where the inequality follows from the monotonicity of $\uM_1$. Hence, taking the limity $n\to\infty$ proves the monotonicity of $\uM_1^{\reg}$. The monotonicity of $\bM_1^{\reg}$ follows from similar arguments.

The weak additivity follows from the fact that the limit in~\eqref{regm} exists. It is therefore left to prove the optimality. 
For this purpose, observe that the optimality of $\uM_1$ and $\bM_1$ implies that for any $\rho\in\mr(A)$ and any $n\in\mbb{N}$ we get that
\be
\frac{1}{n}\uM_1(\rho^{\otimes n})\leq \frac{1}{n}M(\rho^{\otimes n})\leq \frac{1}{n}\bM_1(\rho^{\otimes n})\;.
\ee
But since $M$ is weakly additive, we get 
\be
\frac{1}{n}\uM_1(\rho^{\otimes n})\leq M(\rho)\leq \frac{1}{n}\bM_1(\rho^{\otimes n})\;.
\ee
Hence, taking the limit $n\to\infty$ proves the optimality property.
\end{proof}

\section{Proof of Theorem~\ref{thmmm}}

Here we prove a slightly more detailed version of Theorem~\ref{thmmm}. We start with the following definition.

\begin{definition}
Let $\D:\bigcup_A\md(A)\times\md(A)\to \mbb{R}_+$ be a divergence (see Definition~\ref{qdefd}).
For any $\rho,\sigma\in\md(A)$ we denote its corresponding min and max divergences, respectively, by
\begin{align*}
&\D_{\min}\left(\rho\|\sigma\right)\eqdef \D\left(\begin{bmatrix} 1 & 0\\ 0 & 0\end{bmatrix}\Big\| \bb 2^{-D_{\min}(\rho\|\sigma)} & 0\\ 0 & 1-2^{-D_{\min}(\rho\|\sigma)}\eb\right)\\
&\D_{\max}\left(\rho\|\sigma\right)\eqdef \D\left(\begin{bmatrix} 1 & 0\\ 0 & 0\end{bmatrix}\Big\| \bb 2^{-D_{\max}(\rho\|\sigma)} & 0\\0 & 1-2^{-D_{\max}(\rho\|\sigma)}\eb\right)\;.
\end{align*}
\end{definition}
\begin{remark}
Note that if $\D$ is a relative entropy then from Lemma~5 in~\cite{GT2020a} it follows that $\D_{\max}=D_{\max}$ and $\D_{\min}=D_{\min}$. 
\end{remark}

\begin{theorem*}[A more detailed version of Theorem~\ref{thmmm}] 
Let $\D:\md(A)\times\md(A)\to \mbb{R}_+$ be a quantum divergence as defined in Definition~\ref{qdefd}.
Then, $\D_{\max}$ and $\D_{\min}$ are also divergences, and furthermore,
\be
\D_{\min}\left(\rho\|\sigma\right)\leq \D\left(\rho\|\sigma\right)\leq \D_{\max}\left(\rho\|\sigma\right)\;.
\ee
In particular, if $\D$ is a relative entropy then
for all $\rho,\sigma\in\md(A)$
\be
D_{\min}(\rho\|\sigma)\leq \D(\rho\|\sigma)\leq D_{\max}(\rho\|\sigma)\;.
\ee
\end{theorem*}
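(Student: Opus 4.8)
The plan is to prove both inequalities by exhibiting two explicit classical--quantum channels and invoking only the data processing inequality; the relative-entropy refinement then follows by evaluating $\D$ on binary states. Write $\Pi_\rho$ for the projector onto $\supp(\rho)$, and set $p\eqdef\tr[\Pi_\rho\sigma]=2^{-D_{\min}(\rho\|\sigma)}$ and $t\eqdef 2^{D_{\max}(\rho\|\sigma)}=\min\{s:s\sigma\ge\rho\}$. The point is that the two testing pairs appearing in the definitions of $\D_{\min}$ and $\D_{\max}$ are, respectively, the output of a measurement applied to $(\rho,\sigma)$ and the input of a preparation producing $(\rho,\sigma)$.

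\textbf{Lower bound.} First I would build the measure-and-prepare channel $\mE(X)\eqdef\tr[\Pi_\rho X]\,|0\lr 0|+\tr[(\1-\Pi_\rho)X]\,|1\lr 1|$, which is manifestly in $\cptp(A\to\mbb{C}^2)$. Since $\tr[\Pi_\rho\rho]=1$ one has $\mE(\rho)=|0\lr 0|$, while $\mE(\sigma)$ is the diagonal state with entries $(p,1-p)$. Because $p=2^{-D_{\min}(\rho\|\sigma)}$, the DPI for $\D$ gives
\be
\D_{\min}(\rho\|\sigma)=\D\big(\mE(\rho)\big\|\mE(\sigma)\big)\leq\D(\rho\|\sigma)\;.
\ee

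\textbf{Upper bound.} Assume $t<\infty$ (otherwise $D_{\max}=+\infty$ and the bound is vacuous); the degenerate case $t=1$ forces $\rho=\sigma$ and is immediate. By definition of $t$ the operator $\tau\eqdef(t\sigma-\rho)/(t-1)$ is positive with unit trace, hence a state. I would then define the preparation channel $\mE'$ that measures in the $\{|0\rangle,|1\rangle\}$ basis and outputs $\rho$ on outcome $0$ and $\tau$ on outcome $1$; linearity gives $\mE'(|0\lr 0|)=\rho$ and $\mE'\big(\tfrac1t|0\lr 0|+(1-\tfrac1t)|1\lr 1|\big)=\tfrac1t\rho+(t-1)/t\cdot\tau=\sigma$. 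Since $1/t=2^{-D_{\max}(\rho\|\sigma)}$, the DPI yields
\be
\D(\rho\|\sigma)=\D\big(\mE'(|0\lr 0|)\big\|\mE'(\cdots)\big)\leq \D\Big(|0\lr 0|\Big\|\tfrac1t|0\lr 0|+(1-\tfrac1t)|1\lr 1|\Big)=\D_{\max}(\rho\|\sigma)\;.
\ee

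\textbf{That $\D_{\min},\D_{\max}$ are divergences, and the relative-entropy case.} To see $\D_{\min}$ and $\D_{\max}$ satisfy DPI, I would observe that $g(x)\eqdef\D\big(|0\lr 0|\|(2^{-x},1-2^{-x})\big)$ is non-decreasing in $x$: for $q_1\ge q_2$ the stochastic map fixing $|0\rangle$ and sending $|1\rangle$ to $\big(\tfrac{q_1-q_2}{1-q_2},\tfrac{1-q_1}{1-q_2}\big)$ carries $(q_2,1-q_2)$ to $(q_1,1-q_1)$, so by DPI $\D$ is monotone in the first weight; since $\D_{\min}=g\circ D_{\min}$ and $\D_{\max}=g\circ D_{\max}$ are monotone compositions of $g$ with the genuine divergences $D_{\min},D_{\max}$, they inherit the DPI. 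Finally, when $\D$ is a relative entropy I would invoke Lemma~5 of~\cite{GT2020a} (as flagged in the Remark), which from additivity and normalization forces $\D\big(|0\lr 0|\|(q,1-q)\big)=-\log q$, i.e. $g(x)=x$; substituting $q=p$ and $q=1/t$ then gives $\D_{\min}=D_{\min}$ and $\D_{\max}=D_{\max}$, so the two displayed chains collapse to the stated inequality $D_{\min}(\rho\|\sigma)\le\D(\rho\|\sigma)\le D_{\max}(\rho\|\sigma)$.

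The two main inequalities are essentially forced by DPI once the channels are written down, so the only genuine content is the evaluation $\D\big(|0\lr 0|\|(q,1-q)\big)=-\log q$ on binary states; I expect that to be the main obstacle, but it is supplied by the cited Lemma~5 and so is treated as given here. The remaining care is bookkeeping of the support and positivity conditions ($\tr[\Pi_\rho\rho]=1$ and $t\sigma\ge\rho$) that guarantee the two constructed maps are legitimate channels, together with the handling of the infinite-value and $\rho=\sigma$ boundary cases.
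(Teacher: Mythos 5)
Your proof is correct and follows essentially the same route as the paper's: the identical measurement channel $X\mapsto\tr[\Pi_\rho X]\,|0\rl 0|+\tr[(\1-\Pi_\rho)X]\,|1\rl 1|$ for the lower bound, the identical preparation channel built from $\tau=(t\sigma-\rho)/(t-1)$ for the upper bound, the same binary stochastic map to establish DPI for $\D_{\min}$ and $\D_{\max}$, and the same appeal to Lemma~5 of~\cite{GT2020a} to collapse $\D_{\min},\D_{\max}$ to $D_{\min},D_{\max}$ in the relative-entropy case. Your explicit treatment of the degenerate cases $t=1$ and $t=\infty$ is a minor point of added care not present in the paper, but the substance is the same.
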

\begin{proof}
To show that $\D_{\max}$ and $\D_{\min}$ satisfy the DPI,
observe first that for any two binary probability distributions $(p,1-p)$ and $(q,1-q)$
there exists a classical channel $\mC\in\cptp(X\to X)$ satisfying
\be\label{cc}
\mC(|0\lr 0|)=|0\lr 0|\quad\text{and}\quad\mC\left(p|0\lr 0|+(1-p)|1\lr 1|\right)=q|0\lr 0|+(1-q)|1\lr 1|
\ee
if and only if $p\leq q$. By definition, for any channel $\mE\in\cptp(A\to B)$
\ba
&\D_{\max}\left(\mE(\rho)\big\|\mE(\sigma)\right)= \D\left(|0\lr 0|\;\Big\|\; 2^{-D_{\max}(\mE(\rho)\|\mE(\sigma))}|0\lr 0|+\left(1-2^{-D_{\max}(\mE(\rho)\|\mE(\sigma))}\right)|1\lr 1|\right)
\ea
and also $2^{-D_{\max}(\mE(\rho)\|\mE(\sigma))}\geq 2^{-D_{\max}(\rho\|\sigma)}$. This means that there exists a classical channel $\mC\in\cptp(X\to X)$ satisfying~\eqref{cc} with $q=2^{-D_{\max}(\mE(\rho)\|\mE(\sigma))}$ and $p=2^{-D_{\max}(\rho\|\sigma)}$. Hence, with this classical channel $\mC$ we get
\ba
\D_{\max}\left(\mE(\rho)\|\mE(\sigma)\right)&= \D\Big(\mC(|0\lr 0|)\;\Big\|\; \mC\left(2^{-D_{\max}(\rho\|\sigma)}|0\lr 0|+\big(1-2^{-D_{\max}(\rho\|\sigma)}\big)|1\lr 1|\right)\Big)\\
&\leq \D\Big(|0\lr 0|\;\Big\|\; 2^{-D_{\max}(\rho\|\sigma)}|0\lr 0|+\left(1-2^{-D_{\max}(\rho\|\sigma)}\right)|1\lr 1|\Big)\\
&=\D_{\max}\left(\rho\|\sigma\right)\;.
\ea
Following the same lines as above, one can prove that also $\D_{\min}$ satisfies the DPI. We are now ready to prove the two bounds.

Let $\rho\in\md(A)$, and $\Pi_{\rho}$ denotes the projector to the support of $\rho$. Define the channel (in fact POVM) $\mE\in\cptp(A\to X)$ with $|X|=2$ as
\be
\mE(\sigma)\eqdef\tr\big[\sigma\Pi_\rho\big]|0\lr 0|^X+\tr\big[\sigma\left(I-\Pi_\rho\right)\big]|1\lr 1|^X\;.
\ee
Then,
\ba\label{substi}
\D(\rho\|\sigma)&\geq \D\big(\mE(\rho)\|\mE(\sigma)\big)\\
&=\D\left(|0\lr 0|\Big\|\tr\big[\sigma\Pi_\rho\big]|0\lr 0|+\tr\big[\sigma\left(I-\Pi_\rho\right)\big]|1\lr 1|\right)\;.
\ea
Therefore,  this  gives $\D(\rho\|\sigma)\geq \D_{\min}(\rho\|\sigma)$.

For the second inequality, denote by $t=2^{D_{\max}(\rho\|\sigma)}$, and note that in particular, $t\sigma\geq\rho$ (i.e. $t\sigma-\rho$ is a CP map). Define a channel $\mE\in\cptp(X\to A)$ with $|X|=2$ by 
\be
\mE(|0\lr 0|)=\rho\quad\text{and}\quad\mE(|1\lr 1|)=\frac{1}{t-1}(t\sigma-\rho)\;.
\ee
Furthermore, denote 
\be
\q^X\eqdef\frac{1}{t}|0\lr 0|^X+\frac{t-1}{t}|1\lr 1|^X\;,
\ee
 and observe that $\mE(\q^X)=\sigma$.
Hence,
\ba
\D(\rho\|\sigma)&=\D\Big(\mE(|0\lr 0|^X)\big\|\mE(\q^X)\Big)\\
&\leq \D\left(|0\lr 0|^X\big\|\q^X \right)\\
&=\D_{\max}(\rho\|\sigma)\;.
\ea
This completes the proof.
\end{proof}

\section{Continuity of quantum relative entropies}

In Lemma 5 of~\cite{GT2020a} we showed that if $\D$ is a relative entropy (as defined in Definition~\ref{qdefd} of the main text), then
\begin{align}
	\D \bigg( \begin{bmatrix} 1 & 0 \\ 0 & 0 \end{bmatrix} \bigg\| \begin{bmatrix} 1-\eps & 0 \\ 0 & \eps \end{bmatrix} \bigg) = - \log (1-\eps) \,. \label{eq1}
\end{align}
We use this property to prove the following theorem.

\begin{theorem*}
	Let $\rho,\sigma,\omega \in \md(A)$ be three quantum states. Then,
	\begin{align}
		 \D(\rho \| \sigma) - \D(\rho \| \omega) \leq 
		D_{\max}(\omega\|\sigma) \;,
	\end{align}
	with the convention $\pm\infty\leq+\infty$. Moreover, if $\lambda_{\min}(\omega)> \| \sigma - \omega \|_{\infty}$ then
	\begin{equation}
	D_{\max}(\omega\|\sigma)\leq - \log \left( 1 - \frac{ \| \sigma - \omega \|_{\infty} }{ \lambda_{\min}(\omega) } \right)
	\end{equation}
\end{theorem*}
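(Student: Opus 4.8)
The plan is to prove the triangle inequality by a flag-embedding argument that combines additivity, the data-processing inequality (DPI), and the identity~\eqref{eq1}, and then to obtain the spectral bound on $D_{\max}(\omega\|\sigma)$ by an elementary operator-inequality estimate. For the first statement, assume $D_{\max}(\omega\|\sigma)<\infty$ (otherwise the right-hand side is $+\infty$ and the claim holds by the convention $\pm\infty\le+\infty$) and set $t\eqdef 2^{D_{\max}(\omega\|\sigma)}$, so that $t\sigma\ge\omega$ with $t$ minimal. If $t=1$ then $\sigma=\omega$ and the inequality is trivial, so assume $t>1$ and define the state $\tau\eqdef(t\sigma-\omega)/(t-1)\ge 0$, whose trace is $1$ since $\tr[t\sigma-\omega]=t-1$; this gives the convex decomposition $\sigma=\tfrac1t\omega+\tfrac{t-1}{t}\tau$. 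Introducing a qubit flag $X$ with the classical states $P\eqdef|0\rangle\langle 0|$ and $Q\eqdef\tfrac1t|0\rangle\langle 0|+\tfrac{t-1}{t}|1\rangle\langle 1|$, the identity~\eqref{eq1} with $1-\eps=1/t$ gives $\D(P\|Q)=\log t=D_{\max}(\omega\|\sigma)$, and additivity of the relative entropy then yields
\be
\D(\rho\otimes P\,\|\,\omega\otimes Q)=\D(\rho\|\omega)+D_{\max}(\omega\|\sigma)\;.
\ee

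The key (and genuinely creative) step is to exhibit a single channel $\Lambda\in\cptp(AX\to A)$ that simultaneously maps both product states down to the desired pair. I take $\Lambda$ to dephase $X$ in the computational basis and then, conditioned on the outcome, either pass the $A$-system through (outcome $0$) or discard it and prepare the fixed state $\tau$ (outcome $1$):
\be
\Lambda(\xi_{AX})=\langle 0|_X\,\xi_{AX}\,|0\rangle_X+\tr\!\big[\langle 1|_X\,\xi_{AX}\,|1\rangle_X\big]\,\tau\;,
\ee
which is CPTP (a measure-and-prepare map on the flag). A direct check shows $\Lambda(\rho\otimes P)=\rho$ (the flag is $|0\rangle$, so the $A$-system passes through untouched) and $\Lambda(\omega\otimes Q)=\tfrac1t\omega+\tfrac{t-1}{t}\tau=\sigma$. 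Applying DPI to $\Lambda$ and combining with the displayed equality gives $\D(\rho\|\sigma)\le\D(\rho\otimes P\|\omega\otimes Q)=\D(\rho\|\omega)+D_{\max}(\omega\|\sigma)$, which is the claim after rearranging (the cases where $\D(\rho\|\omega)=+\infty$ or $D_{\max}(\omega\|\sigma)=+\infty$ being absorbed into the convention).

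For the second statement the argument is purely spectral. Since $\lambda_{\min}(\omega)>\|\sigma-\omega\|_\infty\ge 0$, the state $\omega$ is full rank and $\1\le\omega/\lambda_{\min}(\omega)$. From $\sigma-\omega\ge-\|\sigma-\omega\|_\infty\1$ I obtain
\be
\sigma\ge\omega-\|\sigma-\omega\|_\infty\,\1\ge\Big(1-\tfrac{\|\sigma-\omega\|_\infty}{\lambda_{\min}(\omega)}\Big)\omega\;,
\ee
where the prefactor $c\eqdef 1-\|\sigma-\omega\|_\infty/\lambda_{\min}(\omega)$ is strictly positive by hypothesis. Thus $\tfrac1c\sigma\ge\omega$, so the minimal $t$ with $t\sigma\ge\omega$ satisfies $t\le 1/c$, giving $D_{\max}(\omega\|\sigma)=\log t\le-\log c=-\log\big(1-\|\sigma-\omega\|_\infty/\lambda_{\min}(\omega)\big)$.

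The main obstacle is locating the right flag construction: one must recognize that attaching the qubit pair $(P,Q)$ — whose relative entropy equals exactly $D_{\max}(\omega\|\sigma)$ by~\eqref{eq1} — and then post-processing through the single flag-routing channel $\Lambda$ recovers $\rho$ in the first slot while rebuilding $\sigma$ from $\omega$ and $\tau$ in the second. Everything else (the convex decomposition of $\sigma$, the use of additivity, and the spectral estimate) is routine once this channel and flag are in place.
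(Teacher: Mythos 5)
Your proof is correct and takes essentially the same route as the paper's: your convex split $\sigma=\tfrac1t\omega+\tfrac{t-1}{t}\tau$ with $\tau=(t\sigma-\omega)/(t-1)$ is exactly the paper's decomposition $\sigma=(1-\eps)\omega+\eps\tau$ at $\eps=1-2^{-D_{\max}(\omega\|\sigma)}$, and your flag-routing channel $\Lambda$ is precisely the map the paper invokes in its DPI step (identity on flag outcome $0$, prepare the constant state $\tau$ on outcome $1$), combined with the same use of additivity and the two-state identity for $\D$. The only differences are cosmetic: you treat the degenerate case $t=1$ explicitly, and you write out the elementary spectral estimate for the second bound, which the paper states in the theorem but leaves unproven as routine.
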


\begin{proof}
	For $|A|=1$ the statement is trivial so we can assume $|A| \geq 2$. For any $\epsilon>0$, we may write
	\begin{align}
		\sigma = (1-\eps) \omega + \eps \tau , \quad \textrm{where} \quad \tau = \omega + \frac{1}{\eps} (\sigma - \omega) \,.
	\end{align}
	Note that $\tau\geq 0$ iff $\sigma \geq (1-\eps) \omega$ or equivalently iff $\epsilon\geq 1-2^{-D_{\max}
	(\omega\|\sigma)}$.
	 Therefore, $\tau \in \md(A)$ for $\epsilon=1-2^{-D_{\max}
	(\omega\|\sigma)}$.
	Now, using Eq.~\eqref{eq1}, additivity and data-processing, we get
	\begin{align}
		\D(\rho\|\omega) - \log ( 1 - \eps ) = \D \bigg( \rho \otimes \begin{bmatrix} 1 & 0 \\ 0 & 0 \end{bmatrix} \bigg\| \omega \otimes \begin{bmatrix} 1-\eps & 0 \\ 0 & \eps \end{bmatrix} \bigg) \geq \D(\rho\|\sigma)
	\end{align}
	The inequality is the DPI with a map that acts as an identity upon measuring $[1, 0]$ in the second register, and produces a constant output $\tau$ upon measuring $[0,1]$ in the second register.
\end{proof}

So, in particular, the function $\sigma \mapsto \D(\rho \| \sigma)$ is continuous on the set of density matrices in $\md(A)$ with full support. We now show that also the function $\rho \mapsto \D(\rho \| \sigma)$ is continuous.

\begin{theorem*}
	Let $\rho, \omega, \sigma \in \md(A)$ be quantum states. Then, we have
	\begin{align}
		\D(\rho\|\sigma) - \D(\omega\|\sigma) &\leq \min_{0\leq s\leq 2^{-D_{\max}(\omega\|\rho)}} D_{\max}\left(\rho+s(\sigma-\omega)\big\|\sigma\right)\\
		&\leq \log \left( 1 + \frac{  \| \rho - \omega \|_{\infty} } {\lambda_{\min}(\omega) \lambda_{\min}(\sigma) }  \right)
		\end{align}
		where the second inequality holds if $\sigma>0$ and $\lambda_{\min}(\omega)>\|\rho-\omega\|_\infty$.
		\end{theorem*}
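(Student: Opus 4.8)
The plan is to prove the two inequalities separately; the first is the substantive continuity estimate, the second a perturbative consequence of it. Throughout set $\Delta\eqdef\rho-\omega$ and, for a parameter $s$, write $\xi_s\eqdef\rho+s(\sigma-\omega)$. I would first record that the admissible range $0\le s\le 2^{-D_{\max}(\omega\|\rho)}$ is exactly the range for which $\rho\ge s\omega$, since $2^{-D_{\max}(\omega\|\rho)}=\max\{s:\rho\ge s\omega\}=\lambda_{\min}(\omega^{-1/2}\rho\omega^{-1/2})$ (here $\omega>0$ because $\lambda_{\min}(\omega)>\|\Delta\|_\infty\ge0$). Consequently $\rho-s\omega\ge0$, and $\xi_s=(\rho-s\omega)+s\sigma\ge0$ with $\tr\xi_s=1$, so $\xi_s\in\md(A)$.

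For the first inequality I would fix any admissible $s<1$ and set $\mu\eqdef(\rho-s\omega)/(1-s)\in\md(A)$. The key point is that $\rho$ and $\xi_s$ are images of $\omega$ and $\sigma$ under one and the same channel: define $\Theta(X)\eqdef sX+(1-s)\tr[X]\,\mu$, which is CPTP as a convex combination of the identity channel and the trace-and-replace channel $X\mapsto\tr[X]\mu$. Then $\Theta(\omega)=s\omega+(1-s)\mu=\rho$ and $\Theta(\sigma)=s\sigma+(1-s)\mu=\xi_s$, so data processing gives $\D(\rho\|\xi_s)=\D(\Theta(\omega)\|\Theta(\sigma))\le\D(\omega\|\sigma)$. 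Combining this with the triangle inequality $\D(\rho\|\sigma)\le\D(\rho\|\xi_s)+D_{\max}(\xi_s\|\sigma)$ from the preceding theorem (applied with $\xi_s$ in the role of its $\omega$) yields $\D(\rho\|\sigma)-\D(\omega\|\sigma)\le D_{\max}(\xi_s\|\sigma)$; taking the infimum over admissible $s$ gives the first line. The degenerate case $s=1$ forces $\rho=\omega$, where the bound is trivial.

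For the second inequality I would bound the minimum by evaluating it at the endpoint $s^*\eqdef 2^{-D_{\max}(\omega\|\rho)}=\lambda_{\min}(\omega^{-1/2}\rho\omega^{-1/2})$. Writing $\omega^{-1/2}\rho\omega^{-1/2}=\openone+\omega^{-1/2}\Delta\omega^{-1/2}$ and using $\|\omega^{-1/2}\Delta\omega^{-1/2}\|_\infty\le\|\Delta\|_\infty/\lambda_{\min}(\omega)$ gives $1-s^*\le\|\Delta\|_\infty/\lambda_{\min}(\omega)$, which is also where the hypothesis $\lambda_{\min}(\omega)>\|\Delta\|_\infty$ secures $s^*>0$. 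It then suffices to exhibit $t\eqdef 1+\|\Delta\|_\infty/(\lambda_{\min}(\omega)\lambda_{\min}(\sigma))$ with $t\sigma\ge\xi_{s^*}$; since $t\sigma-\xi_{s^*}=(t-s^*)\sigma-(1-s^*)\omega-\Delta$ and $\Delta\le\|\Delta\|_\infty\openone$, it is enough to prove the operator bound $(t-s^*)\sigma-(1-s^*)\omega\ge\|\Delta\|_\infty\openone$, which via $\sigma\ge\lambda_{\min}(\sigma)\openone$ and $\omega\le\lambda_{\max}(\omega)\openone$ reduces to the scalar inequality $(t-s^*)\lambda_{\min}(\sigma)-(1-s^*)\lambda_{\max}(\omega)\ge\|\Delta\|_\infty$.

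The main obstacle is precisely this last constant-chasing step, since the cruder estimate $\omega\le\openone$ is too lossy and produces a spurious extra term $\|\Delta\|_\infty/\lambda_{\min}(\sigma)$. I would stress that one must retain $\lambda_{\max}(\omega)$ and invoke the elementary spectral fact that $\lambda_{\max}(\omega)+\lambda_{\min}(\omega)\le1$ for every density matrix in dimension $|A|\ge2$ (the remaining eigenvalues are nonnegative). Substituting $(t-1)\lambda_{\min}(\sigma)=\|\Delta\|_\infty/\lambda_{\min}(\omega)$ rewrites the scalar inequality as $\tfrac{\|\Delta\|_\infty}{\lambda_{\min}(\omega)}+(1-s^*)\big(\lambda_{\min}(\sigma)-\lambda_{\max}(\omega)\big)\ge\|\Delta\|_\infty$; when $\lambda_{\min}(\sigma)\ge\lambda_{\max}(\omega)$ this is immediate, and otherwise $1-s^*\le\|\Delta\|_\infty/\lambda_{\min}(\omega)$ reduces it to $\lambda_{\min}(\sigma)\ge\lambda_{\max}(\omega)+\lambda_{\min}(\omega)-1$, whose right-hand side is $\le0$. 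This closes the argument, with $|A|=1$ treated trivially.
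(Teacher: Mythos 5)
Your proof is correct, and it differs from the paper's in both halves, even though your channel $\Theta(X)=sX+(1-s)\tr[X]\,\mu$ is exactly the paper's map $\mathcal{E}$ (set $s=1-\eps$; your $\mu$ is its $\tau=\omega+\tfrac{1}{\eps}(\rho-\omega)$). For the first inequality the paper does not route the argument through the preceding theorem: it builds a single two-register channel that conditionally applies $\mathcal{E}$ or outputs a correction state $\kappa$ solving $(1-\nu)\mathcal{E}(\sigma)+\nu\kappa=\sigma$, and then invokes additivity together with the normalization identity \eqref{eq1}; you instead chain the plain DPI estimate $\D(\rho\|\xi_s)=\D\big(\Theta(\omega)\big\|\Theta(\sigma)\big)\leq\D(\omega\|\sigma)$ with the preceding theorem applied to the triple $(\rho,\xi_s,\sigma)$. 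Your decomposition is more modular --- no $\kappa$, no flag register, no direct appeal to additivity, since all of that is already encapsulated in the result you cite --- while the paper's version is self-contained within one construction (and, as a side effect of your modularity, you avoid reproducing the flagged-state bookkeeping where the paper's displayed equation in fact has a typo, writing $\rho\otimes\cdot\,\|\,\omega\otimes\cdot$ where $\omega\otimes\cdot\,\|\,\sigma\otimes\cdot$ is meant). For the second inequality the divergence is more substantive: the paper evaluates the minimum at the \emph{interior} admissible point $s=1-\|\rho-\omega\|_\infty/\lambda_{\min}(\omega)$ and verifies $t\sigma\geq\xi_s$ in one line via the trace bound on a positive operator, $\rho-s\omega\leq\tr[\rho-s\omega]\,\openone=(1-s)\openone\leq(1-s)\big(1+\lambda_{\min}(\sigma)\big)\openone\leq(t-s)\sigma$; you evaluate at the endpoint $s^{*}=2^{-D_{\max}(\omega\|\rho)}$, which is what forces your case analysis and the extra spectral fact $\lambda_{\max}(\omega)+\lambda_{\min}(\omega)\leq 1$. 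Both choices land on the same constant $t=1+\|\rho-\omega\|_\infty/\big(\lambda_{\min}(\omega)\lambda_{\min}(\sigma)\big)$, but the paper's trace-norm trick dissolves the ``constant-chasing'' obstacle you flag as the main difficulty, so it is worth adopting if you want to streamline that step.
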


\begin{proof}
	In somewhat of a variation of the previous theorem, consider the linear map (here $\eps\in [0,1]$)
	\begin{align}
		\mathcal{E}: X \mapsto (1-\eps) X + \eps \tau \quad \textrm{where} \quad \tau = \omega + \frac{1}{\eps} (\rho - \omega) ,
	\end{align}
	Again, the condition $\eps\geq1-2^{-D_{\max}
	(\omega\|\rho)}$ is equivalent to $\tau\geq 0$ which ensures that $\mathcal{E}$ is CPTP.
	Clearly then $\mathcal{E}(\omega) = \rho$. Moreover, we would like that
	\begin{align}
		(1-\nu) \mathcal{E}(\sigma) + \nu \kappa = \sigma ,
	\end{align}
	where $\nu \in (0,1)$ and $\kappa \in \md(A)$ is a state still to be defined. Solving for $\kappa$ yields
	\begin{align}
		\kappa = \sigma + \left( \frac{\eps}{\nu} - \eps \right) (\sigma - \tau)  
	\end{align}
	which is positive semi-definite iff
	\begin{equation}\label{c1}
	(1-\nu)^{-1}\sigma\geq (1-\epsilon)\sigma+\rho-(1-\epsilon)\omega\;.
	\end{equation}
	 We can think of $\kappa$ fixing the damage done by applying $\mathcal{E}$ on $\sigma$.
	Now we again use Eq.~\eqref{eq1}, additivity and DPI to find
	\begin{align}
		\D(\omega\|\sigma) - \log ( 1 - \nu ) = \D \bigg( \rho \otimes \begin{bmatrix} 1 & 0 \\ 0 & 0 \end{bmatrix} \bigg\| \omega \otimes \begin{bmatrix} 1-\nu & 0 \\ 0 & \nu \end{bmatrix} \bigg) \geq \D(\rho\|\sigma)
	\end{align}
	where we use a channel that acts as $\mathcal{E}$ when measuring $[1,0]$ in the second register and outputs $\kappa$ when measuring $[0,1]$. Therefore, taking the smallest possible value of $t:=(1-\nu)^{-1}$ under the constraint~\eqref{c1} and the condition that $s:=1-\epsilon\leq 2^{-D_{\max}(\omega\|\rho)}$ gives
	\begin{align}
	\D(\rho\|\sigma)-\D(\omega\|\sigma)&\leq\log\min\Big\{t\geq 0\;:\;(t-s)\sigma\geq\rho-s\omega\geq 0\;,\;s\geq 0\Big\}\\
	&=\min_{0\leq s\leq 2^{-D_{\max}(\omega\|\rho)}} D_{\max}\left(\rho+s(\sigma-\omega)\big\|\sigma\right)
	\end{align}
	If $\mu\eqdef\lambda_{\min}(\sigma)>0$ we can take $t=1+\frac{1-s}{\mu}$. Note that for this choice of $t$ we have
	\be
	(t-s)\sigma=(1-s)(1+\mu)\frac{\sigma}{\mu}\geq (1-s)(1+\mu)I^A\geq \rho-s\omega
	\ee
	since $\rho-s\omega$ is a subnormalized state with trace $1-s$. Moreover, if $\lambda_{\min}(\omega)\geq\|\rho-\omega\|_{\infty}$ then we can take $s=1-\frac{\|\rho-\omega\|_{\infty}}{\lambda_{\min}(\omega)}$ since in this case $s\leq 2^{-D_{\max}(\omega\|\rho)}$ (or equivalently $\rho\geq s\omega$). 	We therefore get for these choices of $t$ and $s$
	\begin{align}
	\D(\rho\|\sigma)-\D(\omega\|\sigma)\leq \log t=\log \left( 1 + \frac{  \| \rho - \omega \|_{\infty} } {\lambda_{\min}(\omega) \lambda_{\min}(\sigma) }  \right)	\;.\end{align}
	This completes the proof.
	\end{proof}

\section{Faithfulness of quantum divergences}
\begin{definition*}
A quantum divergence, $\D$, is said to be \emph{faithful} if for any $\rho,\sigma\in\md(A)$, the condition $\D(\rho\|\sigma)=0$ implies $\rho=\sigma$. 
\end{definition*}
\begin{theorem*}
Let $\D$ be a quantum divergence. Then, $\D$ is faithful if and only if its reduction to classical (diagonal) states is faithful.
\end{theorem*}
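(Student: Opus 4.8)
The ``only if'' direction is immediate: classical (diagonal) states are in particular quantum states, so faithfulness of $\D$ on all of $\md(A)$ restricts to faithfulness on diagonal pairs. The content lies entirely in the converse, which I would prove by contraposition, using only the data-processing inequality and the nonnegativity of $\D$.

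Assume $\D$ is faithful on classical states and suppose $\rho,\sigma\in\md(A)$ with $\rho\neq\sigma$; the goal is to show $\D(\rho\|\sigma)>0$. Since $\rho-\sigma$ is a nonzero traceless Hermitian operator, let $P$ be the projector onto its strictly positive part; then $\tr[P\rho]>\tr[P\sigma]$, so the binary measurement $\{P,\1-P\}$ produces genuinely different statistics on $\rho$ and on $\sigma$. Encode this measurement as a measure-and-prepare channel $\mE\in\cptp(A\to X)$ with $|X|=2$,
$$\mE(\omega)\eqdef\tr[P\omega]\,|0\lr 0|+\tr[(\1-P)\omega]\,|1\lr 1|,$$
which is manifestly CPTP and whose outputs are diagonal, i.e.\ classical. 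By construction $\mE(\rho)\neq\mE(\sigma)$.

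Now apply the hypotheses in sequence. Faithfulness of $\D$ on classical states gives $\D\big(\mE(\rho)\|\mE(\sigma)\big)>0$: it cannot vanish since the two classical states differ, and it is nonnegative by the definition of a divergence. The data-processing inequality then yields $\D(\rho\|\sigma)\ge\D\big(\mE(\rho)\|\mE(\sigma)\big)>0$, so $\D(\rho\|\sigma)\neq 0$. By contraposition, $\D(\rho\|\sigma)=0$ forces $\rho=\sigma$, establishing faithfulness of $\D$ on all of $\md(A)$.

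I do not anticipate a genuine obstacle here: the argument rests on the elementary fact that distinct density matrices are always distinguished by some two-outcome measurement (equivalently, that the trace distance separates states), combined with DPI and nonnegativity. The only point deserving care is to ensure that the distinguishing map is a legitimate element of $\cptp(A\to X)$ producing genuinely classical (diagonal) outputs lying in the domain on which classical faithfulness is assumed---which the measure-and-prepare form above guarantees.
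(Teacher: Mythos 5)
Your proof is correct and follows essentially the same route as the paper's: both reduce the question to classical states by applying a quantum-to-classical CPTP map that separates $\rho$ from $\sigma$, then combine the data-processing inequality with the assumed classical faithfulness. The only difference is cosmetic---the paper dephases in a basis where the diagonals of $\rho$ and $\sigma$ differ, whereas you use a two-outcome measurement built from the projector onto the positive part of $\rho-\sigma$ (which, incidentally, shows that faithfulness on binary classical distributions already suffices).
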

\begin{proof}
Clearly, if $\D$ is faithful on quantum states it is also faithful on classical states as the latter is a subset of the former. Suppose now that $\D$ is faithful on classical states, and suppose by contradiction that there exists $\rho\neq\sigma\in\md(A)$ such that $\D(\rho\|\sigma)=0$. Then, there exists a basis of $A$ such that the diagonal of $\rho$ in this basis does not equal to the diagonal of $\sigma$. Let $\Delta\in\cptp(A\to A)$ be the completely dephasing channel in this basis. Then, $\Delta(\rho)\neq\Delta(\sigma)$ and we get
\be
\D\big(\Delta(\rho)\big\|\Delta(\sigma)\big)\leq\D(\rho\|\sigma)=0\;.
\ee
But since $\D$ is faithful on diagonal states we get the contradiction that $\Delta(\rho)=\Delta(\sigma)$.
Hence, $\D$ is faithful also on quantum states.
\end{proof} 

When combining the above lemma with the condition on faithfulness given in Theorem~17 of~\cite{GT2020a}  we get that almost all quantum relative entropies are faithful.

\textbf{Corollary.}
{\it Let $\D$ be a quantum divergence (not necessarily additive) that on classical systems reduces to a classical relative entropy (i.e. additive classical divergence). Then, $\D$ is faithful if and only if there exists classical system $X$ and  two classical states $\p,\q\in\md(X)$ with the same support such that
\be
\D(\p\|\q)\neq 0\;.
\ee}

The corollary above (cf. Theorem 17 in~\cite{GT2020a}) demonstrates that if $\D$ is \emph{not} faithful then it must be zero on all classical states with the same support.
$D_{\min}$ is an example of such non-faithful divergence. More details on non-faithful classical relative entropies can be found in~\cite{GT2020a}.

\section{Proof of Theorem~\ref{thmmax} and the maximal extension of R\'enyi divergences}

The maximal extension in~\eqref{mme} of a divergence $\D_1$ can be expressed as
\be\label{simom}
\bbd_1(\rho\|\sigma)=\inf_{|X|\in\mbb{N}}\Big\{\D_1(\p\|\q)\;:\;\p,\q\in\md(X)\;\;,\;\; \rho=\sum p_x\omega_x\;\;,\;\;\sigma=\sum q_x\omega_x\;\;,\;\;\{\omega_x\}\subset\md(A)\Big\}\;.
\ee
We use this expression to prove Theorem~\ref{thmmax}.

\begin{theorem*}[Theorem~\ref{thmmax} of the main text]
Let $\D_1:\md(X)\times\md(X)\to\mbb{R}$ be a classical divergence, and let $\bbd_1:\md(A)\times\md(A)\to\mbb{R}$ be its maximal  extension to quantum states. Then, for any pure state $\psi\eqdef |\psi\lr\psi|\in\md(A)$ and any mixed state $\sigma\in\md(A)$ we have
\be
{\bf \bD}_1(\psi\big\|\sigma)=D_{\max}(\psi\big\|\sigma)=\log\la\psi|\sigma^{-1}|\psi\ra\;.
\ee 
\end{theorem*}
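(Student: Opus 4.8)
The plan is to prove the two inequalities $\bbd_1(\psi\|\sigma)\le D_{\max}(\psi\|\sigma)$ and $\bbd_1(\psi\|\sigma)\ge D_{\max}(\psi\|\sigma)$ separately, working throughout from the decomposition formula~\eqref{simom} for the maximal extension, and then to invoke the standard identity $D_{\max}(\psi\|\sigma)=\log\la\psi|\sigma^{-1}|\psi\ra$ for the final equality. The single structural fact that drives everything is that $\psi$ is rank one: in any decomposition $\psi=\sum_x p_x\omega_x$ with $\omega_x\in\md(A)$ and $p_x\ge0$, every $\omega_x$ carrying positive weight $p_x>0$ must be supported on $\spa\{|\psi\ra\}$ and hence equals $\psi$.

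For the upper bound I would exhibit a single binary decomposition that already achieves $D_{\max}$. Set $t\eqdef 2^{D_{\max}(\psi\|\sigma)}$, so that $t\sigma\ge\psi$ and $t\sigma-\psi\ge0$ with $\tr[t\sigma-\psi]=t-1$. Taking $\omega_0\eqdef\psi$ and $\omega_1\eqdef\frac{1}{t-1}(t\sigma-\psi)\in\md(A)$, together with $\p=(1,0)$ and $\q=(\tfrac1t,\tfrac{t-1}{t})$, one checks directly that $\sum_x p_x\omega_x=\psi$ and $\sum_x q_x\omega_x=\sigma$, so this pair is feasible in~\eqref{simom}. Its cost is $\D_1(\p\|\q)=\D_1\big(|0\lr0|\,\big\|\,\tfrac1t|0\lr0|+\tfrac{t-1}{t}|1\lr1|\big)=\log t$, where the last evaluation uses~\eqref{eq1} (Lemma~5 of~\cite{GT2020a}) with $1-\eps=1/t$. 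Hence $\bbd_1(\psi\|\sigma)\le\log t=D_{\max}(\psi\|\sigma)$. The degenerate case $t=1$, i.e.\ $\psi=\sigma$, is handled by the trivial one-element decomposition.

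For the lower bound I would bound an arbitrary feasible decomposition from below. Let $\psi=\sum_x p_x\omega_x$ and $\sigma=\sum_x q_x\omega_x$ be given, and let $S\eqdef\{x:p_x>0\}$, so that $\omega_x=\psi$ for all $x\in S$ by the rank-one observation. Coarse-graining the index $x$ to the bit $\mathds{1}[x\notin S]$ is a classical channel that sends $\p\mapsto(1,0)$ and $\q\mapsto(q_S,1-q_S)$ with $q_S\eqdef\sum_{x\in S}q_x$; hence by the DPI for $\D_1$ and again~\eqref{eq1}, $\D_1(\p\|\q)\ge\D_1\big(|0\lr0|\,\big\|\,q_S|0\lr0|+(1-q_S)|1\lr1|\big)=-\log q_S$. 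Finally, $\sigma=q_S\psi+\sum_{x\notin S}q_x\omega_x\ge q_S\psi$ forces $q_S^{-1}\sigma\ge\psi$, i.e.\ $q_S\le 2^{-D_{\max}(\psi\|\sigma)}$, whence $-\log q_S\ge D_{\max}(\psi\|\sigma)$. Taking the infimum over decompositions gives $\bbd_1(\psi\|\sigma)\ge D_{\max}(\psi\|\sigma)$, and combined with the upper bound this yields the claimed equality.

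The crux of the argument is recognizing that purity collapses every decomposition of $\psi$ onto copies of $\psi$ itself, which decouples the problem into the single scalar $q_S$ measuring the $\psi$-weight of $\sigma$; once this is in hand, the normalization identity~\eqref{eq1} pins the value to $-\log q_S$ and the operator inequality $\sigma\ge q_S\psi$ ties $q_S$ to $D_{\max}$. A secondary point requiring care is the boundary behaviour: the case $t=1$ above, and the case $q_S=0$, where $-\log q_S=+\infty\ge D_{\max}(\psi\|\sigma)$ holds consistently (including when $\supp\psi\not\subseteq\supp\sigma$ and the right-hand side is itself $+\infty$). Note that the argument exploits only the normalization and additivity of $\D_1$ through~\eqref{eq1}, which explains the remarkable feature that the value is independent of which relative entropy $\D_1$ one started from.
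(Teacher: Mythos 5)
Your proposal is correct and follows essentially the same route as the paper's proof: both rest on the rank-one collapse ($\omega_x=\psi$ whenever $p_x>0$), coarse-graining the classical index via a channel and the DPI to reduce to a binary pair, evaluating that pair by the normalization identity of Lemma~5 in~\cite{GT2020a}, and tying the $\psi$-weight $s=q_S$ of $\sigma$ to $D_{\max}$ through the operator inequality $\sigma\geq s\psi$. The only difference is presentational: you split the argument into an explicit achievability construction and a separate converse, whereas the paper packages both directions into the single observation that a feasible completion $\{\omega_x\}_{x\notin S}$ exists if and only if $\sigma\geq s\psi$.
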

\begin{remark}
The theorem above implies that the maximal extension $\bD_\alpha$ of the (classical) R\'enyi divergence cannot be equal to the Petz quantum R\'enyi divergence. Specifically,
\be
D_{\alpha}^{\text{Petz}}(\psi\|\sigma)=\frac{1}{\alpha-1}\log\la\psi|\sigma^{1-\alpha}|\psi\ra\;,
\ee
which in general for $\alpha<2$ is different then $\bD_\alpha(\psi\|\sigma)=D_{\max}(\psi\|\sigma)$, unless $\psi$ and $\sigma$ commutes. On the other hand, for $\alpha=2$, as we will see below, $\bD_{\alpha=2}(\rho\|\sigma)=D_{\alpha=2}^{\rm Petz}(\rho\|\sigma)$ for all $\rho,\sigma\in\md(A)$.
\end{remark}

\begin{proof}
Since $\psi$ is pure, the condition $\psi=\sum_{x}p_{x}\omega_{x}$, can hold only if for all $x$ such that $p_x\neq 0$ we have $\omega_x=\psi$. W.l.o.g. let the $k$ first components of $\p$ be non-zero, while all the remaining components are zero. This implies that the second condition can be expressed as
\be
\sigma=\sum_{x=1}^{k}q_x\psi+\sum_{x=k+1}^{n}q_{x}\omega_{x}\;.
\ee
Denote by $s\eqdef\sum_{x=1}^{k}q_x$, and observe that there exists such $\{\omega_{x}\}_{x=k+1}^{n}$ if and only if 
\be
\sigma\geq s\psi
\ee
or in other words, iff $s^{-1}\geq 2^{D_{\max}(\psi\|\sigma)}$. Consider the classical channel $\mC\in\cptp(X\to X)$ defined by
\be
\mC(|x\lr x|)=|1\lr 1|\;\;\forall x=1,...,k\quad\text{and}\quad\mC(|x\lr x|)=|2\lr 2|\;\;\forall x=k+1,...,n\;.
\ee
Therefore, we must have $\D_1(\p,\q)\geq \D_1\big(\mC(\p),\mC(\q)\big)=\D_1\Big(|1\lr 1|\;,\;s|1\lr 1|+(1-s)|2\lr 2|\Big)$.
This means that w.l.o.g. we can assume that $\p=|1\lr 1|$ and $\q$ is binary; i.e. $\q=s|1\lr 1|+(1-s)|2\lr 2|$ so that 
$\D_1(\p\|\q)=-\log(s)$ (cf.~\eqref{160}). But since we must have $s^{-1}\geq 2^{D_{\max}(\psi\|\sigma)}$, the minimum value is achieved when $s^{-1}= 2^{D_{\max}(\psi\|\sigma)}$. That is, $\D_1(\p\|\q)=D_{\max}(\psi\|\sigma)$. This completes the proof.
\end{proof}

Note that if in the expression~\eqref{simom} of the maximal extension, we denote $r_x=p_x/q_x$ and $E_x\eqdef q_x\sigma^{-1/2}\omega_x\sigma^{-1/2}$, then we get that $\rho$ and $\sigma$ in~\eqref{simom} satisfy
\be
\sigma^{-1/2}\rho\sigma^{-1/2}=\sum_xr_xE_x\quad\quad I=\sum_{x}E_x
\ee
We can therefore express
\be\label{optim}
\bbd_1(\rho\|\sigma)=\inf\Big\{\D_1(\q\circ\r\|\q)\;:\;\sigma^{-\frac12}\rho\sigma^{-\frac12}=\sum_xr_xE_x\;\;,\;\;E_x\geq 0\;\;,\;\;\sum_{x}E_x=I\;\;,\;\;q_x\eqdef\tr[E_x\sigma]\Big\}
\ee
where $\q\circ\r\eqdef(q_1r_1,...,q_nr_n)^T$. Note that the vector $\q\circ\r$ is a probability vector since the constriant
$\sigma^{-\frac12}\rho\sigma^{-\frac12}=\sum_xr_xE_x$ gives
\be
1=\tr[\rho]=\sum_xr_x\tr[\sigma E_x]=\sum_{x}r_xq_x\;.
\ee
In~\cite{Matsumoto2018}, Matsumoto used the above expression of $\bbd$ to show that for certain $f$-divergences, the optimal choice of $\r$ and $E_x$ is to take $E_x=|\psi_x\lr\psi_x|$ where $\{|\psi_x\ra\}$ and $\{r_x\}$ are the eigenvectors and eigenvalues of 
$\sigma^{-1/2}\rho\sigma^{-1/2}$. In particular, this is the optimal choice for R\'enyi entropies with $\alpha\in(0,2]$ so that in this case
\ba
\bD_\alpha(\rho\|\sigma)&=D_\alpha(\q\circ\r\|\q)\\
&=\frac{1}{\alpha-1}\log\sum_{x}(q_xr_x)^\alpha q_{x}^{1-\alpha}=\frac{1}{\alpha-1}\log\sum_{x}q_xr_x^\alpha \\
&=\frac{1}{\alpha-1}\log\sum_{x}\la\psi_x|\sigma|\psi_x\ra\la\psi_x|\left(\sigma^{-\frac12}\rho\sigma^{-\frac12}\right)^\alpha|\psi
_x\ra\\
&=\frac{1}{\alpha-1}\log\tr\left[\sigma\left(\sigma^{-\frac12}\rho\sigma^{-\frac12}\right)^\alpha\right]\;.
\ea
Note that for $\alpha=2$ we get $\bD_2(\rho\|\sigma)=D_{2}^{\rm Petz}(\rho\|\sigma)$. Remarkably, the above closed formula for $\bD_\alpha$ demonstrates that $\bD_\alpha$ is additive under tensor product and hence it is a relative entropy. However, the above formula only holds for $\alpha\in(0,2]$. For $\alpha>2$ the optimizer in~\eqref{optim} is not given by the eigenvalues and eigenvectors of $\sigma^{-1/2}\rho\sigma^{-1/2}$. Instead, another set of $\{r_x,\;E_x\}$ is the optimizer of~\eqref{optim}. We therefore conclude that for $\alpha>2$ we have 
\be
\bD_\alpha(\rho\|\sigma)\leq \frac{1}{\alpha-1}\tr\left[\sigma\left(\sigma^{-\frac12}\rho\sigma^{-\frac12}\right)^\alpha\right]
\ee
where the inequality is strict for at least some choices of $\rho$ and $\sigma$. 

\section{Uniqueness of the Umegaki  relative entropy}

We show here that the framework for extensions developed here can be used to single out the Umegaki relative entropy as the \emph{only} relative entropy that is asymptotically continuous. This result was first proven in~\cite{Matsumoto2018b} and we provide here an alternative proof.
We say that a relative entropy $\D$ is asymptotically continuous if there exists a continuous function $f:[0,1]\to\mbb{R}_+$ such that $f(0)=0$ and for all $\rho,\rho',\sigma\in\md(A)$, with $\supp(\rho)\subseteq\supp(\sigma)$ and $\supp(\rho')\subseteq\supp(\sigma)$
\be\label{asy}
\left|\D(\rho\|\sigma)-\D(\rho'\|\sigma)\right|\leq f(\epsilon)\log\|\sigma^{-1}\|_1
\ee
where $\epsilon\eqdef\frac{1}{2}\|\rho-\rho'\|_1$. We emphasize that $f$ is independent of $|A|$.

\begin{theorem}\label{uniqueness}
Let $\D$ be a relative entropy that is asymptotically continuous. Then, for all $A$ and all $\rho,\sigma\in\md(A)$,
\be
\D(\rho\|\sigma)=D(\rho\|\sigma)\eqdef\tr[\rho\log\rho]-\tr[\rho\log\sigma]\;.
\ee
\end{theorem}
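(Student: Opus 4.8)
\emph{Plan.} I would prove the two bounds $\D(\rho\|\sigma)\le D(\rho\|\sigma)$ and $\D(\rho\|\sigma)\ge D(\rho\|\sigma)$ separately, in each case sandwiching $\D$ between the min/max relative entropies of Theorem~\ref{thmmm} and then closing the gap by regularising over many copies. By additivity, $\D(\rho\|\sigma)=\tfrac1n\D(\rho^{\otimes n}\|\sigma^{\otimes n})$, so it suffices to control $\tfrac1n$ of the sandwiching bounds on $\rho^{\otimes n}$. The \emph{unsmoothed} bounds are useless here: both $D_{\min}$ and $D_{\max}$ are additive, so $\tfrac1n D_{\max}(\rho^{\otimes n}\|\sigma^{\otimes n})=D_{\max}(\rho\|\sigma)$ (and likewise for $D_{\min}$), which does not equal $D$. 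This is exactly what forces the use of asymptotic continuity: it lets me replace $\rho^{\otimes n}$ by a \emph{smoothed} nearby state, on which the max- (resp.\ min-) relative entropy \emph{does} regularise to $D$, at the cost of an error bounded by $f(\eps)\log\|(\sigma^{\otimes n})^{-1}\|_1=f(\eps)\,n\log\|\sigma^{-1}\|_1$, where I used $\|(\sigma^{\otimes n})^{-1}\|_1=\|\sigma^{-1}\|_1^{\,n}$. I first treat the generic case $\supp(\rho)\subseteq\supp(\sigma)$.

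\textbf{Upper bound.} Fix $\eps\in(0,1)$ and let $\rho_n'$ be a density operator supported on $\supp(\sigma^{\otimes n})$ with $\tfrac12\|\rho^{\otimes n}-\rho_n'\|_1\le\eps$ attaining the smoothed max-relative entropy $D_{\max}^{\eps}(\rho^{\otimes n}\|\sigma^{\otimes n})$. Combining additivity, asymptotic continuity in the first argument, and Theorem~\ref{thmmm} applied to $\rho_n'$,
\begin{align}
n\,\D(\rho\|\sigma)=\D(\rho^{\otimes n}\|\sigma^{\otimes n})
&\le \D(\rho_n'\|\sigma^{\otimes n})+f(\eps)\,n\log\|\sigma^{-1}\|_1\\
&\le D_{\max}^{\eps}(\rho^{\otimes n}\|\sigma^{\otimes n})+f(\eps)\,n\log\|\sigma^{-1}\|_1 .
\end{align}
Dividing by $n$ and letting $n\to\infty$, the quantum asymptotic equipartition property (AEP) gives $\tfrac1n D_{\max}^{\eps}(\rho^{\otimes n}\|\sigma^{\otimes n})\to D(\rho\|\sigma)$, so $\D(\rho\|\sigma)\le D(\rho\|\sigma)+f(\eps)\log\|\sigma^{-1}\|_1$; letting $\eps\to0$ (so $f(\eps)\to0$) yields $\D(\rho\|\sigma)\le D(\rho\|\sigma)$.

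\textbf{Lower bound.} Symmetrically, let $\rho_n''$ (supported on $\supp(\sigma^{\otimes n})$, within trace distance $2\eps$ of $\rho^{\otimes n}$) attain the smoothed min-relative entropy $D_{\min}^{\eps}(\rho^{\otimes n}\|\sigma^{\otimes n})$. Theorem~\ref{thmmm} gives $\D(\rho_n''\|\sigma^{\otimes n})\ge D_{\min}(\rho_n''\|\sigma^{\otimes n})=D_{\min}^{\eps}(\rho^{\otimes n}\|\sigma^{\otimes n})$, and asymptotic continuity gives $\D(\rho^{\otimes n}\|\sigma^{\otimes n})\ge \D(\rho_n''\|\sigma^{\otimes n})-f(\eps)\,n\log\|\sigma^{-1}\|_1$. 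Using additivity, dividing by $n$, invoking the AEP (equivalently, the quantum Stein lemma) that $\tfrac1n D_{\min}^{\eps}(\rho^{\otimes n}\|\sigma^{\otimes n})\to D(\rho\|\sigma)$, and then letting $\eps\to0$, gives $\D(\rho\|\sigma)\ge D(\rho\|\sigma)$. Together with the upper bound this proves $\D(\rho\|\sigma)=D(\rho\|\sigma)$ whenever $\supp(\rho)\subseteq\supp(\sigma)$.

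\textbf{Main obstacle and remaining case.} The heart of the argument — and the only genuinely deep external input — is the quantum AEP, namely that the smoothed max- and min-relative entropies both regularise, per copy, to the Umegaki relative entropy; everything else is bookkeeping. Two points nonetheless require care: the smoothing must range over states supported inside $\supp(\sigma^{\otimes n})$, so that the max-relative entropy is finite and asymptotic continuity applies; and one must check that the per-copy rate is insensitive to the fixed smoothing radius $\eps$, so that the residual error $f(\eps)\log\|\sigma^{-1}\|_1$ can be sent to zero only \emph{after} the limit in $n$. Finally, the singular case $\supp(\rho)\not\subseteq\supp(\sigma)$, where $D(\rho\|\sigma)=+\infty$, is not covered by asymptotic continuity and is handled separately: measuring $\{\Pi_\sigma,\mathds{1}-\Pi_\sigma\}$ and invoking the DPI reduces it to a classical pair with a support mismatch, and the triangle inequality $\D(\rho\|\sigma)\le\D(\rho\|\omega)+D_{\max}(\omega\|\sigma)$, combined with the equality $\D=D$ already established whenever the second argument has full support, forces $\D=+\infty=D$ there as well.
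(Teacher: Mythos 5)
Your proof is correct and is essentially the paper's own argument: additivity gives $\D(\rho^{\otimes n}\|\sigma^{\otimes n})=n\,\D(\rho\|\sigma)$, asymptotic continuity lets you swap $\rho^{\otimes n}$ for a smoothed state at per-copy cost $f(\eps)\log\|\sigma^{-1}\|_1$, the sandwich $D_{\min}\le\D\le D_{\max}$ of Theorem~\ref{thmmm} is applied to the smoothed state, and the smoothed max/min relative entropies regularize per copy to the Umegaki quantity $D(\rho\|\sigma)$. The only substantive differences are that the paper does not take the min-side convergence off the shelf but proves the needed direction $\liminf_n\tfrac1n D_{\min}^{\eps}(\rho^{\otimes n}\|\sigma^{\otimes n})\ge D(\rho\|\sigma)$ via the information-spectrum quantity, the hypothesis-testing divergence, the gentle measurement lemma, and quantum Stein's lemma (the step you black-box as ``AEP''), while your extra treatment of the singular case $\supp(\rho)\not\subseteq\supp(\sigma)$ (which the paper's proof silently omits) is a genuine addition but needs the triangle inequality in the transposed form $\D(\rho\|\omega)\le\D(\rho\|\sigma)+D_{\max}(\sigma\|\omega)$, so that $\D(\rho\|\sigma)\ge D(\rho\|\omega)-D_{\max}(\sigma\|\omega)\to\infty$ as $\omega\to\sigma$ through full-rank states, rather than the direction you quoted, which only bounds $\D(\rho\|\sigma)$ from above.
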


\begin{remark}
In~\cite{WGE2017} the uniqueness of the quantum relative entropy was established for a slightly different approach in which the asymptotic continuity~\eqref{asy} is replaced with continuity in the first argument, and in addition super-additivity is assumed.
Characterization of the von Neumann entropy in terms of correlated catalysts was also studied in~\cite{Muller2018,BEG2019} (see also~\cite{RW2019} for characterization of the Kullback–Leibler divergence in terms of a type of relative majorization).
Earlier approaches based on unique measure of volume were studied in~\cite{Hall1999,Hall2000}.
\end{remark}

The asymptotic continuity of the Umegaki relative entropy can be characterized as
\be
\left|D(\rho\|\sigma)-D(\rho'\|\sigma)\right|\leq\epsilon\log\|\sigma^{-1}\|_{\infty}+(1+\epsilon)h\left(\frac{\epsilon}{1+\epsilon}\right)
\quad\quad\forall\;\rho,\rho',\sigma\in\md(A)\;,
\ee
where $\epsilon\eqdef\frac{1}{2}\|\rho-\rho'\|_1$ and $h(x)\eqdef-x\log (x)-(1-x)\log(1-x)$ is the binary Shannon entropy. 

\begin{lemma*}
Let $\rho,\sigma\in\md(A)$ with $\supp(\rho)\subseteq\supp(\sigma)$ and let $\D$ be a quantum relative entropy satisfying~\eqref{asy} (i.e. $\D$ is asymptotically continuous).
Then,
\ba\label{gg}
\D(\rho\|\sigma)&=\lim_{\epsilon\to 0^+}\liminf_{n\to\infty}\inf_{\rho_{n}'\in\mb_\epsilon\left(\rho^{\otimes n}\right)}\frac{1}{n}\D\left(\rho_{n}'\big\|\sigma^{\otimes n}\right)\\
&=\lim_{\epsilon\to 0^+}\limsup_{n\to\infty}\sup_{\rho_{n}'\in\mb_\epsilon\left(\rho^{\otimes n}\right)}\frac{1}{n}\D\left(\rho_{n}'\big\|\sigma^{\otimes n}\right)
\ea
where for any system $A$ and any $\omega\in\md(A)$
\be
\mb_\epsilon\left(\omega\right)\eqdef\Big\{\sigma\in\md(A)\;:\;\frac12\|\omega-\sigma\|_1\leq\epsilon\Big\}
\ee
\end{lemma*}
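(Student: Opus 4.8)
The plan is to squeeze both the lower smoothed-regularized quantity (the $\liminf$ of the infimum) and the upper one (the $\limsup$ of the supremum) onto $\D(\rho\|\sigma)$, exploiting three facts: additivity of $\D$, the asymptotic-continuity bound~\eqref{asy} with its \emph{dimension-independent} function $f$, and the elementary identity $\log\|(\sigma^{\otimes n})^{-1}\|_1=n\log\|\sigma^{-1}\|_1$. The first, trivial, observation is that $\rho^{\otimes n}\in\mb_\epsilon(\rho^{\otimes n})$ for every $\epsilon\ge 0$, so by additivity the value $\tfrac1n\D(\rho^{\otimes n}\|\sigma^{\otimes n})=\D(\rho\|\sigma)$ sits between the infimum and the supremum for every $n$ and $\epsilon$. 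Hence the inf-line is always $\le\D(\rho\|\sigma)$ and the sup-line always $\ge\D(\rho\|\sigma)$, and it remains to establish the two reverse estimates.

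The core estimate is this: for any $\rho_n'\in\mb_\epsilon(\rho^{\otimes n})$ with $\supp(\rho_n')\subseteq\supp(\sigma^{\otimes n})$, I would apply~\eqref{asy} to the pair $\rho_n',\rho^{\otimes n}$ (both supported in $\sigma^{\otimes n}$) with reference $\sigma^{\otimes n}$. Assuming without loss of generality that $f$ is non-decreasing (replace it by its running supremum, which is still continuous and vanishes at $0$), and using $\tfrac12\|\rho_n'-\rho^{\otimes n}\|_1\le\epsilon$, $\log\|(\sigma^{\otimes n})^{-1}\|_1=n\log\|\sigma^{-1}\|_1$, together with additivity $\D(\rho^{\otimes n}\|\sigma^{\otimes n})=n\D(\rho\|\sigma)$, dividing by $n$ gives
\begin{equation}
\Big|\tfrac1n\D(\rho_n'\|\sigma^{\otimes n})-\D(\rho\|\sigma)\Big|\le f(\epsilon)\log\|\sigma^{-1}\|_1 .
\end{equation}
The crucial point is that the right-hand side is independent of $n$ and of the particular $\rho_n'$. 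If $\sigma$ has full support, every element of the ball is support-contained, so this bound applies uniformly; taking $\liminf_n$ / $\limsup_n$ (the bound carries no $n$) and then $\epsilon\to 0^+$, where $f(\epsilon)\to f(0)=0$, collapses both lines onto $\D(\rho\|\sigma)$ and proves the lemma in this case.

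The remaining difficulty, and the main obstacle, is that when $\sigma$ is rank-deficient a generic $\rho_n'$ in the ball may have support leaving $\supp(\sigma^{\otimes n})$, so~\eqref{asy} does not apply. Writing $\Pi\eqdef\Pi_\sigma^{\otimes n}$, contractivity of the trace norm bounds the escaping weight, $\tr[(I-\Pi)\rho_n']=\tr[(I-\Pi)(\rho_n'-\rho^{\otimes n})]\le 2\epsilon$. For the inf-line I would push this weight back in by the channel $\mN(X)\eqdef\Pi X\Pi+\tr[(I-\Pi)X]\,\phi$, with $\phi$ a fixed pure state obeying $\supp(\phi)\subseteq\supp(\sigma^{\otimes n})$; this $\mN$ is CPTP, fixes both $\sigma^{\otimes n}$ and $\rho^{\otimes n}$, and maps $\rho_n'$ to a support-contained $\hat\rho_n'=\mN(\rho_n')$ that remains in $\mb_\epsilon(\rho^{\otimes n})$ because $\|\hat\rho_n'-\rho^{\otimes n}\|_1\le\|\rho_n'-\rho^{\otimes n}\|_1$. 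The DPI then yields $\D(\rho_n'\|\sigma^{\otimes n})\ge\D(\hat\rho_n'\|\sigma^{\otimes n})$, and the core estimate applied to $\hat\rho_n'$ gives $\tfrac1n\D(\rho_n'\|\sigma^{\otimes n})\ge\D(\rho\|\sigma)-f(\epsilon)\log\|\sigma^{-1}\|_1$ for every ball element; taking the infimum and then the limits completes the lower bound for arbitrary $\sigma$. The supremum-line is the genuinely delicate one: the same channel argument only bounds out-of-support states from \emph{below}, and on a support-mismatched pair $\D$ may legitimately diverge, so the $\limsup$ is not automatically controlled. I would circumvent this by proving the supremum identity first for full-support $\sigma$ (where the uniform core estimate already suffices, as above) and then recovering the general statement in the subsequent uniqueness theorem through the continuity of $\D$ in its second argument established earlier in this supplement, letting a full-rank regularization $\sigma_\eta\to\sigma$.
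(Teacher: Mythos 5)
Your core estimate is precisely the paper's entire proof: the paper applies~\eqref{asy} to the $n$-copy triple $(\rho_n',\rho^{\otimes n},\sigma^{\otimes n})$, uses additivity to write $\D(\rho^{\otimes n}\|\sigma^{\otimes n})=n\D(\rho\|\sigma)$ together with the multiplicativity of the norm of $\sigma^{-1}$ under tensor powers, divides by $n$, and takes $\liminf$/$\limsup$ followed by $\epsilon\to 0^+$ --- nothing more. So in the full-rank case your argument and the paper's coincide (your replacement of $f$ by its running supremum is a harmless refinement the paper skips, and your observation that $\rho^{\otimes n}$ itself lies in the ball, pinning $\D(\rho\|\sigma)$ between the two lines, is implicit in the paper's use of the absolute value).

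Where you genuinely go beyond the paper is the support analysis, and here your instincts are right on both counts. The paper's one-line proof silently applies~\eqref{asy} to every $\rho_n'\in\mb_\epsilon(\rho^{\otimes n})$, which is only legitimate when $\supp(\rho_n')\subseteq\supp(\sigma^{\otimes n})$; your repair of the infimum line via the CPTP map $\mN(X)=\Pi X\Pi+\tr[(I-\Pi)X]\,\phi$ (which fixes $\rho^{\otimes n}$ and $\sigma^{\otimes n}$, maps the ball into itself by trace-norm contractivity, and can only decrease $\D$ by the DPI) is correct and does not appear in the paper. Your worry about the supremum line is also well founded: for rank-deficient $\sigma$ the lemma as literally stated fails. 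Indeed, take $\D$ to be the Umegaki relative entropy (which the paper itself asserts satisfies~\eqref{asy}) and $\rho_n'=(1-\epsilon)\rho^{\otimes n}+\epsilon\tau_n$ with $\tau_n$ orthogonal to $\supp(\sigma^{\otimes n})$; then $\D(\rho_n'\|\sigma^{\otimes n})=+\infty$, so the supremum line is $+\infty$ for every $\epsilon>0$ while $\D(\rho\|\sigma)$ is finite. The statement must therefore be read with $\mb_\epsilon(\rho^{\otimes n})$ implicitly intersected with the states supported on $\supp(\sigma^{\otimes n})$, or with $\sigma$ assumed full rank --- and under either reading your uniform core estimate already settles both lines, making your deferred regularization $\sigma_\eta\to\sigma$ unnecessary. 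In short: your proof is correct wherever the lemma is correct, it follows the same route as the paper, and the residual difficulty you flag is a gap in the lemma's formulation (inherited by the paper's own proof), not in your argument.
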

\begin{proof}
Applying~\eqref{asy} to $n$ copies with $\frac{1}{2}\|\rho_{n}'-\rho^{\otimes n}\|_1\leq\epsilon$  for some fixed $\epsilon>0$ gives
\be
\left|\D(\rho\|\sigma)-\frac{1}{n}\D(\rho_{n}'\|\sigma^{\otimes n})\right|\leq f(\epsilon)\log\|\sigma^{-1}\|_{\infty}
\ee
Therefore, by taking the $\liminf_{n\to\infty}$ or $\limsup_{n\to\infty}$ on both sides of the equation above followed by $\lim_{\epsilon\to 0^+}$ completes the proof.
\end{proof}

\begin{theorem*}
The Umegaki relative entropy is the only quantum relative entropy satisfying~\eqref{gg}.
\end{theorem*}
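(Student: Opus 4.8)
The plan is to prove both halves of the ``only'' statement: first that the Umegaki relative entropy $D$ does satisfy~\eqref{gg}, and then that~\eqref{gg} forces any quantum relative entropy $\D$ to equal $D$ pointwise. The first half is immediate from the asymptotic-continuity estimate for $D$ recorded just above, which shows that $D$ meets the hypothesis of the preceding Lemma; hence $D$ obeys~\eqref{gg}. For the second half the engine is the sandwich of Theorem~\ref{thmmm}, applied not to $(\rho,\sigma)$ directly but to the pairs $(\rho_n',\sigma^{\otimes n})$ appearing on the right-hand side of~\eqref{gg}: for each such pair one has $D_{\min}(\rho_n'\|\sigma^{\otimes n})\le\D(\rho_n'\|\sigma^{\otimes n})\le D_{\max}(\rho_n'\|\sigma^{\otimes n})$.

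First I would establish $\D\le D$. Starting from the $\liminf$--$\inf$ line of~\eqref{gg} and replacing $\D$ by the pointwise-larger $D_{\max}$ inside the optimization gives
\be
\D(\rho\|\sigma)\le\lim_{\epsilon\to 0^+}\liminf_{n\to\infty}\inf_{\rho_n'\in\mb_\epsilon(\rho^{\otimes n})}\frac1n D_{\max}\big(\rho_n'\big\|\sigma^{\otimes n}\big)\;.
\ee
The inner quantity is exactly the $\epsilon$-smoothed max-relative entropy of $\rho^{\otimes n}$ against $\sigma^{\otimes n}$, whose regularization is $D(\rho\|\sigma)$ by the fully quantum asymptotic equipartition property. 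Hence $\D(\rho\|\sigma)\le D(\rho\|\sigma)$.

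Symmetrically I would establish $\D\ge D$ from the $\limsup$--$\sup$ line of~\eqref{gg}, now replacing $\D$ by the pointwise-smaller $D_{\min}$:
\be
\D(\rho\|\sigma)\ge\lim_{\epsilon\to 0^+}\limsup_{n\to\infty}\sup_{\rho_n'\in\mb_\epsilon(\rho^{\otimes n})}\frac1n D_{\min}\big(\rho_n'\big\|\sigma^{\otimes n}\big)\;.
\ee
Since $D_{\min}(\rho'\|\sigma)=-\log\tr[\Pi_{\rho'}\sigma]$, maximizing over $\rho'$ in the ball amounts to finding a high-weight subspace of smallest $\sigma^{\otimes n}$-mass; the achievability part of quantum Stein's lemma supplies projectors with $\tr[\Pi_n\rho^{\otimes n}]\to 1$ and $\tr[\Pi_n\sigma^{\otimes n}]\le 2^{-n(D(\rho\|\sigma)-\delta)}$, while the strong converse caps the gain, so this regularized smoothed quantity is again $D(\rho\|\sigma)$. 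Combining the two bounds yields $D(\rho\|\sigma)\le\D(\rho\|\sigma)\le D(\rho\|\sigma)$, i.e.\ $\D=D$, which gives uniqueness.

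The main obstacle is the pair of asymptotic statements feeding the sandwich: the AEP for the smoothed max-relative entropy and the Stein-type regularization of the smoothed $D_{\min}$. Both are known, so the real work is to invoke them with the correct smoothing convention (over the first argument, in trace distance on normalized states) and to check that the $\epsilon\to 0$ limit commutes with the $n\to\infty$ regularization, which holds because both are monotone in $\epsilon$. A minor point worth a remark is the self-referential appearance of $\D$ in~\eqref{gg}; this causes no circularity because Theorem~\ref{thmmm} is applied to the concrete pairs $(\rho_n',\sigma^{\otimes n})$ on the right-hand side, not to $\D$ abstractly.
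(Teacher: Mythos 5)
Your proposal is correct and follows essentially the same route as the paper: both directions use the sandwich $D_{\min}\leq\D\leq D_{\max}$ of Theorem~\ref{thmmm} applied inside the optimizations of~\eqref{gg}, with the quantum AEP identifying the regularized smoothed $D_{\max}$ with $D$, and Stein's-lemma achievability (the paper's final lemma) bounding the regularized smoothed $D_{\min}$ from below by $D$. Your additional explicit remark that Umegaki itself satisfies~\eqref{gg} via asymptotic continuity matches what the paper handles through its preceding lemma and remark.
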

\begin{remark}
Note that the theorem above implies Theorem~\ref{uniqueness} of the main text since the lemma above states that~\eqref{asy} implies~\eqref{gg}. Therefore, the Umegaki relative entropy is the only asymptotically continuous relative entropy.
\end{remark}

\begin{proof}
Let $\D(\rho\|\sigma)$ be a divergence satisfying~\eqref{gg}. Therefore, 
\ba
\D(\rho\|\sigma)&=\lim_{\epsilon\to 0^+}\liminf_{n\to\infty}\inf_{\rho_{n}'\in\mb_\epsilon\left(\rho^{\otimes n}\right)}\frac{1}{n}\D\left(\rho_{n}'\big\|\sigma^{\otimes n}\right)\\
&\leq\lim_{\epsilon\to 0^+}\liminf_{n\to\infty}\inf_{\rho_{n}'\in\mb_\epsilon\left(\rho^{\otimes n}\right)}\frac{1}{n}D_{\max}\left(\rho_{n}'\big\|\sigma^{\otimes n}\right)\\
&=\lim_{\epsilon\to 0^+}\liminf_{n\to\infty}\frac{1}{n}D_{\max}^{\epsilon}\left(\rho^{\otimes n}\big\|\sigma^{\otimes n}\right)\\
&=D(\rho\|\sigma)\;,
\ea
where the inequality follows from~\eqref{32}, and the last equality from the asymptotic equipartition property. 
Conversely, from the lower bound in~\eqref{32}
\ba
\D(\rho\|\sigma)&=\lim_{\epsilon\to 0^+}\limsup_{n\to\infty}\sup_{\rho_{n}'\in\mb_\epsilon\left(\rho^{\otimes n}\right)}\frac{1}{n}\D\left(\rho_{n}'\big\|\sigma^{\otimes n}\right)\\
&\geq\lim_{\epsilon\to 0^+}\limsup_{n\to\infty}\sup_{\rho_{n}'\in\mb_\epsilon\left(\rho^{\otimes n}\right)}\frac{1}{n}D_{\min}\left(\rho_{n}'\big\|\sigma^{\otimes n}\right)\\
&\geq D(\rho\|\sigma)
\ea
where the last line follows from the lemma below. This completes the proof.
\end{proof}
\begin{lemma*}
Let $\rho,\sigma\in\md(A)$ with $\supp(\rho)\subset\supp(\sigma)$. Define,
\be
D_{\min}^{\epsilon}(\rho\|\sigma)\eqdef\sup_{\rho'\in\mb_\epsilon\left(\rho\right)}D_{\min}\left(\rho'\big\|\sigma\right)
\ee
Then, for any $0<\epsilon<1$,
\be
\lim_{n\to\infty}\frac{1}{n}D_{\min}^{\epsilon}(\rho^{\otimes n}\|\sigma^{\otimes n})\geq D(\rho\|\sigma)\;.
\ee
\end{lemma*}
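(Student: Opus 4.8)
The plan is to reduce the statement to the direct (achievability) part of quantum Stein's lemma, using the gentle measurement lemma as the bridge between the smoothing of $D_{\min}$ and projective hypothesis tests. The starting observation is that $D_{\min}(\rho'\|\sigma)=-\log\tr[\Pi_{\rho'}\sigma]$, so maximizing over $\rho'\in\mb_\epsilon(\rho)$ amounts to searching for a state close to $\rho$ whose support projector has the smallest possible weight against $\sigma$. Since any support projector is itself a legitimate test operator, it suffices to exhibit, for each large $n$, a single good projector that is both nearly aligned with $\rho^{\otimes n}$ and exponentially small against $\sigma^{\otimes n}$.

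First I would invoke the direct part of quantum Stein's lemma: for every $\lambda<D(\rho\|\sigma)$ there is a sequence of projectors $Q_n$ (for instance the projectors onto the positive part of $\rho^{\otimes n}-2^{n\lambda}\sigma^{\otimes n}$) satisfying $\tr[Q_n\rho^{\otimes n}]\to 1$ together with $\tr[Q_n\sigma^{\otimes n}]\le 2^{-n\lambda}$ for all sufficiently large $n$. Second, fixing the smoothing parameter $0<\epsilon<1$, I would choose $n$ large enough that $\tr[Q_n\rho^{\otimes n}]\ge 1-\epsilon^2$ and set $\rho_n'\eqdef Q_n\rho^{\otimes n}Q_n/\tr[Q_n\rho^{\otimes n}]$. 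The gentle measurement lemma then gives $\tfrac12\|\rho^{\otimes n}-\rho_n'\|_1\le\epsilon$, so that $\rho_n'\in\mb_\epsilon(\rho^{\otimes n})$.

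The key step is to transfer the small $\sigma$-weight of $Q_n$ to the support projector of $\rho_n'$. Because $\rho_n'=Q_n\rho^{\otimes n}Q_n/\tr[Q_n\rho^{\otimes n}]$, its range is contained in the range of $Q_n$, whence $\Pi_{\rho_n'}\le Q_n$; since $\sigma^{\otimes n}\ge 0$ this yields $\tr[\Pi_{\rho_n'}\sigma^{\otimes n}]\le\tr[Q_n\sigma^{\otimes n}]\le 2^{-n\lambda}$, and therefore $D_{\min}(\rho_n'\|\sigma^{\otimes n})\ge n\lambda$. By the definition of the smoothed quantity this forces $D_{\min}^\epsilon(\rho^{\otimes n}\|\sigma^{\otimes n})\ge n\lambda$. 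Dividing by $n$, taking $n\to\infty$, and then letting $\lambda\to D(\rho\|\sigma)^-$ gives $\liminf_{n\to\infty}\tfrac1n D_{\min}^\epsilon(\rho^{\otimes n}\|\sigma^{\otimes n})\ge D(\rho\|\sigma)$, which is the claimed bound.

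The main obstacle is precisely this interface between the two notions: one must simultaneously control the trace distance (via gentle measurement, which costs a square root and dictates the choice $\delta=\epsilon^2$ in the acceptance probability) and ensure that the support projector of $\rho_n'$ is dominated by the test $Q_n$, so that $D_{\min}$ inherits the exponentially small $\sigma$-weight. Everything else is a routine application of the achievability of Stein's lemma. I would finally remark that the matching upper bound—and hence the existence of the limit rather than merely the $\liminf$—follows from the asymptotic equipartition property for the smoothed $D_{\min}$, but it is not needed for the inequality used in the uniqueness theorem.
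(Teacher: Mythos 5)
Your proof is correct and follows essentially the same route as the paper's: both hinge on pinching with the Neyman--Pearson projector $Q_n=\{\rho^{\otimes n} > 2^{n\lambda}\sigma^{\otimes n}\}$, the gentle measurement lemma to certify $\rho_n'\in\mb_\epsilon(\rho^{\otimes n})$, the support containment $\Pi_{\rho_n'}\le Q_n$ to push the exponentially small $\sigma$-weight onto $D_{\min}$, and quantum Stein's lemma for the asymptotics. The only difference is organizational: the paper first establishes a one-shot inequality $D_{\min}^{\epsilon}(\rho\|\sigma) \geq D_{h}^{\epsilon^2/4}(\rho\|\sigma) - \log(4/\epsilon^2)$ by routing through the information-spectrum quantity $D_s^{\epsilon}$ and its relation to the hypothesis-testing divergence, and only then regularizes, whereas you invoke the direct (achievability) part of Stein's lemma immediately at the $n$-copy level, which is leaner for the purely asymptotic claim but forgoes the reusable single-copy bound.
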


\begin{proof}
	The proof employs the following auxiliary quantity, related to the information spectrum:
	\begin{align}
	    D_{s}^{\eps}(\rho\|\sigma) := \sup \big\{R \in \mathbb{R} \,\big|\, \tr \big( \rho \{\rho \leq 2^{R} \sigma \} \big) \leq \eps \big\} = \sup \big\{R \in \mathbb{R} \,\big|\, \tr \big( \rho \{\rho > 2^{R} \sigma \} \big) \geq 1- \eps \big\} .
	\end{align}
	It is intimately related to hypothesis testing, e.g.\ we have~\cite[Lemma 12]{tomamichel12}
\begin{align}
    D_{s}^{\eps}(\rho\|\sigma) \leq D_{h}^{\eps}(\rho\|\sigma) 
    \leq D_{s}^{\eps+\delta}(\rho\|\sigma) - \log \delta \label{eq:hypo-info},
\end{align}
where
\be
D_{h}^{\eps}(\rho\|\sigma)\eqdef-\log\min\Big\{\tr[\sigma\Pi]\;:\;0\leq\Pi\leq I\;\;,\;\;\tr[\rho\Pi]\geq 1-\epsilon\Big\}
\ee
is the Hypothesis testing divergence.

	Now take $\lambda = D_s^{\frac{\eps^2}{2}}(\rho\|\sigma)$. Then there exist $\delta > 0$ arbitrarily small, such that
	\begin{align}
		\tr (\rho P) \geq 1-\frac{\eps^2}{2}, \quad \tr (\sigma P) \leq 2^{-\lambda+\delta} \tr (\rho P) \leq 2^{-\lambda+\delta} \quad \textrm{with} \quad P = \big\{ \rho > 2^{\lambda-\frac{\eps^2}{2}} \sigma \big\}
	\end{align}
	We define $\bar{\rho} := \frac{1}{\tr(P\rho)} P \rho P$ and using the gentle measurement lemma and the above we can verify that $\bar{\rho} \in \mb_\epsilon\left(\rho\right)$. Hence, $D_{\min}^{\epsilon}(\rho\|\sigma) \geq - \log \tr \big( \bar{\rho}^0 \sigma\big)$. Now taking advantage of the fact that $\bar{\rho}^0 \leq P$ by definition, we infer that 
		$D_{\min}^{\epsilon}(\rho\|\sigma) \geq - \log \tr \big( P \sigma\big) \geq \lambda - \delta$.
	And since $\delta$ is arbitrarily small, we can use~\eqref{eq:hypo-info} to conclude that
	\begin{align}
	 D_{\min}^{\epsilon}(\rho\|\sigma) \geq D_s^{\frac{\eps^2}{2}}(\rho\|\sigma) \geq D_h^{\frac{\eps^2}{4}}(\rho\|\sigma) - \log \frac{4}{\eps^2}
	\end{align}
	The statement of the lemma now follows from a simple application of the quantum Stein's lemma. 
\end{proof}

\section{Extensions from normalized to subnormalized states}

Sub-normalized states are positive semi-definite matrices with trace less or equal to one. We will denote the set of subnormalized states acting on Hilbert space $A$ by 
\be
\tilde{\md}(A)\eqdef\Big\{\trho\in\pos(A)\;:\;\tr[\trho]\leq 1\Big\}\;.
\ee

One of the properties of quantum channels is that they take normalized states to normalized states. When considering subnormalized states,
all trace non-increasing (TNI) CP maps (including CPTP maps) take sub-normalized states to subnormalized states.
In applications, it is quite often useful to quantify distances between subnormalized states with a function that obeys a monotonicity  property (i.e. data processing inequality) under TNI-CP maps. We start by proving Theorem~\ref{gmain} of the main text, and then discuss its applications.

\begin{theorem*}[detailed version of Theorem~\ref{gmain} in the main text]
Let $\D$ be a quantum divergence and $\bbd$ be its maximal extension to sub-normalized states (see~\eqref{dextd}). For any pair of sub-normalized states $(\trho,\tsigma)\in\mr(A\oplus A)$
\be\label{hsh}
\bbd(\trho\|\tsigma)=\D\Big(\trho\oplus\big(1-\tr[\trho]\big)\;\big\|\;\tsigma\oplus\big(1-\tr[\tsigma]\big)\Big)\;.
\ee
Moreover, the minimal extension of $\D$ to subnormalized states, $\ubd$, satisfies
\be
\ubd(\trho\|\tsigma)=0
\ee
for all subnormalized states $\trho,\tsigma\in\tilde{\md}(A)$ with either $\tr[\trho]<1$ or $\tr[\tsigma]<1$.
\end{theorem*}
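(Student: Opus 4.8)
The plan is to prove the maximal-extension formula by two matching inequalities and to dispatch the minimal-extension claim by a short trace-counting argument. Throughout I write $\rho\eqdef\trho\oplus(1-\tr[\trho])$ and $\sigma\eqdef\tsigma\oplus(1-\tr[\tsigma])$ for the canonical completions of the subnormalized states to \emph{normalized} states on $A\oplus\mbb{C}$, so that the assertion~\eqref{hsh} reads $\bbd(\trho\|\tsigma)=\D(\rho\|\sigma)$.

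For the inequality $\bbd(\trho\|\tsigma)\le\D(\rho\|\sigma)$ I would exhibit a single feasible point of the infimum~\eqref{dextd} defining $\bbd$. Let $V:A\hookrightarrow A\oplus\mbb{C}$ be the isometry onto the first summand and set $\mE_0(X)\eqdef V^\dagger X V$. This map is completely positive, and since $\tr[\mE_0(X)]=\tr[XVV^\dagger]\le\tr[X]$ for $X\ge0$ it is trace non-increasing; moreover it simply discards the flag, so $\mE_0(\rho)=\trho$ and $\mE_0(\sigma)=\tsigma$. Hence $(\rho,\sigma,\mE_0)$ is admissible in~\eqref{dextd} and the upper bound follows at once.

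The reverse inequality is the substantive step. Take an arbitrary admissible triple: normalized $\mu,\nu\in\md(R)$ and a TNI-CP map $\mE\in\cp(R\to A)$ with $\mE(\mu)=\trho$ and $\mE(\nu)=\tsigma$. The key construction is to \emph{complete} $\mE$ to a genuine channel by appending a one-dimensional flag that absorbs the trace it destroys: define $\hat\mE:\ml(R)\to\ml(A\oplus\mbb{C})$ by
\be
\hat\mE(X)\eqdef\mE(X)\oplus\tr\!\big[(\id_R-\mE^\dagger(\id_A))X\big]\;.
\ee
The flag block is completely positive precisely because trace non-increase of $\mE$ is equivalent to $\mE^\dagger(\id_A)\le\id_R$, and using $\tr[\mE(X)]=\tr[\mE^\dagger(\id_A)X]$ one checks that $\tr[\hat\mE(X)]=\tr[X]$, so $\hat\mE\in\cptp(R\to A\oplus\mbb{C})$. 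Because $\mu,\nu$ are normalized, the flag value equals $1-\tr[\mE(\mu)]$, giving $\hat\mE(\mu)=\trho\oplus(1-\tr[\trho])=\rho$ and likewise $\hat\mE(\nu)=\sigma$. The data-processing inequality for the quantum divergence $\D$ under the channel $\hat\mE$ then yields $\D(\rho\|\sigma)=\D(\hat\mE(\mu)\|\hat\mE(\nu))\le\D(\mu\|\nu)$. Since the admissible triple was arbitrary, taking the infimum of the right-hand side over~\eqref{dextd} gives $\D(\rho\|\sigma)\le\bbd(\trho\|\tsigma)$, and combining with the upper bound proves~\eqref{hsh}. I expect the verification that $\hat\mE$ is CPTP — in particular identifying the TNI condition with the positivity of $\id_R-\mE^\dagger(\id_A)$ — to be the only nontrivial point.

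Finally, the minimal-extension claim is immediate from trace monotonicity. By definition $\ubd(\trho\|\tsigma)$ is a supremum of $\D(\mE(\trho)\|\mE(\tsigma))$ over TNI-CP maps $\mE$ for which \emph{both} $\mE(\trho)$ and $\mE(\tsigma)$ are normalized. But any TNI-CP map satisfies $\tr[\mE(\omega)]\le\tr[\omega]$, so if $\tr[\trho]<1$ then $\tr[\mE(\trho)]<1$ for every such $\mE$, and $\mE(\trho)$ can never be normalized; the same holds if $\tr[\tsigma]<1$. Thus no admissible $\mE$ exists, and by the convention in Definition~\ref{defopt} we conclude $\ubd(\trho\|\tsigma)=0$.
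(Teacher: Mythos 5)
Your proof is correct and follows essentially the same route as the paper: the upper bound via compression onto the $A$-block of the canonical completion (the paper uses $P(\cdot)P^\dagger$ where you use $V^\dagger(\cdot)V$), the lower bound via completing the TNI-CP map to a CPTP map with a trace-absorbing flag (your $\hat\mE$ is exactly the paper's $\mN(\omega)=\mE(\omega)\oplus(\tr[\omega]-\tr[\mE(\omega)])$, just written through the adjoint) followed by data processing, and the same no-admissible-map argument for $\ubd$. Your explicit verification that trace non-increase is equivalent to $\mE^\dagger(\id_A)\le\id_R$ is a slightly more careful justification of the complete positivity of the flag block than the paper records, but the argument is the same.
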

\begin{proof}
Let $\rho,\sigma\in\md(R)$ and $\mE\in\cp(R\to A)$ be a TNI-CP map such that $\trho=\mE(\rho)$ and $\tsigma=\mE(\sigma)$. Moreover, define $\mN\in\cptp(R\to A\oplus\mbb{C})$ as
\be
\mN(\omega)\eqdef\mE(\omega)\oplus\big(\tr[\omega]-\tr[\mE(\omega)]\big)\quad\forall\omega\in\ml(A)\;.
\ee
Then, since $\mN$ is a CPTP map,
\ba
\D(\rho\|\sigma)&\geq \D\big(\mN(\rho)\|\mN(\sigma)\big)\\
&=\D\Big(\mE(\rho)\oplus\big(1-\tr[\mE(\rho)]\big)\;\big\|\;\mE(\sigma)\oplus\big(1-\tr[\mE(\sigma)]\big)\Big)\\
&=\D\Big(\trho\oplus\big(1-\tr[\trho]\big)\;\big\|\;\tsigma\oplus\big(1-\tr[\tsigma]\big)\Big)\;.
\ea
Since the above inequality holds for all such $\rho,\sigma,\mE$ we must have that $\bbd(\rho\|\sigma)$ is no smaller than the RHS on~\eqref{g1}.
To prove the converse inequality, take $R=A\oplus\mbb{C}$, $\rho=\trho\oplus\big(1-\tr[\trho]\big)$, $\sigma=\tsigma\oplus\big(1-\tr[\tsigma]\big)$, and $\mE(\cdots)\eqdef P(\cdot)P^\dag$, where $P$ is the projection to $A$ in $R$. Then, $\trho=\mE(\rho)$ and $\tsigma=\mE(\sigma)$
so that by definition (see~\eqref{dextd}) we must have $\bbd(\trho\|\tsigma)\leq \D(\rho\|\sigma)$. This completes the proof of the equality in~\eqref{hsh}.

The minimal extension $\ubd$ can be expressed for any $\trho,\tsigma\in\tilde{\md}(A)$ as
\be
\ubd(\trho\|\tsigma)\eqdef\sup\D(\mE(\trho)\|\mE(\tsigma))
\ee
where the supremum is over all systems $R$ and all $\mE\in\cp(A\to R)$ such that $\mE$ is trace non-increasing and
$\mE(\trho)$ and $\mE(\tsigma)$ are normalized states. However, such $\mE$ does not exists if either $\trho$ or $\tsigma$ has trace strictly smaller than one. This completes the proof.
\end{proof}

\subsection{Applications to Distance Measures}

In this subsection we apply the Theorem~\ref{gmain} to divergences that are also metrics. Explicity, we assume that $\D$ is symmetric, 
\be
\D(\rho\|\sigma)=\D(\sigma\|\rho)\quad\quad\forall\;\rho,\sigma\in\md(A)\;,
\ee 
and satisfies the triangle inequality
\be
\D(\rho\|\sigma)\leq \D(\rho\|\omega)+\D(\omega\|\sigma)\quad\quad\;\forall\;\rho,\sigma,\omega\in\md(A)\;.
\ee
\begin{lemma*}
Let $\D$ be a quantum divergence that is also a metric. Then, its maximal extension to subnormalized states, $\bbd$, is also a metric (that satisfies the DPI under trace non-increasing CP maps).
\end{lemma*}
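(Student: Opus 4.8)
The plan is to lift every metric axiom from $\D$ to its maximal extension $\bbd$ by means of the closed-form expression established in Theorem~\ref{gmain}. For a subnormalized state $\trho\in\tilde{\md}(A)$ write $\hat\rho\eqdef\trho\oplus(1-\tr[\trho])$ for its canonical completion to a genuine density matrix in $\md(A\oplus\mbb{C})$; Theorem~\ref{gmain} then reads $\bbd(\trho\|\tsigma)=\D(\hat\rho\|\hat\sigma)$. Since $\trho\mapsto\hat\rho$ is a single fixed embedding of $\tilde{\md}(A)$ into $\md(A\oplus\mbb{C})$, and since $\D$ is assumed to be a divergence and a metric in every finite dimension (in particular on $A\oplus\mbb{C}$), I expect each property to transfer essentially verbatim.

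First I would dispatch non-negativity and symmetry: $\bbd\ge0$ because $\D$ takes values in $\mbb{R}_+$, and $\bbd(\trho\|\tsigma)=\D(\hat\rho\|\hat\sigma)=\D(\hat\sigma\|\hat\rho)=\bbd(\tsigma\|\trho)$ by the symmetry of $\D$. For the identity of indiscernibles I would use that a metric $\D$ is faithful, so $\bbd(\trho\|\tsigma)=0$ holds iff $\hat\rho=\hat\sigma$; reading off the $A$-block (together with its complementary scalar) shows that $\hat\rho=\hat\sigma$ is equivalent to $\trho=\tsigma$.

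The triangle inequality is the heart of the argument. Given $\trho,\tsigma,\tomega\in\tilde{\md}(A)$, the three completions $\hat\rho,\hat\sigma,\hat\omega$ all live in the \emph{same} space $\md(A\oplus\mbb{C})$, so the triangle inequality assumed for $\D$ applies directly:
\be
\bbd(\trho\|\tsigma)=\D(\hat\rho\|\hat\sigma)\le\D(\hat\rho\|\hat\omega)+\D(\hat\omega\|\hat\sigma)=\bbd(\trho\|\tomega)+\bbd(\tomega\|\tsigma)\;.
\ee
The one point requiring care --- and the closest thing to an obstacle --- is that the intermediate state $\tomega$ must be completed identically in both terms on the right-hand side; this holds automatically because $\hat\omega$ is a deterministic function of $\tomega$ alone, so no mismatch between the two occurrences can arise.

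Finally, the DPI under TNI-CP maps is already guaranteed: $\bbd$ is by construction the maximal extension in the GRT whose free operations are exactly the pairs $(\mE,\mE)$ with $\mE$ trace non-increasing, so monotonicity is the second part of Theorem~\ref{properties}. Alternatively it can be re-derived from the formula by completing a TNI-CP map $\mE\in\cp(A\to B)$ to a channel on $A\oplus\mbb{C}\to B\oplus\mbb{C}$ exactly as in the proof of Theorem~\ref{gmain} and then invoking the DPI of $\D$. I therefore anticipate no genuine difficulty beyond the bookkeeping of the completion map, which interacts cleanly with all axioms precisely because it is single-valued in the trace.
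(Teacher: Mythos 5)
Your proposal is correct and follows essentially the same route as the paper: both invoke the closed formula of Theorem~\ref{gmain} to rewrite $\bbd(\trho\|\tsigma)$ as $\D$ evaluated on the canonical completions $\trho\oplus(1-\tr[\trho])$, and then transfer symmetry and the triangle inequality verbatim, with monotonicity under TNI-CP maps already supplied by the general framework. Your treatment is slightly more thorough than the paper's (which leaves non-negativity, the identity of indiscernibles, and the DPI implicit), but there is no substantive difference in approach.
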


\begin{proof}
The symmetry property  of $\D$ follows trivially from the symmetry of $\D$ and Theorem~\ref{gmain}.
It is therefore lest to show that for any three subnormalized states $\trho,\tsigma,\tomega\in\umd(A)$
\be
\bbd(\trho\|\tsigma)\leq \bbd(\trho\|\tomega)+\bbd(\tomega\|\tsigma)\;.
\ee
Again, this property follows directly from the closed formula in Theorem~\ref{gmain}, and the triangle inequality of $\D$.
\end{proof}

\subsubsection*{Examples 1: The Generalized Trace Distance}

If we take $\D$ to be the trace distance defined by 
\be
D(\rho,\sigma)\eqdef\frac12\|\rho-\sigma\|_1\quad\quad\forall\;\rho,\sigma\in\md(A)\;,
\ee
then Theorem~\ref{gmain} states that its maximal extension is given by
\be
\bD(\trho,\tsigma)=\frac12\|\trho-\tsigma\|_1+\frac12\big|\tr[\trho-\tsigma]\big|\quad\quad\forall\;\trho,\tsigma\in\tilde{\md}(A)\;.
\ee
This formula was introduced in~\cite{Tomamichel-2012a,Tomamichel2015} and our formalism indicate that it is the largest extension of the trace distance to subnormalized states, meaning that any other extension of the trace distance must be smaller than the generalized trace distance.

\subsubsection*{Examples 2: The Generalized Fidelity}

The Fidelity is a measure defined by
\be
F(\rho,\sigma)\eqdef\big\|\sqrt{\rho}\sqrt{\sigma}\big\|_1\;.
\ee
It satisfies the DPI in the opposite direction. Therefore, when applying Theorem~\ref{gmain} to the fidelity we get that its minimal extension to sub-normalized states is given by
\be
\bar{F}(\trho,\tsigma)=\big\|\sqrt{\trho}\sqrt{\tsigma}\big\|_1+\sqrt{(1-\tr[\trho])(1-\tr[\tsigma])}\quad\quad\forall\;\trho,\tsigma\in\tilde{\md}(A)\;.
\ee
This measure was introduced in~\cite{Tomamichel-2012a,Tomamichel2015}.
Theorem~\ref{gmain} also implies that any other fidelity-type measure on subnormalized states, that reduces to $F$ on normalized states, must be no smaller than the expression above. This provide a strong motivation to use the generalized fidelity in applications,
since if the generalized fidelity is close to one, it means that any other extension of the fidelity that satisfy the DPI (in the opposite direction) must be close to one.

\subsection{Relative entropies of subnormalized states}

The Umegaki relative entropy of subnormalized states is defined for any $\trho,\tsigma\in\tilde{\md}(A)$ as
\ba\label{asdf}
\bD(\trho\|\tsigma)&\eqdef D\Big(\trho\oplus\big(1-\tr[\trho]\big)\big\|\tsigma\oplus\big(1-\tr[\tsigma]\big)\Big)\\
&=D(\trho\|\tsigma)+\big(1-\tr[\trho]\big)\log\frac{1-\tr[\trho]}{1-\tr[\tsigma]}\;.
\ea
The extended relative entropy $\bD$ satisfies the following two key properties:
\begin{enumerate}
\item Faithfulness. For any $\trho,\tsigma\in\umd(A)$, $\bD(\trho\|\tsigma)=0$ iff $\trho=\tsigma$.
\item Data Processing Inequality. For any $\trho,\tsigma\in\tilde{\md}(A)$ and a TNI map $\mE\in\cp(A\to B)$ we have \be\bD(\mE(\trho)\|\mE(\tsigma))\leq \bD(\trho\|\tsigma)\;.\ee
\end{enumerate}
Note in particular that $\bD$ behaves monotonically not only under CPTP maps but also under TNI-CP maps.
Moreover, $\bD$ is always non-negative. However, in general, the additivity property of the relative entropy does no carry over to $\bD$. Note also that if $\tr[\trho]<1$ and $\tr[\tsigma]<1$ then
\be
\lim_{n\to\infty}\frac1n\bD\left(\trho^{\otimes n}\big\|\tsigma^{\otimes n}\right)=0\;.
\ee
This means that we cannot use the techniques discussed earlier to define (weakly) additive quantities. 
\begin{corollary}
Let $\mE\in\cp(A\to B)$ be a TNI CP map, $\rho,\sigma\in\md(A)$ be normalized states, and $D$ be the Umegaki relative entropy. Then,
\be
D\left(\mE(\rho)\big\|\mE(\sigma)\right)\leq D(\rho\|\sigma)-\big(1-\tr[\mE(\rho)]\big)\log\frac{1-\tr[\mE(\rho)]}{1-\tr[\mE(\sigma)]}
\ee
\end{corollary}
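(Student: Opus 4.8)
The plan is to read this corollary off as the concrete unfolding of the data processing inequality for the extended relative entropy $\bD$ applied to \emph{normalized} inputs. First I would invoke the DPI for $\bD$ under TNI-CP maps (Property~2 of the preceding subsection, which itself follows from the monotonicity of the maximal extension in Theorem~\ref{properties} together with the closed formula of Theorem~\ref{gmain}, since $\bD$ is exactly the maximal extension of the Umegaki relative entropy $D$). Applied to $\rho$, $\sigma$ and the given TNI-CP map $\mE$, this gives
\be
\bD\bigl(\mE(\rho)\big\|\mE(\sigma)\bigr)\leq \bD(\rho\|\sigma)\;.
\ee

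Next I would expand both sides with~\eqref{asdf}. Since $\rho$ and $\sigma$ are normalized, the appended blocks $1-\tr[\rho]$ and $1-\tr[\sigma]$ both vanish, so the defining direct sums in the first line of~\eqref{asdf} reduce to $\rho$ and $\sigma$ themselves and $\bD(\rho\|\sigma)=D(\rho\|\sigma)$ with no convention issue. On the left-hand side, $\mE(\rho)$ and $\mE(\sigma)$ are genuinely subnormalized because $\mE$ is trace non-increasing, so the second line of~\eqref{asdf} contributes the full correction term
\be
\bD\bigl(\mE(\rho)\big\|\mE(\sigma)\bigr)=D\bigl(\mE(\rho)\big\|\mE(\sigma)\bigr)+\bigl(1-\tr[\mE(\rho)]\bigr)\log\frac{1-\tr[\mE(\rho)]}{1-\tr[\mE(\sigma)]}\;.
\ee
Substituting both evaluations into the DPI $\bD(\mE(\rho)\|\mE(\sigma))\leq\bD(\rho\|\sigma)$ and moving the correction term to the right yields exactly the claimed inequality.

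The argument is essentially immediate once the DPI of $\bD$ is in hand, so the only point requiring care is the bookkeeping of the boundary cases in the logarithmic term. If $\supp(\rho)\not\subseteq\supp(\sigma)$ then $D(\rho\|\sigma)=+\infty$ and the bound is trivial, so assume $\supp(\rho)\subseteq\supp(\sigma)$. Writing $\mE$ in Kraus form with $Q\eqdef\sum_kK_k^\dagger K_k\leq I$, one has $\tr[\mE(\omega)]=\tr[Q\omega]$; if $\tr[\mE(\sigma)]=\tr[Q\sigma]=1$ then $(I-Q)\sigma=0$, hence $\supp(\rho)\subseteq\supp(\sigma)\subseteq\ker(I-Q)$ and so $\tr[\mE(\rho)]=1$ as well. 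Thus the apparently dangerous configuration $\tr[\mE(\sigma)]=1$ with $\tr[\mE(\rho)]<1$, which would formally drive the right-hand side to $-\infty$, never occurs, and the fraction inside the logarithm is always well defined (with the usual conventions $0\log 0=0$). The main ``obstacle'' is therefore not analytical but conceptual: recognizing that this operationally meaningful strengthening of the Umegaki DPI for TNI-CP maps is nothing more than the DPI for the maximal extension $\bD$ evaluated at normalized states.
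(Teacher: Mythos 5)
Your proof is correct and matches the paper's own argument: the paper obtains the corollary in exactly the same way, by substituting $\trho=\mE(\rho)$ and $\tsigma=\mE(\sigma)$ into the closed formula~\eqref{asdf} and invoking the DPI of $\bD$ under TNI-CP maps, with $\bD(\rho\|\sigma)=D(\rho\|\sigma)$ on the normalized inputs. Your extra bookkeeping of the boundary case $\tr[\mE(\sigma)]=1$ with $\tr[\mE(\rho)]<1$ (showing it cannot occur when $\supp(\rho)\subseteq\supp(\sigma)$) is a sound refinement that the paper leaves implicit.
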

\begin{proof}
Follows trivially from~\eqref{asdf} by taking $\trho=\mE(\rho)$ and $\tsigma=\mE(\sigma)$.
\end{proof}

One can define the extension of R\'enyi divergences to subnormalized states in a similar way. Particularly interesting is the extension of $D_{\max}$ which takes the form
\ba
\bD_{\max}(\trho\|\tsigma)&\eqdef D_{\max}\Big(\trho\oplus\big(1-\tr[\trho]\big)\big\|\tsigma\oplus\big(1-\tr[\tsigma]\big)\Big)\\
&=\log\max\left\{2^{D_{\max}(\trho\|\tsigma)}\;,\;\frac{1-\tr[\trho]}{1-\tr[\tsigma]}\right\}\;.
\ea
In particular, for $\tr[\tsigma]\leq\tr[\trho]$ we have $\bD_{\max}(\trho\|\tsigma)=D_{\max}(\trho\|\tsigma)$.

\section{Extensions from pure states to mixed states}

Here we consider extensions from a QRT that is defined on pure states and extend it to a QRT that is defined on mixed states.
Let $\mf$ denotes a QRT (the domain is $\md(A)$ for any physical system $A$).
Denoting by $\pure(A)$ the subset of all pure states in $\md(A)$.
Let $M_1:\bigcup_{A}\pure(A)\to\mbb{R}$ be a resource measure defined on pure states. We assume that $M_1$ is non-increasing under $\mf$. Note that a channel $\mE\in\mf(A\to B)$ can take a pure state $\psi\in\pure(A)$ to a mixed state $\mE(\psi)\in\md(A)$. Therefore, the relation $M_1(\mE(\psi))\leq M_1(\psi)$ holds only when $\mE(\psi)$ is a pure state in $\pure(B)$.

The minimal and maximal extensions of $M_1$ are given by
\be\label{909}
\overline{M}_1(\rho)\eqdef\inf\Big\{M_1(\psi)\;:\;\rho=\mE(\psi)\;,\;\mE\in\mf(B\to A)\;,\;\psi\in\pure(B)\Big\}
\ee
and its minimal extension as
\be\label{908}
\underline{M}_1(\rho)\eqdef\sup\Big\{M_1\big(\mE(\rho)\big)\;:\;\mE\in\mf(A\to B)\;,\;\mE(\rho)\in\pure(B)\Big\}
\ee
We first show that the expression for the maximal extension can be simplified when the QRT is purifiable.
\begin{definition*}
A QRT $\mf$ is said to be \emph{purifiable} if for any two systems $A$ and $B$, and any free channel
$\mE\in\mf(A\to B)$, there exists a system $E$ and an isometry $\mV\in\mf(A\to BE)$ such that 
\be\label{isoe}
\mE^{A\to B}=\tr_E\circ\mV^{A\to BE}\;.
\ee
\end{definition*} 

\begin{theorem*}
Let $\mf$  be a purifiable QRT, and let $M_1:\bigcup_{A}\pure(A)\to\mbb{R}$ be a resource measure defined on pure states. Then, the maximal extension $\overline{M}_1$ of $M_1$ to mixed states is given for all $\rho\in\md(A)$ by
\be
\overline{M}_1(\rho^A)=\inf\Big\{M_1(\psi^{RA})\;:\;\tr_{R}\left[\psi^{RA}\right]=\rho^A\;\;,\;\;\psi\in\pure(RA)\Big\}
\ee
\end{theorem*}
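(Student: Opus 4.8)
The plan is to prove the stated identity by two opposing inequalities. Write
\be
P(\rho^A)\eqdef\inf\Big\{M_1(\psi^{RA})\;:\;\tr_R[\psi^{RA}]=\rho^A\;,\;\psi\in\pure(RA)\Big\}
\ee
for the purification infimum on the right-hand side, so that the goal is $\overline{M}_1(\rho^A)=P(\rho^A)$. For the direction $\overline{M}_1(\rho^A)\le P(\rho^A)$ I would observe that every purification is already a feasible point of the optimization~\eqref{909}: if $\psi^{RA}\in\pure(RA)$ satisfies $\tr_R[\psi^{RA}]=\rho^A$, then taking $B=RA$ together with the (free) discarding channel $\mE=\tr_R\in\mf(RA\to A)$ gives $\mE(\psi^{RA})=\rho^A$, so $\overline{M}_1(\rho^A)\le M_1(\psi^{RA})$. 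Taking the infimum over all such purifications yields $\overline{M}_1(\rho^A)\le P(\rho^A)$; equivalently, $P$ optimizes over a subset of the feasible set of~\eqref{909} and so can only be larger.

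For the reverse direction I would invoke purifiability, which is where the real work lies. Fix an arbitrary feasible point of~\eqref{909}, namely a pure state $\psi^B\in\pure(B)$ and a free channel $\mE\in\mf(B\to A)$ with $\mE(\psi^B)=\rho^A$. By purifiability there exist a system $E$ and a free isometry $\mV\in\mf(B\to AE)$ obeying the dilation~\eqref{isoe}, $\mE=\tr_E\circ\mV$. Setting $\Phi^{AE}\eqdef\mV(\psi^B)$, the isometry property guarantees $\Phi^{AE}\in\pure(AE)$, while $\tr_E[\Phi^{AE}]=\tr_E\circ\mV(\psi^B)=\mE(\psi^B)=\rho^A$ exhibits $\Phi^{AE}$ as a purification of $\rho^A$ with $E$ playing the role of $R$. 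Hence $\Phi^{AE}$ is admissible in $P$, giving $P(\rho^A)\le M_1(\Phi^{AE})$.

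The last ingredient is monotonicity: since $\mV\in\mf(B\to AE)$ is free and its output $\Phi^{AE}=\mV(\psi^B)$ is pure, the defining property of the pure-state measure $M_1$ applies and gives $M_1(\Phi^{AE})\le M_1(\psi^B)$. Combining the two previous displays yields $P(\rho^A)\le M_1(\psi^B)$ for every feasible $(\psi^B,\mE)$, and taking the infimum produces $P(\rho^A)\le\overline{M}_1(\rho^A)$, which together with the first paragraph completes the equality. I expect this second inequality to be the main obstacle, and it is the sole place the purifiable hypothesis is used: the dilation of $\mE$ must stay \emph{inside} the free operations, for otherwise the marginal $\Phi^{AE}$ could not be certified to have $M_1$ no larger than $M_1(\psi^B)$ (recall that $M_1$ is assumed monotone only under free maps with pure output). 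The first inequality, by contrast, rests on the structural fact that discarding a subsystem is a free operation of the QRT, which I would flag explicitly.
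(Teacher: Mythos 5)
Your proposal is correct and follows essentially the same route as the paper's proof: the easy direction uses the free discarding channel $\tr_R$ to show every purification is feasible in~\eqref{909}, and the hard direction uses the purifiable hypothesis to replace a feasible pair $(\psi^B,\mE)$ by a free isometric dilation $\mV$, whose pure output is a purification of $\rho^A$ and has $M_1$ no larger than $M_1(\psi^B)$ by monotonicity. Your explicit flagging of the assumption that discarding a subsystem is free is a point the paper leaves implicit, but otherwise the two arguments coincide step for step.
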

\begin{proof}
Let $\psi\in\pure(RA)$ be a purification of $\rho^A$. Take the system $B$ in~\eqref{909} to be the composite system $RA$, and  $\mE\in\mf(B\to A)$ be $\mE^{RA\to A}=\tr_{R}$ we get that $\mE(\psi^{RA})=\rho^A$. Hence, by definition~\eqref{909} we get that
\be
\overline{M}_1(\rho)\leq M_1(\psi^{RA})\;.
\ee
Since the above inequality holds for any purification $\psi^{RA}$ of $\rho^A$, it also holds for the infimum over such purifications. We therefore get 
\be
\overline{M}_1(\rho^A)\leq\inf\Big\{M_1(\psi^{RA})\;:\;\tr_{R}\left[\psi^{RA}\right]=\rho^A\;\;,\;\;\psi\in\pure(RA)\Big\}\;.
\ee
For the other direction, let $\rho^A=\mE(\psi^B)$, where $\mE\in\mf(B\to A)$ and $\psi\in\pure(B)$. Let $\mV\in\mf(B\to RA)$ be an isometry purifying $\mE^{B\to A}$. Therefore, $\rho^A=\tr_R\left[\mV^{B\to RA}(\psi^B)\right]$. Denote $\phi^{RA}\eqdef\mV^{B\to RA}(\psi^B)$. Then,
\ba
M_1(\psi^B)&\geq M_1(\phi^{RA})\\
&\geq \inf\Big\{M_1(\chi^{RA})\;:\;\tr_{R}\left[\chi^{RA}\right]=\rho^A\;\;,\;\;\chi\in\pure(RA)\Big\}
\ea
Since the inequality above holds for all $\psi^B$ such that there exists $\mE\in\mf(B\to A)$ satisfying $\rho^A=\mE(\psi^B)$, we must have
\be
\overline{M}_1(\rho^A)\geq\inf\Big\{M_1(\psi^{RA})\;:\;\tr_{R}\left[\psi^{RA}\right]=\rho^A\;\;,\;\;\psi\in\pure(RA)\Big\}\;.
\ee
This completes the proof.
\end{proof}

\subsubsection{Example: The Purified Distance}

An important distance measure that is use quite often in single-shot quantum information theory is the purified distance.
In this example we will see that the purified distance is the maximal extension of the trace distance from pure states to mixed states. This means that any other distance measures that satisfies the DPI and that reduces to the trace distance on pure states must be no greater than the purified distance. The purified distance on mixed normalized states is defined by
\be
P(\rho,\sigma)\eqdef\inf_{\psi,\phi}D(\psi^{RA},\phi^{RA})=\sqrt{1-F(\rho,\sigma)^2}\quad\quad\forall\;\rho,\sigma\in\md(A)\;,
\ee
where the infimum is over all purifications $\psi^{RA}$ and $\phi^{RA}$ of $\rho^{A}$ and $\sigma^A$, respectively, $D$ is the trace distance, and $F$ is the fidelity. 

\begin{corollary*}
The maximal extension of the trace distance from normalized pure states to sub-normalized mixed states is given by the purified distance
\be
P\left(\trho,\tsigma\right)\eqdef\sqrt{1-\bar{F}(\trho,\tsigma)}\quad\quad\forall\;\trho,\tsigma\in\tilde{\md}(A)\;.
\ee
where $\bar{F}$ is the generalized fidelity. 
\end{corollary*}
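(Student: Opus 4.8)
The plan is to carry out the extension in a single step and to funnel everything through the completion map $\trho\mapsto\hat\rho\eqdef\trho\oplus(1-\tr[\trho])$, which sends a subnormalized state on $A$ to a normalized state on $A\oplus\mathbb{C}$. The first thing I would record is a purely algebraic identity: because $\hat\rho$ and $\hat\sigma$ are block diagonal, the root fidelity splits across the blocks,
\be
F(\hat\rho,\hat\sigma)=\big\|\sqrt{\trho}\sqrt{\tsigma}\big\|_1+\sqrt{(1-\tr[\trho])(1-\tr[\tsigma])}=\bar{F}(\trho,\tsigma)\;.
\ee
Combining this with the elementary pure-state identity $D(\psi,\phi)=\sqrt{1-|\langle\psi|\phi\rangle|^2}=\sqrt{1-F(\psi,\phi)^2}$, the corollary reduces to showing that the maximal extension $\overline{D}_1(\trho,\tsigma)$ equals $\sqrt{1-F(\hat\rho,\hat\sigma)^2}$, i.e.\ the purified distance $P(\trho,\tsigma)$.

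For the lower bound I would fix an arbitrary admissible configuration: a TNI-CP map $\mE\in\cp(B\to A)$ and normalized pure states $\psi,\phi\in\pure(B)$ with $\trho=\mE(\psi)$ and $\tsigma=\mE(\phi)$. Exactly as in the proof of Theorem~\ref{gmain}, I would complete $\mE$ to the CPTP map $\mN(\cdot)\eqdef\mE(\cdot)\oplus(\tr[\cdot]-\tr[\mE(\cdot)])$ into $A\oplus\mathbb{C}$. Since $\psi$ and $\phi$ have unit trace, $\mN(\psi)=\hat\rho$ and $\mN(\phi)=\hat\sigma$. Monotonicity of the fidelity under CPTP maps then gives $|\langle\psi|\phi\rangle|=F(\psi,\phi)\le F(\hat\rho,\hat\sigma)=\bar{F}(\trho,\tsigma)$, and therefore $D(\psi,\phi)\ge\sqrt{1-\bar{F}(\trho,\tsigma)^2}$. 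Taking the infimum over all admissible $(\mE,\psi,\phi)$ yields $\overline{D}_1(\trho,\tsigma)\ge P(\trho,\tsigma)$.

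For the matching upper bound I would exhibit an optimizer. By Uhlmann's theorem there are purifications $\psi,\phi$ of $\hat\rho$ and $\hat\sigma$, on a common purifying system $R$ tensored with $A\oplus\mathbb{C}$, that attain $|\langle\psi|\phi\rangle|=F(\hat\rho,\hat\sigma)$. I would then take the TNI-CP map $\mE\eqdef\big(\Pi_A(\cdot)\Pi_A\big)\circ\tr_R$ that traces out $R$ and projects onto the $A$-block, where $\Pi_A$ is the projector onto $A$ inside $A\oplus\mathbb{C}$. A direct check gives $\mE(\psi)=\Pi_A\hat\rho\,\Pi_A=\trho$ and $\mE(\phi)=\tsigma$, so this configuration is admissible and $\overline{D}_1(\trho,\tsigma)\le D(\psi,\phi)=\sqrt{1-\bar{F}(\trho,\tsigma)^2}=P(\trho,\tsigma)$. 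The two bounds together establish the equality.

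I expect the work here to be bookkeeping rather than conceptual, with two points deserving care. First, the trace preservation of $\mN$ rests on $\psi,\phi$ being normalized, which is precisely why the upper-bound purifications must be taken on the completed space $A\oplus\mathbb{C}$ and why the trace-non-increasing projection $\Pi_A(\cdot)\Pi_A$ (rather than a trace-preserving map) is the correct channel back to the subnormalized state. Second, one should confirm there is no smoothing gap: in finite dimension Uhlmann's theorem attains the fidelity, so the infimum defining $\overline{D}_1$ is actually achieved and both inequalities are simultaneously tight.
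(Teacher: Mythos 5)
Your proof is correct, but it follows a genuinely different route from the paper's. The paper is modular: it first invokes its general theorem on \emph{purifiable} QRTs (applied to the divergence QRT, whose free maps $(\mE,\mE)$ dilate to free isometries by Stinespring) to conclude that the maximal extension of the trace distance from pure normalized states to mixed \emph{normalized} states is $\inf_{\psi,\phi}D(\psi^{RA},\phi^{RA})=P(\rho,\sigma)=\sqrt{1-F(\rho,\sigma)^2}$, the infimum running over joint purifications; it then feeds this purified distance into Theorem~\ref{gmain}, the completion formula $\bbd(\trho\|\tsigma)=\D\big(\trho\oplus(1-\tr[\trho])\,\big\|\,\tsigma\oplus(1-\tr[\tsigma])\big)$, to pass from normalized to subnormalized states. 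You instead collapse both steps into a single direct argument: your lower bound completes an arbitrary admissible TNI-CP map to the CPTP map $\mN(\cdot)=\mE(\cdot)\oplus\big(\tr[\cdot]-\tr[\mE(\cdot)]\big)$ and uses monotonicity of the fidelity together with the pure-state identity $D(\psi,\phi)=\sqrt{1-F(\psi,\phi)^2}$, while your upper bound exhibits an explicit optimizer from Uhlmann purifications of the completed states followed by $\tr_R$ and the block compression $\Pi_A(\cdot)\Pi_A$. Each approach buys something: the paper's showcases two reusable lemmas that hold for arbitrary pure-state measures and divergences, but it leaves implicit the (easy, yet necessary) fact that composing the two maximal extensions reproduces the one-step maximal extension from pure normalized to subnormalized states --- which holds because every TNI-CP map is trivially a CPTP map followed by a TNI-CP map, and every such composition is TNI-CP. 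Your one-step proof sidesteps that gluing issue entirely, uses Uhlmann's theorem explicitly rather than through the quoted formula for $P$, and makes attainment of the infimum manifest in finite dimension. One last remark: the formula you prove, $P(\trho,\tsigma)=\sqrt{1-\bar{F}(\trho,\tsigma)^2}$, is the correct one --- consistent with the paper's own definition of $P$ on normalized states and with the block-diagonal identity $F\big(\trho\oplus(1-\tr[\trho]),\tsigma\oplus(1-\tr[\tsigma])\big)=\bar{F}(\trho,\tsigma)$ that you record; the absence of the square inside the root in the corollary's display is evidently a typo.
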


\begin{remark}
The significance of this proposition is that any distance-divergence measure on subnormalized states that reduces to the trace distance on pure normalized states must be no greater than the purified distance. Therefore, the purified distance is optimal in this sense.
\end{remark}

\begin{proof}
As discussed in the main text, in the QRT associated with divergences, the resources are pair of states $(\rho,\sigma)$, and the free maps are pairs of two identical channels $(\mE,\mE)$, where $\mE\in\cptp(A\to B)$. Due to Stinespring dilation, for any channel $\mE$ there exists an isometry $\mV^{A\to BE}$ satisfying~\eqref{isoe}. Since the pair $(\mV^{A\to BE},\mV^{A\to BE})$ is also free we conclude that this QRT is purifiable. Hence, the theorem above implies that the maximal extension, $\bD$, of the trace distance, $D$, is given by
\be
\bD(\rho,\sigma)=\inf_{\psi,\phi}D(\psi^{RA},\phi^{RA})=P\left(\rho,\sigma\right)\quad\quad\forall\;\rho,\sigma\in\md(A)
\ee
where the infimum is over all purifications $\psi^{RA}$ and $\phi^{RA}$ of $\rho^{A}$ and $\sigma^A$, respectively.
Combining this with Theorem~\ref{gmain}, completes the proof of the corollary for subnormalized states.
\end{proof}

\subsection{Extensions of Entanglement Measures}

Here we show how the extensions techniques  can be applied directly to entanglement theory. Particularly, we consider the extensions in~\eqref{909} and~\eqref{908} to entanglement theory.
For any function on pure states $E_1:\bigcup_{A,B}\pure(AB)\to\mbb{R}$  we define its maximal  extension to mixed bipartite states as
\be\label{eee1}
\overline{E}_1(\rho^{AB})\eqdef\inf\Big\{E_1(\psi^{A'B'})\;:\;\rho^{AB}=\mE(\psi^{A'B'})\;,\;\mE\in\locc(A'B'\to AB)\;,\;\psi\in\pure(A'B')\Big\}
\ee
and its minimal extension as
\be\label{eee2}
\underline{E}_1(\rho^{AB})\eqdef\sup\Big\{E_1(\psi^{A'B'})\;:\;\psi^{A'B'}=\mE(\rho^{AB})\;,\;\mE\in\locc(AB\to A'B')\;,\;\psi\in\pure(A'B')\Big\}
\ee
In the above definition we took $\mf(AB\to A'B') =\locc(AB\to A'B')$. Theorem~\ref{properties} take the following form in entanglement theory.

\begin{corollary*}
Let $E$ be a measure of entanglement on pure states. 
\begin{enumerate}
\item For any $\psi\in{\rm \pure}(AB)$ 
\be
\overline{E}_1(\psi^{AB})=\underline{E}_1(\psi^{AB})=E_1(\psi^{AB})\;.
\ee
\item For any $\rho\in\md(AB)$ and $\mE\in\locc(AB\to A'B')$
\ba
&\overline{E}_1\left(\mE^{AB\to A'B'}(\rho^{AB})\right)\leq \overline{E}_1\left(\rho^{AB}\right)\\
&\underline{E}_1\left(\mE^{AB\to A'B'}(\rho^{AB})\right)\leq \underline{E}_1\left(\rho^{AB}\right)\;.
\ea
\item For any measure of entanglement $E$ that reduces to $E_1$ on pure states
\be\label{bbb1}
\underline{E}_1\left(\rho^{AB}\right)\leq E\left(\rho^{AB}\right)\leq \overline{E}_1\left(\rho^{AB}\right)\;.
\ee
\end{enumerate}
\end{corollary*}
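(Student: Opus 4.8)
The plan is to recognize the three claims as nothing more than the reduction, monotonicity, and optimality properties of Theorem~\ref{properties}, specialized to one particular generalized resource theory. So the entire task reduces to setting up the correct identification and then invoking that theorem; no new argument is needed.

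First I would instantiate the general framework by taking $\mr(AB)\defeq\md(AB)$ for every pair of systems, and $\mf(AB\to A'B')\defeq\locc(AB\to A'B')$. To confirm that $(\mr,\mf)$ is a GRT I would verify the two defining axioms: the identity channel is an LOCC operation, so ``doing nothing is free''; and LOCC is closed under composition, so $\mf$ is closed under combination of maps. Both are standard structural facts about LOCC. I would then take $\mr_1(AB)\defeq\pure(AB)\subseteq\md(AB)$, which is a legitimate choice of subset-valued map. The hypothesis that $E_1$ is an entanglement measure on pure states says precisely that $E_1\big(\mE(\psi)\big)\leq E_1(\psi)$ whenever $\psi\in\pure(AB)$, $\mE\in\locc(AB\to A'B')$, and $\mE(\psi)\in\pure(A'B')$; this is exactly the condition that $E_1$ be an $\mr_1$-resource measure.

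With these identifications in place, the definitions~\eqref{eee1} and~\eqref{eee2} of $\overline{E}_1$ and $\underline{E}_1$ coincide verbatim with the maximal and minimal extensions of Definition~\ref{defopt} (the infimum over pure $\psi^{A'B'}$ convertible to $\rho^{AB}$, and the supremum over pure outputs $\mE(\rho^{AB})$, respectively). Consequently the three parts of the corollary follow immediately from the corresponding three parts of Theorem~\ref{properties}: reduction yields part~1, monotonicity yields part~2, and optimality yields part~3, where the comparison measure $E$ is any entanglement measure on $\md$ restricting to $E_1$ on pure states.

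The only step that genuinely requires care — and hence the ``main obstacle'' — is confirming that LOCC truly forms a GRT, in particular that it is closed under composition uniformly across all systems; everything else is transcription. I would also note the harmless notational slip in the statement, where the pure-state measure is written $E$ in the opening line but $E_1$ thereafter: the intended reading is that $E_1$ denotes the given pure-state measure, while $E$ in part~3 denotes the arbitrary mixed-state extension being bounded.
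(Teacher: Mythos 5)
Your proposal is correct and matches the paper's own treatment exactly: the paper proves this corollary by declaring it a direct specialization of Theorem~\ref{properties} to the GRT with $\mr(AB)=\md(AB)$, $\mf(AB\to A'B')=\locc(AB\to A'B')$, and $\mr_1(AB)=\pure(AB)$, which is precisely your identification (including the observation that the pure-state monotonicity hypothesis is verbatim the $\mr_1$-resource-measure condition). Your extra care in checking the GRT axioms for LOCC and flagging the $E$ versus $E_1$ notational slip goes slightly beyond what the paper writes out, but introduces no difference in the argument.
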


For any pure bipartite state $\psi\in\pure(AB)$ and any matrix $M$ on system $B$, there exists unitaries $U$ and $V$ such that
\be
I^A\otimes M|\psi^{AB}\ra=UM^T\otimes V|\psi^{AB}\ra\;.
\ee
Therefore, any measurement performed by Bob on a pure bipartite state can be simulated by a measurement performed by Alice, followed by a unitary performed by Bob. This means that the channel $\mE$ in the definition of $\overline{E}_1$ can be expressed as a 1-way LOCC of the form
\be
\mE\left(\sigma^{AB}\right)=\sum_{j}\left(K_j\otimes U_j\right)\sigma^{AB}\left(K_j\otimes U_j\right)^\dag\quad\quad\forall\;\sigma\in\md(AB)\;.
\ee 

\subsubsection{Example: The Schmidt number of mixed bipartite states} 

One common entanglement monotone of pure states is the Schmidt number. It is defined on a pure state $\psi^{AB}$ as
\be
N(\psi^{AB})\eqdef \rank(\psi^A)
\ee
where $\psi^A$ is the reduced density matrix of $\psi^{AB}$. Note that the definition above remains unchanged on sub-normalized pure bipartite states. In~\cite{Terhal-2000a} the Schmidt number of density matrices was define as follows.

\begin{definition*}[\cite{Terhal-2000a}]\label{SN}
 A bipartite density matrix $\rho\in\md(AB)$ has Schmidt
number $k$ if (1) for any decomposition of $\rho^{AB}=\sum_ip_i|\psi_i\lr\psi_i|$, $p_i\geq 0$
at least one of the vectors $\psi_i$ has at least
Schmidt rank $k$ and (2) there exists a decomposition of $\rho^{AB}$
with all vectors $\psi_i^{AB}$ of Schmidt rank at most $k$.
\end{definition*}

Let $\rho\in\md(AB)$ be a bipartite mixed state. Then, the maximal and minimal extensions of $N$ to mixed state are given by
\begin{align}
&\overline{N}(\rho^{AB})\eqdef \inf\Big\{N\left(\psi^{A'B'}\right)\;:\;\rho^{AB}=\mE^{A'B'\to AB}(\psi^{A'B'})\;\;,\;\;\mE\in\locc(A'B'\to AB)\;\;,\;\;\psi\in\pure(A'B')\Big\}\\
&\underline{N}(\rho^{AB})\eqdef\sup\Big\{N\left(\mE^{AB\to A'B'}(\rho^{AB})\right)\;:\;\mE\in\locc(AB\to A'B')\;\;,\;\;\mE^{AB\to A'B'}(\rho^{AB})\in\pure(A'B')\Big\}
\end{align}

\begin{lemma}
The maximal extension $\overline{N}$ equals the Schmidt number, $N$, as defined above.
\end{lemma}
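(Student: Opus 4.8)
The plan is to prove the two inequalities $\overline{N}(\rho)\ge N(\rho)$ and $\overline{N}(\rho)\le N(\rho)$ separately, where throughout $N(\rho)$ denotes the Schmidt number of the mixed state in the sense of the Definition above and, for a pure state $\phi$, $N(\phi)=\rank(\phi^{A})$ is its Schmidt rank. For the \textbf{lower bound} $\overline{N}(\rho)\ge N(\rho)$, I would take any admissible pair $(\psi^{A'B'},\mE)$ with $\psi$ pure, $\mE\in\locc(A'B'\to AB)$ and $\rho=\mE(\psi)$. Since every LOCC channel is separable it admits Kraus operators of the product form $\{A_j\otimes B_j\}$, so that
\be
\rho=\sum_j (A_j\otimes B_j)\,\psi\,(A_j\otimes B_j)^\dagger=\sum_j q_j\,\phi_j\;,
\ee
with each $\phi_j\propto (A_j\otimes B_j)\ket{\psi}$ pure. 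Because local operators cannot raise the Schmidt rank, $\rank(\phi_j^{A})\le \rank(\psi^{A'})=N(\psi)$ for every $j$. This is a genuine pure-state decomposition of $\rho$, so property (1) of the Definition forces at least one $\phi_j$ to have Schmidt rank at least $N(\rho)$; combining the two bounds gives $N(\psi)\ge N(\rho)$, and taking the infimum over admissible $\psi$ yields $\overline{N}(\rho)\ge N(\rho)$. (This is just the upper bound in~\eqref{bbb1}: the same computation shows $N$ is LOCC-monotone and reduces to the Schmidt rank on pure states.)

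The \textbf{upper bound} $\overline{N}(\rho)\le N(\rho)$ is the main construction: I must exhibit one pure state of the correct Schmidt rank from which $\rho$ is LOCC-reachable. Let $k=N(\rho)$. By property (2) there is a decomposition $\rho=\sum_{i}p_i\ket{\psi_i}\!\bra{\psi_i}$ with $\rank(\psi_i^{A})\le k$ for all $i$. I would take as source the maximally entangled state $\ket{\Phi_k}=\frac{1}{\sqrt{k}}\sum_{j=1}^{k}\ket{jj}$, whose Schmidt rank is exactly $k$. The key observation is that each $\psi_i$ is reachable \emph{deterministically} from $\Phi_k$ by LOCC: by Nielsen's majorization theorem this holds because the flat Schmidt vector $(1/k,\dots,1/k)$ is majorized by the Schmidt vector of any state of Schmidt rank at most $k$. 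Writing $\mE_i\in\locc$ for such a protocol (followed by a free local isometric embedding into $AB$), Alice then samples $i$ with probability $p_i$, communicates it to Bob, and both parties run $\mE_i$. The composite $\mE\eqdef\sum_i p_i\,\mE_i$ is itself an LOCC channel and satisfies $\mE(\Phi_k)=\sum_i p_i\ket{\psi_i}\!\bra{\psi_i}=\rho$, whence $\overline{N}(\rho)\le N(\Phi_k)=k=N(\rho)$.

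Together the two bounds give $\overline{N}(\rho)=N(\rho)$. The step I expect to be the crux is the upper bound. The naive attempt is to realize the ensemble $\{p_i,\psi_i\}$ by a single instrument on $\Phi_k$ and then fight the trace-preservation constraint $\sum_i K_i^\dagger K_i=\openone$ on the resulting Kraus operators, where the freedom in the Schmidt bases makes the bookkeeping awkward. Routing the argument through Nielsen's theorem together with classical mixing sidesteps this entirely: deterministic convertibility of each component plus coordinated classical communication automatically produces an exact, trace-preserving LOCC channel, which is precisely what the definition of $\overline{N}$ demands.
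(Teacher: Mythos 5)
Your proof is correct, and its broad strategy matches the paper's: both directions ultimately rest on the product-Kraus structure of LOCC and on the maximally entangled state $\phi_+^k$ as the canonical source. The execution, however, differs in two substantive ways. For $\overline{N}(\rho)\ge N(\rho)$, the paper simply invokes the optimality part of Theorem~\ref{properties}, which presupposes (from~\cite{Terhal-2000a}) that the mixed-state Schmidt number is an LOCC monotone reducing to the Schmidt rank on pure states; your separable-Kraus argument proves that monotonicity directly from property (1) of the definition, so this direction becomes self-contained. For $\overline{N}(\rho)\le N(\rho)$, the paper first reduces w.l.o.g.\ to maximally entangled sources, obtaining the characterization $\overline{N}(\rho)=\min\{k:\rho=\mE(\phi_+^k),\ \mE\in\locc(A'B'\to AB)\}$, and then argues by contradiction on the minimal $k$; the crucial step there --- that a decomposition of $\rho$ into pure states of Schmidt rank at most $m$ makes $\rho$ reachable by LOCC from $\phi_+^m$ --- is only asserted, since writing $\ket{\phi_j}=K_j\otimes I\ket{\phi_+^m}$ does not by itself produce a trace-preserving channel, which is exactly the difficulty you flag in your closing paragraph. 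Your Nielsen-plus-classical-mixing construction supplies that channel explicitly (deterministic convertibility of each component, coordinated by one round of classical communication, composes to an exact LOCC channel), and is the cleanest way to close that gap. In exchange, the paper's route buys a demonstration of the general extension framework in action and yields, as a byproduct, the minimization-over-maximally-entangled-sources formula for $\overline{N}$, which your direct two-inequality argument does not state separately.
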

\begin{proof}
Since $N$ is an entanglement measure on mixed bipartite states, Theorem~\ref{properties} implies that for any $\rho\in\md(AB)$
\be
{N}(\rho^{AB})\leq \overline{N}(\rho^{AB})\;.
\ee
For the other direction, observe first that due to teleportation, $\overline{N}(\rho^{AB})\leq k\eqdef \min\{|A|,|B|\}$ since by taking $\psi^{A'B'}$ to be the maximally entangled state in $\md(AB)$ with $A'=A$ and $B'=B$ we get that there exists a map $\mE\in\locc(A'B'\to AB)$ that take $\psi^{AB}$ to $\rho^{AB}$. Now, denote by $k'\eqdef N(\psi^{A'B'})$ and by $\phi_{+}^{k'}$ the maximally entangled states with Schmidt rank $k'$. Then, w.l.o.g. we can assume that $\psi^{A'B'}=\phi_{+}^{k'}$ (recall that $N(\psi^{A'B'})=N(\phi^{k'}_+)$). We summarize all of this as
\be
\overline{N}(\rho^{AB})=\min\Big\{k\in\mbb{N}\;:\;\rho^{AB}=\mE^{A'B'\to AB}(\phi^{k}_{+})\;\;,\;\;\mE\in\locc(A'B'\to AB)\Big\}
\ee
Let $k$ be optimal (i.e. $k=\overline{N}(\rho^{AB})$), and note that the condition $$\rho^{AB}=\mE^{A'B'\to AB}(\phi^{k}_{+})=\sum_j(K_j\otimes M_j)|\phi_+^k\lr\phi_+^k|(K_j\otimes M_j)^\dag$$ implies that there exists a pure state decomposition of $\rho=\sum_jp_j\psi_j$ with the property that 
$N(\psi_j^{AB})\leq k$ for all $j$. Moreover, suppose by contradiction, that there exists a decomposition of $\rho^{AB}=\sum_jq_j\phi_j^{AB}$ with the property that all the states $\phi_j$ have Schmidt rank $m<k$. Then, there exists matrices $K_j$ such that $|\phi_j\ra=K_j\otimes I^B|\phi^m_+\ra$, and therefore, in particular, there exists  $\mE\in\locc(A'B'\to AB)$
such that $\rho=\mE(\phi_{+}^m)$ in contradiction with the optimality of $k$. Hence, all pure state decompositions of $\rho^{AB}$ must contain at least one state with Schmidt rank at least $k$. That is, $\overline{N}(\rho^{AB})={N}(\rho^{AB})$.
\end{proof}

\begin{corollary*}
Let $E$ be an entanglement measure  that reduces to the Schmidt number on pure bipartite states.
Then, for all $\rho\in\md(AB)$
\be
E(\rho^{AB})\leq N(\rho^{AB})\;.
\ee
\end{corollary*}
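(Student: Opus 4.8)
The plan is to derive the bound in one step from the optimality property of the maximal extension, using the preceding Lemma to rewrite $\overline{N}$ as $N$. First I would note that the hypothesis on $E$---that it is an entanglement measure (hence $\locc$-monotone) reducing to the Schmidt number on pure states---is exactly the input required by the third (optimality) part of Theorem~\ref{properties}, specialized to entanglement theory. Taking $E_1 = N$ to be the pure-state Schmidt number, that part gives, for every $\rho\in\md(AB)$,
\be
\underline{N}(\rho^{AB}) \le E(\rho^{AB}) \le \overline{N}(\rho^{AB})\;,
\ee
of which only the upper bound is needed here.

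Next I would invoke the preceding Lemma, which identifies the maximal extension of the Schmidt number with the Terhal--Horodecki Schmidt number, namely $\overline{N}(\rho^{AB}) = N(\rho^{AB})$. Substituting this into the upper bound immediately yields $E(\rho^{AB}) \le N(\rho^{AB})$, which is the claim. In this sense the corollary is a one-line consequence: all the substantive work has already been done in the Lemma (computing $\overline{N}$) and in Theorem~\ref{properties} (establishing optimality), both of which I am free to assume.

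The only point requiring any care---and the closest thing to an obstacle---is confirming that $E$ feeds correctly into the upper-bound (maximal-extension) half of Theorem~\ref{properties}. That argument runs as follows: whenever $\rho^{AB} = \mE(\psi^{A'B'})$ for some $\mE\in\locc(A'B'\to AB)$ and pure $\psi\in\pure(A'B')$, the $\locc$-monotonicity of $E$ gives $E(\rho^{AB}) = E\big(\mE(\psi^{A'B'})\big) \le E(\psi^{A'B'}) = N(\psi^{A'B'})$, where the final equality is the reduction hypothesis; taking the infimum over all such pairs $(\psi,\mE)$ produces $E(\rho^{AB}) \le \overline{N}(\rho^{AB})$ by the definition~\eqref{eee1}. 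Since $\locc$-monotonicity is precisely what the definition of an entanglement measure supplies, no extra hypotheses are consumed, and the corollary follows at once.
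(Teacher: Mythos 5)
Your proposal is correct and follows exactly the paper's route: the corollary is obtained by combining the optimality part of Theorem~\ref{properties} (specialized to LOCC and the pure-state Schmidt number) with the preceding Lemma identifying $\overline{N}$ with the Terhal--Horodecki Schmidt number $N$. Your extra verification that the maximal-extension bound applies---unfolding LOCC-monotonicity and the reduction hypothesis before taking the infimum in~\eqref{eee1}---is just the content of the optimality proof in Theorem~\ref{properties}, so nothing new is needed.
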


\subsubsection{Smoothed Extensions}

The maximal and minimal extensions of a pure-state entanglement measure $E$, can be interpreted as the zero-error pure-entanglement cost and distillation of $\rho^{AB}$, respectively. Note also that the minimal extension is quite often zero since it is not alway possible to find an LOCC channel that can be used to obtain a pure entangled state from $\rho^{AB}$. To avoid that, one can smooth these functions to get
\ba
\overline{E}_\epsilon(\rho^{AB})&\eqdef\min_{\rho'\in\mb_\epsilon(\rho)}\underline{E}(\rho'^{AB})\\
&=
\inf\Big\{E(\psi^{A'B'})\;:\;P\left(\rho^{AB},\mE(\psi^{A'B'})\right)\leq\epsilon\;,\;\mE\in\locc(A'B'\to AB)\;,\;\psi\in\pure(A'B')\Big\}
\ea
where $P$ is the purified distance, and similarly
\be
\underline{E}_\epsilon(\rho^{AB})=\sup\Big\{E(\psi^{A'B'})\;:\;P\left(\psi^{A'B'},\mE(\rho^{AB})\right)\leq\epsilon\;,\;\mE\in\locc(AB\to A'B')\;,\;\psi\in\pure(A'B')\Big\}\;.
\ee
These smoothed quantities can be interpreted as the $\epsilon$-error one-shot pure-entanglement cost and distillation, respectively. Due to the following lemma, these quantities are themselves measures of entanglement.

\begin{lemma*}
Let $(\mr,\mf)$ be a GRT, $\epsilon>0$, $\mr_1(A)\subseteq\mr(A)$ for any system $A$, and $M:\bigcup_A\mr_1(A)\to\mbb{R}$ be  an $\mr_1$-resource measure.
Then, for any $\rho\in\mr(A)$ and any $\mE\in\mf(A\to B)$
\be
\uM_\epsilon\left(\mE^{A\to B}(\rho^A)\right)\leq \uM_\epsilon\left(\rho^A\right)\quad\text{and}\quad
\bM_\epsilon\left(\mE^{A\to B}(\rho^A)\right)\leq \bM_\epsilon\left(\rho^A\right)\;.
\ee
\end{lemma*}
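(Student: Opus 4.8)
The plan is to reduce both inequalities to the single structural axiom that $\mf$ is closed under composition, in exact parallel with the monotonicity part of the proof of Theorem~\ref{properties}, while carrying the smoothing constraint along. Recall that, generalizing the entanglement definitions, the smoothed extensions read
\be
\bM_\epsilon(\rho)=\inf\Big\{M(\sigma)\;:\;\sigma\in\mr_1(R),\ P\big(\rho,\mF(\sigma)\big)\le\epsilon,\ \mF\in\mf(R\to A)\Big\}
\ee
and
\be
\uM_\epsilon(\rho)=\sup\Big\{M(\sigma)\;:\;\sigma\in\mr_1(R),\ P\big(\sigma,\mF(\rho)\big)\le\epsilon,\ \mF\in\mf(A\to R)\Big\},
\ee
where $P$ is a divergence on $\bigcup_A\mr(A)$ that is contractive under every free map (in the entanglement application $P$ is the purified distance, whose data processing is inherited from monotonicity of the fidelity).

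For $\uM_\epsilon$ I would argue as follows. Fix $\mE\in\mf(A\to B)$ and let $(\sigma,\mF)$ be any admissible pair for $\uM_\epsilon\big(\mE(\rho)\big)$, so that $\sigma\in\mr_1(R)$, $\mF\in\mf(B\to R)$ and $P\big(\sigma,\mF(\mE(\rho))\big)\le\epsilon$. By closure of $\mf$ under composition, $\mF\circ\mE\in\mf(A\to R)$, and since $\mF(\mE(\rho))=(\mF\circ\mE)(\rho)$ the constraint reads $P\big(\sigma,(\mF\circ\mE)(\rho)\big)\le\epsilon$ verbatim. Hence $(\sigma,\mF\circ\mE)$ is admissible for $\uM_\epsilon(\rho)$, so $M(\sigma)\le\uM_\epsilon(\rho)$; taking the supremum over all admissible $(\sigma,\mF)$ gives $\uM_\epsilon\big(\mE(\rho)\big)\le\uM_\epsilon(\rho)$. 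Note that the smoothing ball here is imposed \emph{after} the free map, so this half needs no property of $P$ beyond its mere existence.

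For $\bM_\epsilon$ the argument is dual, but now invokes contractivity. Let $(\sigma,\mG)$ be admissible for $\bM_\epsilon(\rho)$, i.e.\ $\sigma\in\mr_1(R)$, $\mG\in\mf(R\to A)$ and $P\big(\rho,\mG(\sigma)\big)\le\epsilon$. Set $\mG'\eqdef\mE\circ\mG\in\mf(R\to B)$. Data processing of $P$ under the free map $\mE$ yields $P\big(\mE(\rho),\mG'(\sigma)\big)=P\big(\mE(\rho),\mE(\mG(\sigma))\big)\le P\big(\rho,\mG(\sigma)\big)\le\epsilon$, so $(\sigma,\mG')$ is admissible for $\bM_\epsilon\big(\mE(\rho)\big)$ and thus $\bM_\epsilon\big(\mE(\rho)\big)\le M(\sigma)$. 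Taking the infimum over admissible $(\sigma,\mG)$ gives $\bM_\epsilon\big(\mE(\rho)\big)\le\bM_\epsilon(\rho)$.

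The only real obstacle is the asymmetry between the two halves: the $\bM_\epsilon$ inequality genuinely requires the smoothing divergence $P$ to satisfy the data processing inequality under $\mf$, because the $\epsilon$-ball is imposed on $\rho$ \emph{before} the free map and must survive pushing $\rho$ through $\mE$; for $\uM_\epsilon$ no such property is needed, since the ball sits on the post-processed state and passes through precomposition unchanged. I would make this hypothesis on $P$ explicit, and otherwise only check the routine conventions for the empty-feasible-set cases ($\bM_\epsilon\eqdef+\infty$, $\uM_\epsilon\eqdef0$), which are preserved by the same inclusions of admissible sets used above.
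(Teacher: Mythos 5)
Your proof is correct and takes essentially the same route as the paper's: both halves reduce to closure of $\mf$ under composition together with the data-processing inequality for the smoothing metric $P$, which are exactly the two ingredients in the paper's chain of inequalities. The only differences are presentational — the paper represents $\bM_\epsilon(\rho)=\min_{\rho'\in\mb_\epsilon(\rho)}\bM(\rho')$ and cites the monotonicity of $\bM$ from Theorem~\ref{properties} instead of re-deriving it via composition, and it dispatches the $\uM_\epsilon$ half (where, as you rightly note, no DPI of $P$ is needed) with ``follows similar lines.''
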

\begin{proof}
For $\mE\in\mf(A\to B)$
\ba
\bM_\epsilon\left(\mE^{A\to B}(\rho^A)\right)&=\min_{\substack{P(\mE(\rho),\sigma)\leq\epsilon\\\sigma\in\md(B)}}\bM(\sigma)\\
&\leq \min_{\substack{P(\mE(\rho),\mE(\omega))\leq\epsilon\\\omega\in\md(A)}}\bM(\mE(\omega))\\
&\leq \min_{\substack{P(\mE(\rho),\mE(\omega))\leq\epsilon\\\omega\in\md(A)}}\bM(\omega)\\
&\leq \min_{\substack{P(\rho,\omega)\leq\epsilon\\\omega\in\md(A)}}\bM(\omega)\\
&=\bM_\epsilon\left(\rho^A\right)\;.
\ea
where the first inequality follows by restriction $\sigma$ to be of the form $\mE(\omega)$, the second inequality from monotonicity of $\bM$, and the last inequality from the DPI of the purified distance.
The proof of the monotonicity of $\uM_\epsilon$ follows similar lines.
\end{proof}

\subsubsection{Extensions of entanglement monotones}

Entanglement monotones on pure states are functions that do not increase on average under LOCC. Here we characterize ensemble of pure bipartite states, $\{p_x,\psi_x^{AB}\}$, in terms of a classical quantum state
\be\label{purecq}
\rho^{XAB}\eqdef\sum_xp_x|x\lr x|^X\otimes\psi_x^{AB}\;,
\ee
where the classical flag system $X$ is held by Alice or Bob. With these notations $E:\bigcup_{A,B}\pure(AB)\to\mbb{R}$ is called a pure state entanglement monotone if 
\be
E\left(\psi^{AB}\right)\geq \sum_xp_xE\left(\psi^{AB}_x\right)\;,
\ee
whenever $\psi^{AB}$ can be converted to $\rho^{XAB}$ by LOCC. We can therefore extend the definition of $E$ to classical quantum states of the form~\eqref{purecq} via
\be
E\left(\rho^{XAB}\right)\eqdef\sum_{x}p_xE\left(\psi^{AB}_x\right)\;.
\ee
In order to extend $E$ to arbitrary bipartite states in $\md(AB)$ we define $\mr_1(XAB)$ the set of all cq-states of the form~\eqref{purecq}. Then, the optimal extensions of $E$ to all bipartite states are defined by
\begin{align}
&\overline{E}(\rho^{AB})\eqdef\inf\Big\{E(\sigma^{YA'B'})\;:\;\rho^{AB}=\mE(\sigma^{YA'B'})\;,\;\mE\in\locc(YA'B'\to AB)\;,\;\sigma\in\mf_1(YA'B')\Big\}\label{ee1}\\
&\underline{E}(\rho^{AB})\eqdef\sup\Big\{E(\sigma^{YA'B'})\;:\;\sigma^{YA'B'}=\mE(\rho^{AB})\;,\;\mE\in\locc(AB\to YA'B')\;,\;\sigma\in\mf_1(YA'B')\Big\}\label{ee2}
\end{align}
From Theorem~\ref{properties} it follows that both $\overline{E}$ and $\underline{E}$ reduces to $E$ on cq-states of the form~\ref{purecq}, they are non-increasing under LOCC, and any for any other entanglement measure $M$ that reduces to $E$ on cq-states of the form~\eqref{purecq} satisfies
\be\label{bbb2}
\underline{E}\left(\rho^{AB}\right)\leq M\left(\rho^{AB}\right)\leq \overline{E}\left(\rho^{AB}\right)\;.
\ee
The significance of the bounds above as compared with~\eqref{bbb2} is that $\overline{E}$ and $\underline{E}$ as defined in~\eqref{ee1} and~\eqref{ee2}
provide a tighter bound than the ones defined in~\eqref{eee1} and~\eqref{eee2}.

\end{document}